\newcommand{\blind}{1}
\DeclareMathOperator*{\argmax}{arg\,max}
\DeclareMathOperator{\Tr}{Tr}
\newtheorem{theorem}{Theorem}[section]
\newtheorem{lemma}[theorem]{Lemma}
\newtheorem{proposition}[theorem]{Proposition}
\newtheorem{corollary}[theorem]{Corollary}
\newtheorem{definition}[theorem]{Definition}
\begin{document}


\def\spacingset#1{\renewcommand{\baselinestretch}%
{#1}\small\normalsize} \spacingset{1}


\if1\blind
{
  \title{\sf Sparse Bayesian Multidimensional Item Response Theory}
  \author{Jiguang Li\footnote{Jiguang Li is a $3^{rd}$-year doctoral student in Econometrics and Statistics at the Booth School of Business of the University of Chicago},\,\, Robert Gibbons\footnote{Robert Gibbons is the Blum-Reise Professor of Statistics at the University of Chicago}\,\, and  Veronika Ro\v{c}kov\'{a}\footnote{Veronika Ro\v{c}kov\'{a} is the Bruce Lindsay Professor of Econometrics and Statistics in the Wallman Society of Fellows}}
  \maketitle
} \fi

\if0\blind
{
   \title{\sf Sparse Bayesian Multidimensional Item Response Theory}
  \maketitle
  \medskip
} \fi

\begin{abstract}
Multivariate Item Response Theory (MIRT) is sought-after widely by applied researchers looking for interpretable (sparse) explanations underlying response patterns in questionnaire data.  There is, however, an unmet demand for such sparsity discovery tools in practice. Our paper develops a Bayesian platform for binary and ordinal item MIRT which requires minimal tuning and scales well on large datasets due to its parallelizable features. Bayesian methodology for MIRT models has traditionally relied on MCMC simulation, which cannot only be slow in practice, but also often renders exact sparsity recovery impossible without additional thresholding. In this work, we develop a scalable Bayesian EM algorithm to estimate sparse factor loadings from mixed continuous, binary, and ordinal item responses. We address the seemingly insurmountable problem of unknown latent factor dimensionality with tools from Bayesian nonparametrics which enable estimating the number of factors. Rotations to sparsity through parameter expansion further enhance  convergence and interpretability without identifiability constraints. In our simulation study, we show that our method reliably recovers both the factor dimensionality as well as the latent structure on high-dimensional synthetic data even for small samples. We demonstrate the practical usefulness of our approach on three datasets: an educational assessment dataset, a quality-of-life measurement dataset, and a bio-behavioral dataset. All demonstrations show that our tool yields interpretable  estimates, facilitating interesting discoveries that might otherwise go unnoticed under a pure confirmatory factor analysis setting. 
\end{abstract}

\noindent%
{\it Keywords:}  EM algorithm; Unified skew-normal distribution; Spike-and-Slab LASSO; Categorical factor analysis; Bayesian nonparametrics
\vfill

\newpage
\spacingset{1.8} 
\section{Introduction} \label{sec:intro}
\vspace{-0.2cm}

Multidimensional Item Response Theory (MIRT) is indispensable to educational and psychological measurement research. It plays a dominant role in evaluating latent cognitive abilities of respondents, classifying subpopulations, and uncovering item characteristics \cite{bock2021item}. MIRT models have been utilized to assess individual differences in personality traits and attitudes \cite{Fraley2000AnIR}, improve health outcomes in clinical patient-reported outcome research \cite{FORERO2013790}, make scientific discoveries from intricate bio-behavioral data \cite{Stan2020}, and enhance testing efficiency and accuracy in computerized adaptive testing in behavioral health \cite{doi:10.1146/annurev-clinpsy-021815-093634}. 

In the standard two-parameter MIRT framework \cite{bock2021item}, we observe a binary matrix $Y \in \mathbbm{R}^{N \times J}$, where each element $Y_{ij}$ represents whether subject $i$ answered item $j$ correctly.  Assuming the number of factors $K$ is known, we denote the factor loading matrix as $B \in \mathbbm R^{J \times K}$, the intercept as $D \in \mathbbm R^{J}$, and subject $i$'s latent trait as $\theta_i \in \mathbbm R^{K}$. The two-parameter MIRT model aims to capture the following data generating process:
\setlength{\belowdisplayskip}{6pt} \setlength{\belowdisplayshortskip}{6pt}
\setlength{\abovedisplayskip}{6pt} \setlength{\abovedisplayshortskip}{6pt}
 $$Y_{ij} | B_j, \theta_i, d_j \sim \text{Bernoulli(}g(B_j'\theta_i + d_j)) ,$$
for subjects $i = 1, \cdots, N$ and items $j= 1, \cdots, J$, where $B_j$, as a column vector, is the $j$-th row of $B$, $d_j$ is the $j$-th element of the intercept $D$, and function $g(.)$ is the link function. 

The most common approach to estimating MIRT models is based on the principle of marginal maximum likelihood (MML) estimation \cite{DarrellBock1972}. Following the development of the EM algorithm \cite{em}, Bock and Aitkin \cite{em_mirt} derive a computationally efficient procedure to estimate the item parameters using deterministic Gaussian quadrature and extend the EM principle to the exploratory MIRT setting. Another line of MIRT estimation roots in the idea of Monte Carlo Expectation Maximization algorithm (MCEM), in which the E-step is often replaced with Monte-Carlo methods \cite{MCEM}. The popular Metropolis-Hastings Robbins-Monro (MHRM) algorithm  performs stochastic imputation with the Metropolis-Hastings sampler in the E-step and Robbins-Monro stochastic approximation in the M-step \cite{mhrm_exploratory}. 

The Joint Maximum Likelihood (JML) estimation method has gained popularity for estimating latent variable models in large-scale discrete datasets. For example, a constrained JML estimator demonstrates remarkable scalability and effectiveness in estimating high-dimensional exploratory MIRT models \cite{jml1}. In a confirmatory context, \cite{jml2} proposed a variation of the JML estimation method within a generalized latent factor model framework, establishing the structural identifiability of latent factors under a double-asymptotic setting. Additionally, the JML estimation approach has been successfully extended to the discrete latent variable and the Cognitive Diagnostic Models (CDMs) framework \cite{jml3}.

MCMC sampling coupled with data augmentation strategies of Albert and Chibs \cite{albert_chib}, or P\'{o}lya-Gamma variables \cite{pg, pg_mirt} has long been the only resort for Bayesian statisticians to estimate MIRT models. However, Durante \cite{Durante_2019} recently showed that for probit regression with a standard Gaussian prior on the coefficient parameter, the posterior follows a unified-skew normal distribution \cite{unified_skew_normal}, which can be easily sampled from with the minimax tilting methods \cite{Botev_2016}. This discovery has made our derivation of the latent factor posterior distributions tractable under the probit factor model and motivates us to advocate for a Bayesian EM approach.  Unlike the E-step in MCEM or MHRM algorithms which rely on MCMC sampling, our E-step sampling can be efficiently parallelized for each observation, and the M-step can also be decomposed into solving $J$ independent penalized probit regressions, which can be solved by a standard \texttt{glmnet} estimation algorithm \cite{glmnet}.

By explicitly characterizing latent factor posterior distributions, this paper develops a scalable Bayesian framework for estimating sparse exploratory MIRT models, with the aim of recovering an interpretable representation of the factor loading matrix $B$ without making assumptions on its zero allocations. The process of finding a sparse solution can be facilitated by the idea of parameter expansion \cite{px-em}, which enables rapid exploration of the parameter space of equivalent observed data likelihoods and is more robust against poor initialization. Our proposed framework and estimation procedures can also be further generalized beyond binary datasets to accommodate mixed responses. Parallel research in the CDM framework has also received significant attention recently: In CDMs, the latent structure, represented by the Q-matrix, is typically provided by experts. However, the Q-matrix can often be misspecified in practice. Consequently, sparsity recovery through the estimation of the Q-matrix has been explored in many recent works \cite{slcm, qmat_dina, de_la_Torre2016}.

Besides its computational efficiency and versatility to accommodate mixed data type, one primary advantage of our Bayesian approach is its flexibility in prior specification. In many exploratory MIRT settings, the number of latent factors and the exact sparsity level of the loading matrix are typically unknown and are often misspecified. To incorporate our prior beliefs amid such uncertainty, we impose a hierarchical Indian Buffet Process (IBP) prior \cite{ibp} coupled with the Spike-and-Slab LASSO (SSL) prior \cite{SSL} on the factor loading matrix, following the idea in \cite{bfa}. This allows the algorithm to automatically learn the latent dimensionality and produce sparse loading estimates. Additionally, we establish general identification conditions and provide theoretical guarantees for recovering the true number of latent factors and ensuring the consistency of the loading matrix. When the number of observations is relatively small, which is often the case in social science and behavior health research, our Bayesian approach can yield more reliable estimates by leveraging such prior information, resulting in a more sparse solution. 

The rest of this paper is organized as follows. Section \ref{sec:motivation} offers a concrete motivating example exhibiting the potential of our proposed approach to estimating sparse Bayesian MIRT models. Section \ref{sec:model} provides a detailed exposition of the data generation process, identification conditions, and the theoretical justifications behind our proposed hierarchical model. We then show the derivations of our parameter-expanded EM algorithm in Section \ref{sec:em}. Section \ref{sec:grm_ordinal} further demonstrates the generality of our proposed methodology by accommodating mixed data types. In Section \ref{sec:empirical}, the effectiveness of our proposed method is substantiated through its application to diverse scenarios: a challenging synthetic dataset, a large sample educational assessment dataset, a quality-of-life measurement dataset, and a challenging ordinal bio-behavioral dataset. We conclude the paper in Section \ref{sec:discussion} and provide additional details in the Appendix. 

\vspace{-0.7cm}
\section{Exact Sparsity Recovery} \label{sec:motivation}
\vspace{-0.3cm}

Prior to delving into technical details, we provide a simple motivating example to illustrate the usefulness of our Bayesian approach when the factor loading matrix is sparse and its dimensionality uncertain. To this end, we generated four-factor, two-parameter binary item response data, and focused on a challenging high-dimensional $N < J$ case, in which $50\%$ of items load on an arbitrary pair of latent factors.  As shown in the leftmost subplot in Figure \ref{fig:plot1}, the true loading matrix is relatively sparse - the light pink areas represent zeros, while the red areas represent ones. Although simple looking, this loading structure can be very hard to recover in practice, as there exists a substantial overlap among latent factors.  

For benchmarking, we consider the other three state-of-the art exploratory MIRT methods, where no assumptions are made for the zero allocations of the loading matrix : (1) the MHRM algorithm \cite{mhrm_exploratory}, (2) a customized Gibbs sampler with an adaptive spike-and-slab normal prior \cite{ir_ss} (derivations in Appendix A), and (3) the exploratory JML estimation method \cite{jml1}. The MHRM and the JML method implementations are available in the \texttt{mirt} and the \texttt{mirtjml} R packages \cite{JSSv048i06}.

\begin{figure}[t]
    \centering
    \includegraphics[width=0.7\textwidth, height=0.5\textwidth]{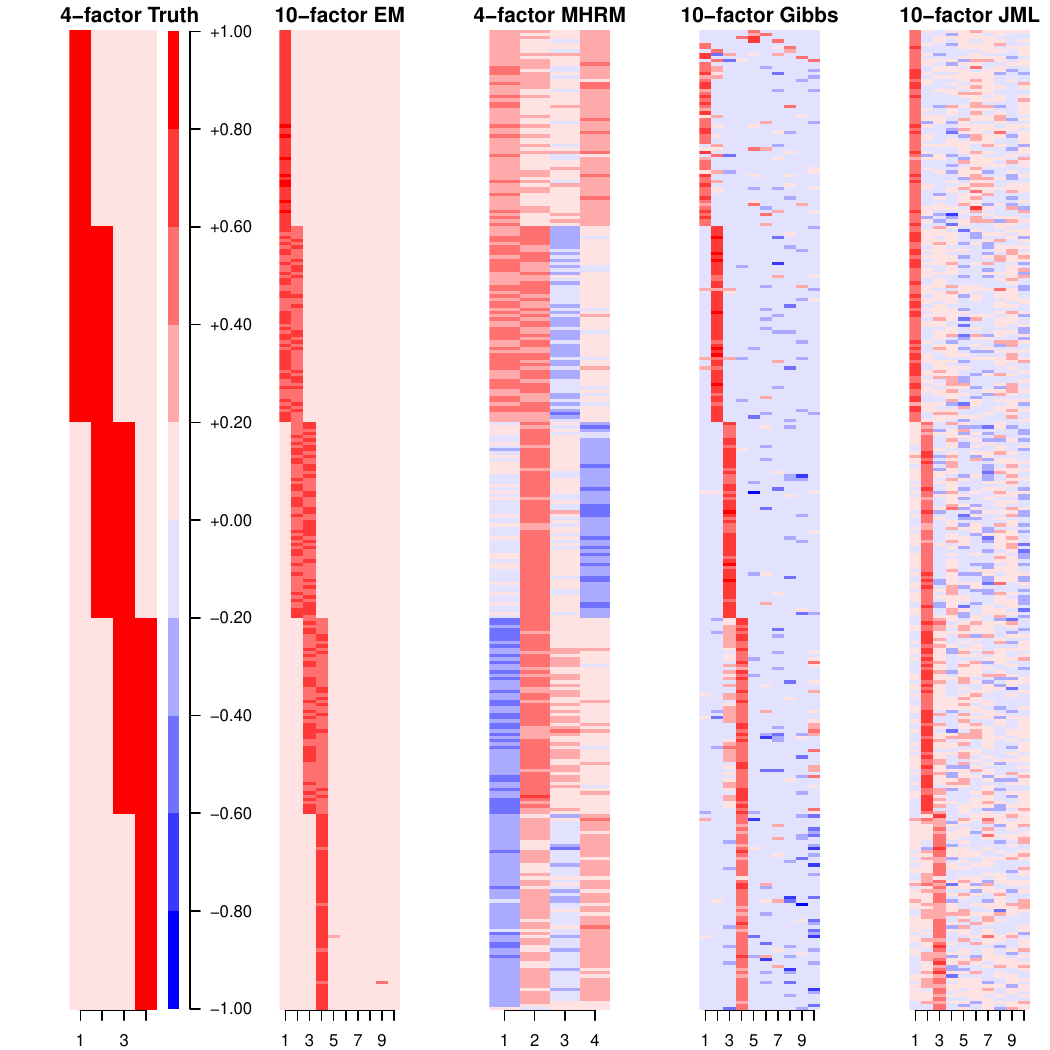}
    \caption{Factor Loading Matrix Estimation Comparison}
    \label{fig:plot1}
\end{figure}

To demonstrate that our approach is capable of learning the dimensionality and of producing a sparse solution, we purposefully mis-specified (overshot) the number of factors $K^*=10$ for our EM algorithm. As shown in Figure \ref{fig:plot1}, our PXL-EM algorithm recovered the true overlapping latent structure, and correctly identified the true number of factors as $4$, despite us purposefully overshooting the number of factors. For the MHRM algorithm, even given the true number of factors as $4$, its estimated loading matrix remained structureless.

To fit our customized sparse Gibbs sampler, we lower triangularized the factor loading matrix for identification purpose \cite{lower_trig}. While the Gibbs sampler identified four prominent latent factors with stripe-like structure, it did not recover the substantial overlapping components among the factors, and was incapable of outputing exact zero values. For the JML estimation method, although it converges significantly faster compared to the other approaches, it only identified three factors and failed to recover the overlapping components.

This simple motivating example effectively highlights the reasoning behind our endorsement of the Bayesian EM approach to MIRT over alternatives.

\vspace{-0.7cm}
\section{Nonparametric Bayesian MIRT} \label{sec:model}
\vspace{-0.3cm}

We draw inspiration from Durante's recent work \cite{Durante_2019}, which characterizes the posterior distribution of coefficient parameters in a standard probit regression. This insight guides us towards a standard normal cumulative distribution link function $\Phi(.)$ to model item response data. For simplicity, we start by considering the binary case: 
 \begin{equation}\label{eqn:2.3}
     Y_{ij} | \theta_i, B_j, d_j \sim \text{Bernoulli}(\Phi(B_j' \theta_i + d_j)). 
\end{equation}

Building upon the Bayesian factor analysis framework introduced by Ro\v{c}kov\'{a} and George \cite{bfa}, we deploy the spike-and-slab LASSO (SSL) prior \cite{SSL} on the loading matrix to prompt sparsity. The sparsity assumption is essential, as it alleviates model identification issues by anchoring estimation on sparse solutions, leading to more interpretable outcomes. Additionally, the SSL prior reflects many real-world applications where factor loadings are inherently sparse, and therefore has the potential to improve out-of-sample prediction. Our SSL approach employs a two-point mixture of Laplace distributions which can adaptively produce exact zeros, and does not necessitate an additional thresholding step. For each element $B_{jk}$ in the loading matrix, we couple the SSL prior with a binary variable selection variable $\gamma_{jk}\in\{0,1\}$. Let $\lambda_{0k} >> \lambda_1 > 0$, and let $\psi(\beta | \lambda) = \frac{\lambda}{2} \exp\{- \lambda |\beta|\}$ denote the Laplace prior with mean $0$ and variance $\frac{2}{\lambda^2}$. We consider the following hierarchical prior:
    \begin{equation*}
     \pi(B_{jk} | \gamma_{jk}, \lambda_{0k}, \lambda_1) \sim (1-\gamma_{jk})\psi(B_{jk} | \lambda_{0k}) + \gamma_{jk}\psi(B_{jk} | \lambda_{1}). 
    \end{equation*}

 To avoid committing to a fixed number of factors, we deploy the infinite-dimensional Indian Buffet Process (IBP) prior with intensity parameter $\alpha > 0$ on the sparsity pattern of the loading matrix $B$. A larger $\alpha$ value would favor a less sparse loading matrix estimation. This prior serves as variable selection tool and accounts for uncertainty in the latent space dimensionality.  As observed in \cite{bfa}, posterior simulation can be greatly facilitated if we consider the stick-breaking representation of IBP \cite{stick_breaking}. Consequently, researchers simply need to choose a large enough truncation-level $K^*$ as an upper bound for the number of latent factors. In the case where estimated loading matrix contains no sparse columns, one should consider increasing $K^*$ to avoid underestimation of the true number of factors. For each item $j= 1, \cdots, J$, and each latent dimension $k= 1, \cdots, K^{*}$, the stick-breaking representation of our IBP prior can be expressed as follows:
\begin{equation*}
v_l \overset{\text{i.i.d}}{\sim} \text{Beta}(\alpha, 1), \text{ } c_k = \prod_{l=1}^{k} v_l, \text{ } \gamma_{jk} | c_k \sim \text{Bernoulli}(c_k). 
\end{equation*}

The data generation process is finalized upon specifying priors for both the latent factor $\theta_i$ and the intercept $d_j$. In the context of exploratory factor analysis, it is customary to assume a prior of zero-mean multivariate Gaussian with an identity covariance matrix for the latent factors, since our goal is to unveil the latent structure of factor loadings. We further consider a Laplacian prior with small parameter $\lambda_1$ for the intercept parameter $d_j$. The combination of SSL and IBP priors on the loading matrix enhances the flexibility and interpretability of our model, which also has exhibited notable success in the nonparametric linear factor model settings \cite{bfa}.

\vspace{-0.5cm}
\subsection{Probit Identifiability}
\vspace{-0.1cm}

One of the key advantages of adopting a probit link $\Phi(.)$ is its ability to simplify the establishment of identifiability results in categorical factor analysis. In contrast, the convolution of Gaussian and logistic random variables induced by a logistic link often renders many classic proof techniques within linear factor analysis intractable. As observed in \cite{pf-ident}, a necessary and sufficient condition to check the identifiability of a probit factor model is to examine the tetrachoric correlations as follows:

\begin{proposition}[\cite{pf-ident}] \label{prop:id}
Let $A, B \in \mathbbm R^{J \times K}$ be the factor loading matrices, $D_1, D_2 \in \mathbbm R^J$ be the intercepts, and the latent factor $\theta$ be the zero-mean multivariate Gaussian with covariance matrix $\Sigma$. Then two sets of parameters $(A, D_1)$ and $(B, D_2)$ define the same probit factor model if and only if their thresholds and tetrachoric correlations are equal:
\begin{equation} \label{eq:id1}
    \frac{D_{1j}}{\sqrt{A_j' \Sigma A_j+1}}= \frac{D_{2j}}{\sqrt{B_j' \Sigma B_j+1}} \quad  \forall j,
\end{equation}
\begin{equation} \label{eq:id2}
    \frac{A_{j_1}' \Sigma A_{j_2}}{\sqrt{A_{j_1}' \Sigma A_{j_1}+1} \sqrt{A_{j_2}' \Sigma A_{j_2}+1}}= \frac{B_{j_1}' \Sigma B_{j_2}}{\sqrt{B_{j_1}' \Sigma B_{j_1}+1} \sqrt{B_{j_2}' \Sigma B_{j_2}+1}} \quad  \forall j_1 \neq j_2 .
\end{equation}
\end{proposition}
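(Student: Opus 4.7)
The plan is to reduce the proposition to a statement about orthant probabilities of a multivariate normal. Introduce the standard latent-variable representation of the probit model: take $\epsilon_{ij}\sim N(0,1)$ mutually independent and independent of $\theta_i\sim N(0,\Sigma)$, and set $Z_{ij}=B_j'\theta_i+d_j+\epsilon_{ij}$ so that $Y_{ij}=\mathbbm{1}(Z_{ij}>0)$. Then the full vector $Z_i=(Z_{i1},\dots,Z_{iJ})^T$ is multivariate normal with mean $D$ and covariance $B\Sigma B'+I_J$; in particular, $Z_{ij}\sim N(d_j,B_j'\Sigma B_j+1)$ and $\mathrm{Cov}(Z_{ij_1},Z_{ij_2})=B_{j_1}'\Sigma B_{j_2}$. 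The marginal distribution of $Y_i$ is therefore the collection of orthant probabilities of this $J$-variate normal, so that the probit factor model is determined entirely by the law of $Z_i$.

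Next, observe that orthant probabilities of a multivariate normal are invariant under rescaling coordinates by positive constants: they depend only on the standardized mean (i.e., the threshold divided by the standard deviation) and on the correlation matrix. Under the parameterization $(A,D_1)$ the standardized threshold for item $j$ is $D_{1j}/\sqrt{A_j'\Sigma A_j+1}$ and the $(j_1,j_2)$ correlation is $A_{j_1}'\Sigma A_{j_2}/\sqrt{(A_{j_1}'\Sigma A_{j_1}+1)(A_{j_2}'\Sigma A_{j_2}+1)}$, and analogously for $(B,D_2)$. Thus equations \eqref{eq:id1} and \eqref{eq:id2} are exactly the statement that the two parameterizations induce the same standardized latent normal law, i.e., the same orthant probabilities, i.e., the same joint distribution of $Y$.

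The ``only if'' direction is then immediate: equality of the two probit models forces equality of every bivariate submodel, which by matching $P(Y_{ij}=1)=\Phi(d_j/\sqrt{B_j'\Sigma B_j+1})$ yields \eqref{eq:id1}, and by matching $P(Y_{ij_1}=1,Y_{ij_2}=1)$ to the corresponding bivariate normal orthant probability (a strictly increasing function of the correlation once the thresholds are fixed) yields \eqref{eq:id2}. For the ``if'' direction, matching thresholds and tetrachoric correlations pins down the mean vector and correlation matrix of the standardized $Z_i$ under both parameterizations, so the two full MVN laws of $Z_i$ coincide up to coordinatewise positive rescaling, hence the two families of orthant probabilities, hence the two induced distributions on $Y_i$, agree.

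The only potentially subtle step is the inference from matching bivariate orthant probabilities to matching the entire $J$-variate distribution. The cleanest way around this is to argue at the level of $Z_i$ rather than $Y_i$: a Gaussian law is uniquely determined by its first two moments, so once all pairwise correlations and all thresholds line up, the whole standardized distribution of $Z_i$ is forced, and all higher-order orthant probabilities follow for free. This sidesteps any combinatorial manipulation of response patterns and keeps the proof to a short calculation plus an appeal to the two-moment characterization of the multivariate normal.
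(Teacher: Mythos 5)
Your argument is correct. Note that the paper does not prove Proposition \ref{prop:id} at all: it is imported verbatim from the cited reference, and the paper's own Appendix~B.1 only uses it as a black box (rewriting the matched thresholds and tetrachoric correlations as $\tilde A\tilde A' + \Sigma_A = \tilde B\tilde B' + \Sigma_B$ to prove Theorem \ref{thm:identification}). So there is nothing in the paper to compare against; judged on its own, your proof is the standard and complete route. The latent representation $Z_{ij}=B_j'\theta_i+d_j+\epsilon_{ij}$, $Y_{ij}=\mathbbm{1}(Z_{ij}>0)$, gives $Z_i\sim N(D, B\Sigma B'+I_J)$, and since the law of $Y_i$ consists exactly of the orthant probabilities of $Z_i$, which are invariant to coordinatewise positive rescaling, the model is a function of the standardized thresholds and the correlation matrix only; the ``if'' direction then follows from the two-moment characterization of the Gaussian, and the ``only if'' direction from matching univariate margins (strict monotonicity of $\Phi$) and bivariate margins (strict monotonicity of $\Phi_2(h_1,h_2;\rho)$ in $\rho$, e.g.\ via Plackett's identity $\partial\Phi_2/\partial\rho=\phi_2>0$). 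That last monotonicity fact is the only nontrivial ingredient you lean on, and you invoke it correctly; your remark that higher-order orthant probabilities come ``for free'' from the Gaussian law rather than from combinatorics of response patterns is exactly the right way to close the gap.
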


By Proposition \ref{prop:id}, establishing identifiability of a given probit factor model is equivalent to showing that equations \ref{eq:id1} and \ref{eq:id2} induce a unique set of model parameters $(B, D)$. Recall in a linear setting, one needs to show $AA'+\Sigma_1 = BB' + \Sigma_2$ yields the unique set of model parameters $(B, \Sigma)$, where $\Sigma$ is a diagonal matrix of variances. Though linear identifiability results are not directly applicable under the probit link, the presence of a comparable equation structure suggests the possibility of adapting classical linear identifiability principles to the probit factor model setting. In particular, \cite{pf-ident} leverages Proposition \ref{prop:id} to establish identifiability conditions for confirmatory probit bifactor and two-tier models. 

In the exploratory linear factor analysis setting, the most common strategy to resolve the rotational invariance identifiability issue is to constrain the loading matrix to possess of a positive lower triangularized (PLT) form \cite{d66cb2ac-a36f-336e-bfae-7af42fd7021f, 10.1093/oso/9780198526155.003.0053}. More recently, \cite{Fr_hwirth_Schnatter_2023} establishes linear factor model identifiability results with generalized lower triangular (GLT) structure.

\begin{definition}
  For factor loading matrix $B \in \mathbbm R^{J \times K}$ with full column rank $K$ and columns $k \in \{1, \cdots K\}$, we define the \textbf{pivot row} $l_k$ as the row index of the first non-zero factor loading in column $k$, such that $B_{i, k} = 0$ for all $i < l_k$.  We say $B$ has an \textbf{ordered GLT} structure if $l_1 < \cdots < l_K$, and $B_{l_k, k} > 0$ for all $k$.
\end{definition}

First observe that the popular positive lower triangular loading constraint is an instance of the GLT structure. As suggested in \cite{Anderson1956StatisticalII, Fr_hwirth_Schnatter_2023}, while PLT and GLT structures are effective in resolving rotational indeterminacy, they alone are insufficient for identifying linear factor models unless the variance matrix $\Sigma$ is also precisely determined. For our proposed probit MIRT model, we first provide a sufficient identification condition for the intercept $D$, analogous to Theorem 5.1 of Anderson and Rubin \cite{Anderson1956StatisticalII}, and then establish similar sufficient identifiability condition for the loading matrix $B$ under the GLT structures, provided that the intercept $D$ is uniquely identifiable. The proof is available in Appendix B.1.

\begin{theorem} \label{thm:identification}
 Consider a probit factor model with a full column rank factor loading matrix $B =[B_1, \cdots, B_J]' \in \mathbbm R^{J \times K}$ and intercept $D \in \mathbbm R^J$. Define matrix $\tilde{B} := [\tilde{B_1}, \cdots, \tilde{B_J}]'$, where $\tilde{B_{j}} = \frac{B_j}{\sqrt{\|B_j\|^2 + 1}}$. Then:
 \vspace{-0.1cm}
 \begin{enumerate}[label=\Alph*.]
    \item A sufficient condition for identification of $D$ is that if any row of $\tilde{B}$ is deleted, there remain two disjoint submatrices of rank $K$.
    \vspace{-0.35cm}
    \item An ordered GLT structure is uniquely identified, provided that the intercept $D$ is identified. 
 \end{enumerate}
\end{theorem}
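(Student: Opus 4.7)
Via Proposition \ref{prop:id} with $\Sigma = I$, observational equivalence of $(A, D_1)$ and $(B, D_2)$ is exactly the equality of tetrachoric correlations \ref{eq:id2}, i.e., the off-diagonal entries of $\tilde A \tilde A'$ and $\tilde B \tilde B'$ coincide, together with the threshold equation \ref{eq:id1}. Showing $D_1 = D_2$ thus reduces to showing $\|A_j\| = \|B_j\|$ (equivalently $\|\tilde A_j\|^2 = \|\tilde B_j\|^2$) for every $j$, i.e., the diagonal of the rank-$K$ PSD matrix $\tilde B \tilde B'$ is determined by its off-diagonal entries. This is exactly the Anderson--Rubin communality-identification problem, and the hypothesis---that after deleting any row, the remaining rows of $\tilde B$ split into two disjoint rank-$K$ submatrices---is the classical Anderson--Rubin condition. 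My plan is to import that argument: for any fixed row $j$, produce disjoint $S_1, S_2$ with $\tilde B_{S_1}, \tilde B_{S_2}$ of rank $K$, extract orthonormal bases for the column spaces of $\tilde B_{S_\ell}$ from the thin SVD of the known cross-block $\tilde B_{S_1}\tilde B_{S_2}'$, and use the within-block off-diagonal equalities to deduce that the within-block Gram matrices $\tilde A_{S_\ell}\tilde A_{S_\ell}'$ and $\tilde B_{S_\ell}\tilde B_{S_\ell}'$ agree entirely. Varying $j$ then covers every index, giving $\|A_i\| = \|B_i\|$ for all $i$, and substitution into \ref{eq:id1} returns $D_1 = D_2$.

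\textbf{Plan for Part B.} Once $D$ is identified, the constraint $\|A_j\| = \|B_j\|$ together with \ref{eq:id2} upgrades off-diagonal agreement to the full identity $\tilde A \tilde A' = \tilde B \tilde B'$, and full column rank then supplies an orthogonal $K \times K$ matrix $Q$ with $\tilde A = \tilde B Q$; rescaling each row by the common factor $\sqrt{\|B_j\|^2+1}$ transfers this identity to $A = BQ$. It then suffices to prove that the only orthogonal $Q$ consistent with both $A$ and $B$ being ordered GLT is $Q = I$, which I would argue by induction on columns. For the base case, ordered GLT of $B$ forces $B_{i,\cdot} = 0$ for $i < l_1^B$ (because $l_k^B \geq l_1^B$ for every $k$), so the first column of $A = BQ$ vanishes above $l_1^B$ and hence $l_1^A \geq l_1^B$; the symmetric argument applied to $B = AQ'$ gives the reverse inequality, forcing $l_1^A = l_1^B =: l_1$. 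Reading off row $l_1$ of $A = BQ$ and using that $B_{l_1,\cdot}$ is supported only in its first coordinate, the first row of $Q$ must have the form $(Q_{11}, 0, \ldots, 0)$ with $Q_{11} > 0$; orthogonality then pins $Q_{11} = 1$ and the first column of $Q$ to $e_1$. The induction then proceeds on the trailing $(K-1) \times (K-1)$ orthogonal block.

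\textbf{Anticipated obstacle.} Part A is the technically substantive step: the delicate part is promoting off-diagonal agreement of $\tilde A \tilde A'$ and $\tilde B \tilde B'$ to an equality of the within-block Gram matrices, thereby ruling out rescalings of $\tilde A$ that preserve cross-correlations but inflate row norms. This is exactly where the Anderson--Rubin hypothesis is indispensable, and wiring the cross-block SVD together with the within-block identities is the place where the most care is needed. Part B is largely combinatorial, but the strict ordering $l_1 < \cdots < l_K$---rather than merely GLT---is what forces the $k$-th row of $Q$ to localize within its first $k$ coordinates and makes the inductive peeling go through.
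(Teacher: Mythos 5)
Your skeleton is the paper's: both parts pass through Proposition \ref{prop:id}, so that observational equivalence becomes agreement of the thresholds \ref{eq:id1} and of the off-diagonal entries of $\tilde A\tilde A'$ and $\tilde B\tilde B'$, and identifying $D$ reduces to recovering the diagonal of $\tilde B\tilde B'$ from its off-diagonals --- exactly the Anderson--Rubin communality problem, which is also all the paper does for Part A before deferring to the argument of Anderson and Rubin's Theorem~5.1. Where you genuinely diverge is the last step of Part B: the paper goes from $\tilde A\tilde A'=\tilde B\tilde B'$ to $\tilde A=\tilde B$ by citing Theorem~1 of Fr\"uhwirth-Schnatter et al.\ on uniqueness of GLT structures, whereas you extract an orthogonal $Q$ with $A=BQ$ and prove $Q=I$ by pivot peeling. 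Your induction is sound (all rows above the common first pivot of $A$ and $B$ vanish, so the pivots agree, the first row of $Q$ is a positive multiple of $e_1'$, orthonormality forces it to equal $e_1'$ and the first column to equal $e_1$, and one recurses on the trailing $(K-1)\times(K-1)$ block), so your Part B is self-contained where the paper's is by citation; both treatments share the same tacit use of $D_j\neq 0$ when deducing $\|A_j\|=\|B_j\|$ from \ref{eq:id1} with a common intercept, so I do not count that against you.

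For Part A, however, the specific mechanism you sketch is not the Anderson--Rubin argument and has a loose end. The two disjoint rank-$K$ blocks $S_1,S_2$ obtained after deleting row $j$ need not exhaust the remaining rows, so determining within-block diagonals and then ``varying $j$'' does not obviously reach every index; moreover, it is not clear that a within-block Gram matrix is forced by its off-diagonal entries together with knowledge of its column space. The classical argument the paper follows sidesteps both issues by recovering the \emph{deleted} row's own communality purely from off-diagonal data: since $j\notin S_1\cup S_2$ and $S_1\cap S_2=\emptyset$, the quantities $\tilde B_{S_1}\tilde B_j$, $\tilde B_{S_2}\tilde B_j$ and $\tilde B_{S_2}\tilde B_{S_1}'$ are all off-diagonal entries of $\tilde B\tilde B'$, and the rank conditions give
\begin{equation*}
\|\tilde B_j\|^2=(\tilde B_{S_1}\tilde B_j)'\,(\tilde B_{S_2}\tilde B_{S_1}')^{+}\,(\tilde B_{S_2}\tilde B_j),
\end{equation*}
with the same identity holding for $\tilde A$ because $\tilde A_{S_2}\tilde A_{S_1}'=\tilde B_{S_2}\tilde B_{S_1}'$ has rank $K$ and $\tilde A$ has only $K$ columns. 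This determines every diagonal entry (apply it to each $j$ in turn), hence $\|A_j\|=\|B_j\|$ and, via \ref{eq:id1}, $D_1=D_2$. Replace your cross-block-SVD/within-block step with this identity and Part A coincides with the paper's proof; as written, that step is the one genuine gap in your plan.
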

\vspace{-0.2cm}
As noted in \cite{Fr_hwirth_Schnatter_2023}, the identifiability conditions for linear factor models usually involve two stages \cite{Anderson1956StatisticalII} : the identification of factor loading matrix (hindered by rotational indeterminacy) and the identification of variance, though the majority of statistics literature tends to focus on the former case. Theorem \ref{thm:identification} hints the proofs of probit model identifiability results are likely to follow a similar process as in a linear factor setting, where variance identification is replaced by the intercept identification. 

Once we know the factor loading matrix with the GLT structure is uniquely identified, a natural question is whether we can achieve posterior consistency of $B$ under such a  structural assumption. Notably, recent work by \cite{Ma2020OnPC} advocates for a ``$\sqrt{n}$-orthonormal factor model" approach under the SSL-IBP prior linear factor model setup, and establishes consistency of $B$ up to sign permutation, condition on the feature allocation matrix $\Gamma$ possessing the GLT structure. Their approach can  also help address the "magnitude inflation problem" in high-dimensional setting, effectively mitigating sensitivity to the prior. 

\vspace{-0.5cm}
\subsection{Posterior Asymptotics}
We provide robust theoretical justification for our proposed nonparametric MIRT model, demonstrating its capability to recover the true number of factors and achieving the desired posterior contraction rate for the factor loading matrix. The effective dimension of the loading matrix $K(B)$ under the SSL-IBP($\{\lambda_{0k} , \lambda_1, C\}$) prior is defined as the largest column index $K$ such that $|B_{jk}| < \frac{1}{\lambda_{0k} - \lambda_1} \log [\frac{\lambda_{0k}(1-c_k)}{\lambda_1 c_k}]$ for any $j \in \{1, \cdots, J\}$ and $k >K$ \cite{bfa}. We denote $B_{0n} \in \mathbbm{R}^{J_n \times \infty}$ as the true loading matrix with $k_{0n}$ nonzero columns, $\Theta_{0n} \in \mathbbm{R}^{n \times \infty}$ as the latent factor matrix, and $\eta_{0ij} = B_{0j}' \theta_{0i}$. For $1\leq k \leq k_{0n}$, let $S_{nk} = \sum_{j=1}^{J_n} \mathbbm{I} (B_{jk} \neq 0)$, $\bar{S}_n = \frac{1}{k_{0n}} \sum_{k=1}^{k_{0n}} S_{nk}$, and assume each column of $B_{0n}$ exhibits $S_n$-sparsity. 

To study the asymptotic properties of the posterior, we make the following assumptions: (A) $n < J_n$ and $\lim_{n \to \infty} \sqrt{\frac{S_n k_{0n} \log^2 J_n}{n}} = 0$; (B)  $\|B_{0n}\|_2 < \sqrt{\bar{S}_n}$, where $\|B_{0n}\|_2$ is the spectral norm of the submatrix formed by the first $k_{0n}$ columns $\|B_{0n, k_{0n}}\|_2$; and (C) $\gamma_n(B_{0n}, \Theta_{0n}) \precsim \log J_n$, where $\phi$ is standard normal PDF, and 
    $$\gamma_n (B_{0n}, \Theta_{0n}) := 1 + \max_{\{i \leq n, j \leq J_n\}} \frac{\phi^2(\eta_{0_{ij}})}{\Phi(\eta_{0ij}) (1-\Phi(\eta_{0ij}))}.$$
Assumptions A and B are similar to the assumptions made in \cite{bfa}, in which A restricts our analysis on the high-dimensional case, and B holds with high probability \cite{bfa}. Assumption $C$ is an additional sparsity assumption on $(B_{0n}, \Theta_{0n})$, analogous to a special case of assumption 1 appeared in \cite{glm_contraction} under the sparse generalized linear model setting. Given that $\frac{\phi^2(\eta_{0_{ij}})}{\Phi(\eta_{0ij}) (1-\Phi(\eta_{0ij}))}$ grows linearly in $\eta_{0_{ij}}$ by the Mills ratio while $\max_{i,j} \eta_{0ij}$ roughly grows in the log scale, Assumption $C$ is relatively mild. 

Theorem \ref{thm:factor-dimensionality} shows the average posterior probability of our exploratory MIRT model overestimating the true number of factors, scaled by the true sparsity level, approaches zero asymptotically. This result is analogous to Theorem 3 in \cite{bfa} within a linear setting.

\begin{theorem} \label{thm:factor-dimensionality}
Under assumptions (A)-(C), consider the sparse Bayesian MIRT model defined in equation (\ref{eqn:2.3}), where the true loading matrix $B_{0n}$ has $k_{0n}$ nonzero leftmost columns. For $1\leq k \leq k_{0n}$, let $\sum_{j=1}^{J_n} \mathbbm{I} (B_{jk} \neq 0) = S_{nk}$ and  $\lambda_{0k} \geq \frac{2J_n^2 k_{0n}^3 n}{S_{nk}}$. Then with the prior choice $B \sim \text{SSL-IBP}(\{\lambda_{0k}\}, \lambda_1, \alpha_n)$, where $\alpha = \frac{1}{J_n}$,  $\lambda_1 < e^{-2}$, and with some constant $C_0 >0$:
\vspace{-0.3cm}
\begin{equation*}
    E_{B_0} P[K(B) > C_0 k_{0n} S_n | Y^{(n)}] \xrightarrow[n \rightarrow \infty]{} 0.
\end{equation*}
\end{theorem}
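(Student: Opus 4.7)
The plan is to mirror the strategy of Theorem 3 in \cite{bfa}, adapting the linear-factor argument to the probit MIRT setting, with the key modifications being driven by Assumption (C) which replaces the Gaussian residual structure. I would begin by writing the posterior probability as a ratio
\[
P[K(B) > C_0 k_{0n} S_n \mid Y^{(n)}] \;=\; \frac{\int_{\{K(B) > C_0 k_{0n} S_n\}} f(Y^{(n)} \mid B) \, d\pi(B)}{\int f(Y^{(n)} \mid B) \, d\pi(B)},
\]
after marginalizing out $\Theta$ and $D$, and then bounding numerator and denominator separately. The denominator is controlled from below by restricting integration to a small Kullback--Leibler neighborhood of $(B_{0n}, \Theta_{0n})$; here Assumption (C) enters, because the Mills-ratio bound on $\phi^2 / \Phi(1-\Phi)$ translates a Euclidean neighborhood of $\eta_{0ij}$ into a controlled KL neighborhood of the Bernoulli--probit likelihood, yielding a probabilistic Chernoff-type lower bound of order $\exp\{-C n\}$ on the evidence, analogous to Lemma 6.26 of Ghosal--van der Vaart but with $\gamma_n \precsim \log J_n$ absorbing the curvature of the probit link.

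Next, I would decompose the numerator according to the feature allocation pattern $\Gamma$. For any $\Gamma$ with $K(\Gamma) = K > C_0 k_{0n} S_n$, I would split columns into \emph{spurious} columns (indices $k > k_{0n}$) and \emph{true} columns. The IBP stick-breaking weights $c_k = \prod_{l \le k} v_l$ with $\alpha_n = 1/J_n$ place exponentially small prior mass on allocations with many active spurious columns, since $E[c_k] = (\alpha_n/(1+\alpha_n))^k$ decays geometrically. Combined with the requirement that a column be counted in $K(B)$ only if some entry exceeds the SSL threshold $\frac{1}{\lambda_{0k} - \lambda_1} \log[\lambda_{0k}(1-c_k)/(\lambda_1 c_k)]$, the choice $\lambda_{0k} \geq 2 J_n^2 k_{0n}^3 n / S_{nk}$ forces the spike tail beyond this threshold to be negligible. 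I would then bound the likelihood on the numerator set by its supremum (via a global testing or sieve argument), which under Assumption (A) remains manageable because $n < J_n$ still allows the relevant covering entropy to scale as $S_n k_{0n} \log J_n$.

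The main obstacle, as I see it, is the likelihood-ratio step. In the linear Gaussian model of \cite{bfa}, one can exploit explicit conjugacy of $BB' + \Sigma$ and $\chi^2$ concentration to bound both sides of the ratio cleanly. In the probit model there is no such closed form, so I expect to replace those computations by a uniform quadratic expansion of the log-likelihood around $\eta_{0ij}$. Assumption (C) is exactly what is needed to make this expansion uniform: it bounds the observed Fisher information uniformly by $\log J_n$, letting me convert local $L_2$ neighborhoods in the natural parameter to KL/Hellinger neighborhoods in the observation model. Combining this uniform quadratic control with the prior-mass bounds above should yield the bound $E_{B_0} P[K(B) > C_0 k_{0n} S_n \mid Y^{(n)}] = O(J_n^{-c})$ for some $c > 0$, which in particular tends to $0$.

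Finally, I would verify that the constant $C_0$ can be chosen large but fixed by tracking how the spurious-column prior penalty multiplied by the slab tail probability dominates the likelihood gain from adding extra columns; essentially the threshold in the definition of $K(B)$ combined with the IBP geometric decay yields a net exponential penalty per extra column beyond $k_{0n} \bar S_n$, which after scaling by $S_n$ gives the required $C_0 k_{0n} S_n$ bound.
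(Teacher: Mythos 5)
Your overall skeleton matches the paper's: write the posterior probability as a ratio, lower-bound the evidence by prior mass of a Kullback--Leibler neighborhood of $(B_{0n},\Theta_{0n})$ (this is where Assumption (C) and the Mills-ratio control of $\phi^2/[\Phi(1-\Phi)]$ enter, via a Taylor expansion of the probit log-likelihood), and kill the numerator using the exponentially small prior mass that the IBP weights and the SSL threshold with $\lambda_{0k}\geq 2J_n^2k_{0n}^3 n/S_{nk}$ assign to loading matrices with more than $C_0k_{0n}S_n$ effective columns. However, your treatment of the numerator has a genuine gap. You propose to ``bound the likelihood on the numerator set by its supremum (via a global testing or sieve argument),'' but a supremum bound on the likelihood ratio over the overfitting set is not available (that supremum is typically exponentially large), and no test construction or entropy bound is needed here at all. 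The paper instead takes the expectation under $P_{B_0}$ and uses Fubini: $E_{B_0}\bigl[\int_{E}\prod_{i,j} f_\pi(y_{ij})/f_{\pi_0}(y_{ij})\,d\Pi\bigr]\le \Pi(E)$, so the numerator is controlled purely by the prior probability of $\{K(B)>C_0k_{0n}S_n\}$, which is $\precsim e^{-DK_{0n}S_n\log(J_n+1)}$ by Theorem 2 of the linear-factor paper. The sieve/entropy/testing machinery is reserved for the Hellinger contraction result (Theorem 3.5), not for this dimensionality theorem.

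A second, quantitative gap: you assert an evidence lower bound ``of order $\exp\{-Cn\}$.'' That rate is too weak to close the argument. Since Assumption (A) gives $k_{0n}S_n\log^2 J_n=o(n)$, the prior mass of the overfitting event, $e^{-Dk_{0n}S_n\log J_n}$, is enormously larger than $e^{-Cn}$, so the ratio would not tend to zero. The evidence must be bounded below at the matching scale $e^{-c\,k_{0n}S_n\log J_n}$, which requires showing $\pi(A_n)\gtrsim e^{-k_{0n}S_n\log J_n}$ for the KL neighborhood $A_n$. This prior small-ball computation is where most of the paper's work lies and is absent from your sketch: it is a joint statement about $(B,\Theta)$ under a random design, proved by conditioning on $\{K\le k_{0n}\}$ (Theorem 2 of the linear-factor paper with $\alpha=1/J_n$), using the Frobenius-norm prior concentration $P(\|B\|_F\le\sqrt{k_{0n}/n})$, invoking Assumption (B) to bound $\|B_{0n}\|_2$ so that $\|\Theta B'-\Theta_0B_0'\|_\infty$ can be controlled, and finishing with a Gaussian small-ball bound for the latent factors. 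Your proposal correctly identifies the role of Assumption (C) and the IBP/SSL prior penalties, but without replacing the supremum/testing step by the Fubini argument and without the matching-scale evidence bound, the proof as sketched does not go through.
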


The corresponding proof can be found in Appendix B.2. In the linear factor model setting \cite{bfa}, one may easily integrate Gaussian factor over the likelihood to study the posterior distribution of the covariance matrix. This approach is no longer possible under the probit link. Consequently, we introduce assumption (C) to control the Kullback-Leibler divergence and its variation between the posterior density and the true density.

The following theorem demonstrates the posterior distribution of the factor loading matrix $B$ contracts towards the true factor loading matrix $B_0$ in terms of the root-average-squared Hellinger distance. For any two loading matrices $B_1$ and $B_2$, recall the average of the squares of the Hellinger distances is defined as $H_n^2(B_1, B_2) = \frac{1}{n} \sum_{i=1}^n H^2(f_{B_{1,i}}, f_{B_2, i})$, where $H^2(f_{B_{1,i}}, f_{B_{2,i}}) = \int (\sqrt{f_{B_1, i}} - \sqrt{f_{B_2,i}})^2 d \nu$ represents the squared Hellinger distance between the densities of the binary vector $Y_i$ induced by $B_1$ and $B_2$ respectively.

\begin{theorem} \label{thm:pos-contraction}
Assuming the true loading matrix $B_{0n}$ has $k_{0n}$ columns, each of which is $S_n$-sparse. Under assumptions (A)-(C), and with the same SSL-IBP prior choice as in Theorem \ref{thm:factor-dimensionality}, these exists a constant $C_0'$ such that
 \begin{equation*}
        E_{B_0} \Pi_{n} \left\{ B: H_n (B, B_0) > C_0' \sqrt{\frac{k_{0n}S_n \log^2 J_n}{n}} | Y^{(n)}  \right\}  \xrightarrow[n \rightarrow \infty]{} 0.
\end{equation*}
\end{theorem}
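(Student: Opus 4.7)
The proof strategy follows the Ghosal--van der Vaart framework for posterior contraction under independent non-identically distributed observations, with target rate $\epsilon_n^2 = k_{0n} S_n \log^2 J_n / n$, paralleling the treatment of the Gaussian factor model in Ma--Ro\v{c}kov\'a \cite{Ma2020OnPC}. A preliminary step is to invoke Theorem \ref{thm:factor-dimensionality}, which restricts the effective posterior support to loading matrices with at most $K_n^\ast = C_0 k_{0n} S_n$ active columns. On the resulting sieve $\mathcal{F}_n = \{B : K(B) \leq K_n^\ast, \|B\|_{\max} \leq M_n\}$ for a slowly growing $M_n$, three conditions suffice: (i) the SSL-IBP prior places at least $e^{-c n \epsilon_n^2}$ mass on a Kullback--Leibler neighborhood of radius $\epsilon_n$ around $B_{0n}$; (ii) the $\epsilon_n$-covering number of $\mathcal{F}_n$ in the root-average Hellinger metric satisfies $\log N(\epsilon_n, \mathcal{F}_n, H_n) \lesssim k_{0n} S_n \log J_n \asymp n\epsilon_n^2$; and (iii) $\Pi(\mathcal{F}_n^c)$ is exponentially smaller than $e^{-(c+4) n \epsilon_n^2}$.

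The entropy count (ii) enumerates active-entry sparsity patterns (contributing roughly $\log J_n$ per nonzero) and places a fine grid on each active entry, exploiting Lipschitz continuity of the probit map $B \mapsto H_n(B_0, B)$. Condition (iii) combines Theorem \ref{thm:factor-dimensionality} (for the dimension excess) with the exponential Laplace tails (for the norm excess). For (i), the slab Laplace density is at least of order $\lambda_1^{k_{0n} S_n}$ on a box of radius $\sim \epsilon_n/(J_n\sqrt{k_{0n}S_n})$ around each of the $k_{0n} S_n$ true nonzero entries, the spike places nearly all its mass on each true zero through the large penalty $\lambda_{0k}$, and the IBP assigns $\alpha^{k_{0n}}$-order mass to the correct feature allocation.

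The principal obstacle is that the marginal likelihood $f_{B,i}(Y_i) = \int \prod_j \Phi(B_j'\theta + d_j)^{Y_{ij}}(1-\Phi(B_j'\theta+d_j))^{1-Y_{ij}}\phi_K(\theta)\,d\theta$ has no closed form once $\theta_i$ is integrated out, breaking the standard Gaussian-factor tricks. My workaround is to exploit joint convexity of squared Hellinger distance and Kullback--Leibler divergence under mixtures to bound
$$H^2(f_{B_0, i}, f_{B,i}) \leq \mathbb{E}_{\theta_i}\!\left[ H^2\!\left( \prod_j \mathrm{Ber}(\Phi(\eta_{0,ij})),\ \prod_j \mathrm{Ber}(\Phi(\eta_{ij})) \right) \right],$$
and analogously for KL and its variation, reducing the problem to conditionally independent Bernoulli comparisons. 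The pointwise bound $\mathrm{KL}(\mathrm{Ber}(\Phi(\eta_1)), \mathrm{Ber}(\Phi(\eta_2))) \lesssim (\eta_1-\eta_2)^2\cdot \phi^2(\eta^*)/[\Phi(\eta^*)(1-\Phi(\eta^*))]$ then invokes Assumption (C) to dominate the Mills-ratio factor by $\log J_n$ uniformly in $(i,j)$. Taking expectation over the Gaussian $\theta_i$'s converts the resulting bound into a quadratic form in $B - B_{0n}$; this is where the second factor of $\log J_n$ in the rate $\epsilon_n^2$ originates, with the first factor arising from the entropy count.
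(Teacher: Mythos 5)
Your overall skeleton (restrict to $K(B)\leq C_0k_{0n}S_n$ via Theorem \ref{thm:factor-dimensionality}, then verify prior mass on a KL neighborhood, a sieve entropy bound of order $n\epsilon_n^2$, and negligible prior mass of the sieve complement, finishing with the non-i.i.d.\ Ghosal--van der Vaart testing machinery) is the same as the paper's, which also builds a sieve (with an $\ell_1$-norm cap and a $\delta_n$-thresholded support cap rather than your max-norm cap), bounds its Hellinger entropy, and reuses the evidence lower bound and dimension-recovery result from Theorem \ref{thm:factor-dimensionality}. The genuinely different element of your plan is the treatment of the latent factors: you integrate $\theta_i$ out and control divergences between the resulting marginal likelihoods by joint convexity, whereas the paper keeps $\Theta$ as a parameter with its Gaussian prior, forms KL neighborhoods jointly in $(B,\Theta)$, and controls the Hellinger entropy conditionally on $\Theta$ using the spectral-norm bound $\|\Theta_k\|_2\lesssim \sqrt{n}+\sqrt{k}$ together with $\gamma_n\lesssim J_n$ on the sieve.

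The gap is in your central workaround. After the convexity step, the Mills-ratio factor in your Bernoulli KL bound is evaluated at $\eta=B_{0j}'\theta$ with $\theta$ ranging over the entire Gaussian mixing distribution, so it is unbounded over the relevant domain; Assumption (C) only controls $\phi^2(\eta_{0ij})/[\Phi(\eta_{0ij})(1-\Phi(\eta_{0ij}))]$ at the finitely many true values $\eta_{0ij}=B_{0j}'\theta_{0i}$ with the \emph{fixed} true matrix $\Theta_{0n}$, and cannot be invoked ``uniformly in $(i,j)$'' inside the mixture expectation. You would instead have to integrate the (linearly growing) Mills ratio against the Gaussian, which produces a factor of order $1+\|B_{0j}\|_2$ (up to $\sqrt{\bar S_n}$ under Assumption (B)) rather than $\log J_n$, so the accounting of where the second $\log J_n$ comes from is also off (in the paper it arises from the sieve's support threshold $He_n=HS_nk_{0n}\log J_n$ entering the entropy, not from the KL bound). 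More structurally, the framework your proof must share with Theorem \ref{thm:factor-dimensionality} --- whose conclusion and evidence lower bound you invoke --- takes the truth to be the conditional density at the fixed $\Theta_{0n}$ and the posterior to be over $(B,\Theta)$ jointly; with that truth, the convexity inequality does not apply because the true density is not a mixture over the same Gaussian as the model marginal, and with your marginal truth, Assumption (C) and the denominators bounded in Theorem \ref{thm:factor-dimensionality} no longer match the objects in your decomposition. Either the latent factors must be kept as parameters (as the paper does), or the marginalized route must be redone with Gaussian-moment bounds replacing Assumption (C) and with a compatible evidence lower bound for the marginal experiment.
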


The Hellinger posterior contraction rate for the regression parameter in generalized linear models with a fixed design was recently shown to be $\sqrt{\frac{S_0 \log J_n}{n}}$, where $S_0$ is the size of the true support \cite{glm_contraction}. Theorem \ref{thm:pos-contraction} provides similar contraction guarantee but with an extra $\sqrt{\log J_n}$ factor, reflecting the cost of the growing number of factors $k_{0n}$ and the support size $S_n$. The proof of Theorem \ref{thm:pos-contraction} is detailed in Appendix B.2. In Appendix E.2, we further empirically verify the consistency of $B$ with a challenging high-dimensional example as described in Section \ref{subsec:ibp_loading}.

\vspace{-0.6cm}

\section{Parameter-Expanded EM for Sparse MIRT} \label{sec:em}
\vspace{-0.3cm}
Our proposed EM algorithm is guided by a recent groundbreaking result from Durante \cite{Durante_2019}, who fully characterized the posterior distribution of coefficients in the standard probit regression model. This result enabled us to explicitly derive the latent factor posterior distributions, resulting in highly efficient E-step sampling. 
\vspace{-0.5cm}
\subsection{Algorithm Setup} \label{subsec:em_setup}
\vspace{-0.2cm}
For simplicity, we let $K$ represent the truncated level of the  IBP prior $K^*$. Additionally, let $\Theta = [\theta_1, \cdots, \theta_N]' \in \mathbbm{R}^{N \times K}$ be the matrix of latent factors, $C = [c_1, \cdots, c_k] \in \mathbb R^{K}$ be the vector of ordered inclusion probabilities, $D = [d_1, \cdots, d_J]' \in \mathbb R^{J}$ be the intercept parameter,  $B \in \mathbbm R^{J \times K}$ be the factor loading matrix, and $\Gamma \in \mathbb R^{J \times K}$ be the binary latent allocation matrix. The goal of our EM algorithm is to estimate the model parameters $\Delta := (B, C, D)$. Given the initialization $\Delta^{(0)}$, the $(m+1)^{st}$ step of the EM algorithm finds $\Delta^{(m+1)}$ by solving the following optimization problem:
\begin{equation*} 
\Delta^{(m+1)} = \text{argmax}_{\Delta} \mathbbm E_{\Gamma, \Theta | Y, \Delta^{(m)}}  [\log \pi(\Delta, \Gamma, \Theta|Y))]  := \text{argmax}_{\Delta} Q(\Delta). 
\end{equation*}
Let $\langle X \rangle$ represent $\mathbbm E_{\Gamma, \Theta| Y, \Delta}(X)$, the conditional expectation given the observed data $Y$ and model parameters $\Delta$. Due to the hierarchical structure of our model, the optimization problem can be simplified by recognizing that the model parameters $(B, D)$ and $C$ are conditionally independent given $(\Gamma, \Theta)$. Thus, we can decompose $Q(\Delta)$ into the sum of $Q_1(B,D)$ and $Q_2(C)$, where
$$Q_1(B,D)  \propto \langle \log \pi(Y | B, D,  \Theta) \rangle + \langle \log \pi(B| \Gamma) \rangle + \langle \log \pi(D) \rangle, \quad Q_2(C)  \propto \log(\langle \Gamma \rangle | C) + \log \pi (C).$$
In the MIRT literature, it is customary to assume that item responses for each individual are conditionally independent given their multidimensional latent traits. This assumption enables us to write $Q_1(B, D)$ as $ \sum_{j=1}^{J} Q_j(B_j, d_j)$, where
\begin{equation}\label{eqn:3.6}
Q_j(B_j,d_j) = \sum_{i=1}^{N} \langle \log \Phi((2Y_{ij}-1)(B_j' \theta_i +d_j)) \rangle - \sum_{k=1}^{K} |B_{jk}|(\lambda_1  \langle \gamma_{jk} \rangle  + \lambda_0(1- \langle  \gamma_{jk} \rangle )) - \lambda_1 |d_j|.  \\
\end{equation}
This highlights the primary components of our EM algorithm. In the E-step, we compute the conditional expectations $ \langle \log \Phi((2Y_{ij}-1)(B_j' \theta_i +d_j)) \rangle $ and $\langle \gamma_{jk} \rangle$. In the M-step, we separately maximize $Q_1(B,D)$ and $Q_2(C)$. Efficient maximization of $Q_1(B, D)$ is achievable by decomposing it into $J$ distinct objective functions, allowing for parallel optimization.
\vspace{-0.5cm}
\subsection{The E-Step} \label{subsec:e-step}
\vspace{-0.2cm}
The complexity of the above-presented EM algorithm primarily stems from the computation of the conditional expectation term $\langle \log \Phi((2Y_{ij}-1)(B_j' \theta_i +d_j)) \rangle$, given the challenge of identifying latent factor posterior distributions. In Bayesian linear factor model \cite{bfa}, one can interchange the expectation and the linear functional form, thereby enabling the computation of the posterior distribution $\langle \theta_i \rangle$ through a form of ridge regression. However, in our MIRT setting, relocating the expectation $\langle . \rangle$ within the $\log \Phi(.)$ function would lead to maximizing an upper bound, constrained by the Jensen's inequality, and would not be valid even in an approximation sense. 
\vspace{-0.3cm}
\subsubsection{Characterize Posterior Distribution of Latent Factors}
To compute $\langle \log \Phi((2Y_{ij}-1)(B_j' \theta_i +d_j)) \rangle$, we propose to directly sample from the posterior distribution of $\theta_i$ and perform Monte Carlo integration. This strategy can yield significantly enhanced computational speed when compared to MCEM-type algorithms relying on MCMC sampling. The advantage of our proposed E-step becomes particularly prominent when dealing with a large number of Monte Carlo samples in high latent dimensions, as MCMC sampling is not amenable to parallelization, thereby resulting in increased computational costs.  The efficient sampling of latent factors is facilitated by a recent result by Durante \cite{Durante_2019}, who derives the posterior distribution of coefficient parameters in a probit regression as the unified skew-normal distribution \cite{unified_skew_normal}, which can be sampled efficiently:

\begin{definition} \label{def:usn}
Let $\Phi_{J} \left\{V; \Sigma \right\}$ represent the cumulative distribution functions of a J-dimensional multivariate Gaussian distribution $N_{J}\left(0_{J}, \Sigma \right)$ evaluated at vector $V$. A K-dimensional random vector $z \sim \operatorname{SUN}_{K, J}(\xi, \Omega, \Delta, \gamma, \Gamma)$ has the unified skew-normal distribution if it has the probability density function:
$$
\phi_{K}(z-\xi ; \Omega) \frac{\Phi_{J}\left\{\gamma+\Delta^{\mathrm{T}} \bar{\Omega}^{-1} \omega^{-1}(z-\xi) ; \Gamma-\Delta^{\mathrm{T}} \bar{\Omega}^{-1} \Delta\right\}}{\Phi_{J}(\gamma ; \Gamma)}.
$$
Here,  $\phi_{K}(z-\xi ; \Omega)$ is the density of a $K$-dimensional multivariate Gaussian with expectation $\xi=\left(\xi_{1}, \ldots, \xi_{K}\right)' $, and a K by K covariance matrix $\Omega =\omega \bar{\Omega} \omega$, where $\bar{\Omega}$ is the correlation matrix and $\omega$ is a diagonal matrix with the square roots of the diagonal elements of $\Omega$ in its diagonal.  $\Delta$ is a $K$ by $J$ matrix that determines the skewness of the distribution, and $\gamma \in \mathbbm R^J$ control the flexibility in departures from normality.

In addition, the $(K+J) \times(K+J)$ matrix $\Omega^{*}$, having blocks $\Omega_{[11]}^{*}=\Gamma, \Omega_{[22]}^{*}=\bar{\Omega}$ and $\Omega_{[21]}^{*}=\Omega_{[12]}^{*'}=\Delta$, needs to be a full-rank correlation matrix.
\end{definition}

 In the context of our EM algorithm, a key difference from \cite{Durante_2019}  rises in that the intercept $D$ is held fixed during the E-step, and each $d_j$ is distinct for every item. Fortunately, it remains feasible to deduce the posterior distribution of $\theta_i$ as an instance of the unified skew normal distribution: 

\begin{theorem}\label{thm:3.2}
Suppose the factor loading matrix $B$ and the intercept $D$ are known. If $y=\left(y_{1}, \ldots, y_{J}\right)' \in \mathbbm R^J $ is conditionally independent binary response data from the probit MIRT model defined in equation (\ref{eqn:2.3}), and $\theta_i \sim N(\xi, \Omega)$ assumed to be $K$-dimensional Gaussian prior, then
\vspace{-0.2cm}
$$(\theta_i \mid y, B, D) \sim \operatorname{SUN}_{K, J}\left(\xi_{\mathrm{post}}, \Omega_{\text {post }}, \Delta_{\text {post }}, \gamma_{\text {post }}, \Gamma_{\text {post }}\right),$$
with posterior parameters
\vspace{-0.3cm}
\begin{gather*}
  	\xi_{\mathrm{post}} = \xi, \quad \Omega_{\text {post }}= \Omega,  \quad \Delta_{\text {post }}= \bar{\Omega} \omega D_1^{\mathrm{T}} S^{-1}, \\
	\gamma_{\text {post }} = S^{-1}(D_1 \xi + D_2), \quad \Gamma_{\text {post }} =S^{-1}\left(D_1 \Omega D_1^{\mathrm{T}}+\mathbbm I_{J}\right) S^{-1},
   \end{gather*}
where $D_1 = \text{diag}(2y_1-1, \cdots, 2y_J-1) B$ with $j$-th row as $D_{1j}$,  $D_2 = \text{diag}(2y_1-1, \cdots, 2y_J-1)D \in \mathbbm R^{J}$,  and $S = \text{diag}\{(D_{11}' \Omega D_{11} +1)^{\frac{1}{2}}, \cdots,  (D_{1J}' \Omega D_{1J} +1)^{\frac{1}{2}} \} \in \mathbbm R^{J \times J}.$ 
\end{theorem}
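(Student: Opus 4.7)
The plan is to verify the claim directly from Bayes' rule, reducing the posterior to a SUN density by pattern matching against Definition \ref{def:usn}. Since the structure is a Gaussian prior times a product of probit likelihoods, this fits the framework of Durante \cite{Durante_2019}; the only novelty here is accommodating a non-standard prior mean/covariance $(\xi, \Omega)$ together with an item-specific intercept $D$, so the argument will be a parametric adaptation rather than a new technique.

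First I would collapse the Bernoulli likelihood into a multivariate Gaussian CDF by using the identity $\Phi(x)^y(1-\Phi(x))^{1-y} = \Phi((2y-1)x)$ componentwise, yielding $p(y \mid \theta_i, B, D) = \Phi_J(D_1 \theta_i + D_2; I_J)$. Multiplying by the Gaussian prior $\phi_K(\theta_i - \xi; \Omega)$ gives an unnormalized posterior in the right functional form: a Gaussian kernel times a Gaussian CDF.

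The key step is to rescale the argument of $\Phi_J$ using the diagonal matrix $S$. Because $S$ is diagonal with positive entries, the coordinatewise rescaling $\Phi_1(x; 1) = \Phi_1(x/s; 1/s^2)$ applied to each factor lets me rewrite $\Phi_J(D_1\theta_i + D_2; I_J) = \Phi_J(S^{-1}(D_1\theta_i + D_2); S^{-2})$. The specific choice $S_{jj}^2 = D_{1j}'\Omega D_{1j} + 1$ is engineered so that $\Gamma_{\text{post}} = S^{-1}(D_1 \Omega D_1^T + I_J) S^{-1}$ has unit diagonal, hence is a genuine correlation matrix as required by the SUN definition. I would then match the SUN template term by term: setting $\xi_{\text{post}} = \xi$ and $\Omega_{\text{post}} = \Omega$ (so $\bar\Omega, \omega$ are inherited from the prior), I would split the argument of $\Phi_J$ into its value at $\theta_i = \xi$ (yielding $\gamma_{\text{post}} = S^{-1}(D_1\xi + D_2)$) and its linear-in-$(\theta_i - \xi)$ part (yielding $\Delta_{\text{post}} = \bar{\Omega}\omega D_1^T S^{-1}$), then verify the covariance identity $\Gamma_{\text{post}} - \Delta_{\text{post}}^T \bar{\Omega}^{-1}\Delta_{\text{post}} = S^{-2}$ by direct substitution using $\omega\bar\Omega\omega = \Omega$.

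Finally I would identify the normalizing constant of the posterior with $\Phi_J(\gamma_{\text{post}}; \Gamma_{\text{post}})$ so that the density matches exactly. This is most transparent through the augmented latent-variable representation $Y_j = \mathbbm{1}\{B_j'\theta + d_j + \epsilon_j > 0\}$ with $\epsilon_j \sim N(0,1)$ independent of $\theta \sim N(\xi, \Omega)$: the marginal $P(Y = y \mid B, D)$ is an orthant probability for the $J$-dimensional Gaussian vector $D_1\theta - \tilde\epsilon$ (with $\tilde\epsilon_j = -(2y_j - 1)\epsilon_j$), whose mean and covariance are $D_1\xi$ and $D_1\Omega D_1^T + I_J$; standardizing by $S$ produces exactly $\Phi_J(\gamma_{\text{post}}; \Gamma_{\text{post}})$. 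The main obstacle is the bookkeeping around the scale/correlation decomposition $\Omega = \omega\bar{\Omega}\omega$ in the matrix identities, together with verifying that the $(K+J)\times(K+J)$ block matrix $\Omega^{*}$ is a valid full-rank correlation matrix; the latter reduces, via the Schur-complement criterion, to the already-established identity $\Gamma_{\text{post}} - \Delta_{\text{post}}^T \bar\Omega^{-1}\Delta_{\text{post}} = S^{-2} \succ 0$ combined with positive-definiteness of $\bar\Omega$.
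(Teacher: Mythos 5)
Your proposal is correct and follows essentially the same route as the paper's proof: collapse the probit likelihood into $\Phi_J(D_1\theta_i+D_2;\mathbbm{I}_J)$, rescale by the diagonal matrix $S$, recentre the argument at $\xi$, and match the resulting kernel to Definition \ref{def:usn} exactly as in Durante's argument, with the only change being the intercept term entering $\gamma_{\text{post}}$. Your added checks (the covariance identity via $\omega\bar{\Omega}\omega=\Omega$, the normalizing constant via the latent-variable orthant probability, and the Schur-complement verification that $\Omega^*$ is a full-rank correlation matrix) are correct elaborations of steps the paper treats as immediate.
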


Theorem \ref{thm:3.2} is useful as we can now entirely characterize the latent factor posterior distribution. The significance of it goes beyond leading to a tractable E-step in our proposed algorithm, as it has direct applications to psychological measurements: given any $J$ well-calibrated items and for any new observation $y \in \mathbbm R^J$, Theorem \ref{thm:3.2} immediately provides an analytic form of the latent trait posterior distribution. In consequence, no extra Newton step or MCMC simulation are needed for latent factor inference. 

Given the latent factor posterior distributions, the E-step computation becomes straightforward via Monte Carlo integration. As the posterior distribution of $\theta_i$ is not identical to that of a probit regression, we can instead consider the following sampling strategy:
\begin{corollary} \label{cor:4.3}
If $\theta_i | y, B, D$ has the unified skew-normal distribution from Theorem \ref{thm:3.2} , then
\vspace{-0.8cm}
$$
(\theta_i | y, B, D) \stackrel{\mathrm{d}}{=} \xi+\omega\left\{V_{0}+\bar{\Omega} \omega D_1' \left(D_1 \Omega D_1'+ \mathbbm{I_{J}}\right)^{-1} S V_{1}\right\}, \quad\left(V_{0} \perp V_{1}\right),
$$
with $V_{0} \sim N \left\{0, \bar{\Omega}-\bar{\Omega} \omega D_1'\left(D_1 \Omega D_1'+\mathbbm{I_{J}}\right)^{-1} D_1 \omega \bar{\Omega}\right\}$, and $V_{1}$ from a zero mean $J$-variate truncated normal with covariance matrix $S^{-1}\left(D_1 \Omega D_1'+\mathbbm{I_{J}}\right) S^{-1}$ and truncation below \\$-S^{-1}(D_1 \xi +D_2)$. 
\end{corollary}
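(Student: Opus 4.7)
The plan is to derive the corollary as a direct consequence of the general additive stochastic representation of the unified skew-normal distribution (see, e.g., \cite{unified_skew_normal, Durante_2019}) applied to the specific posterior parameters given in Theorem \ref{thm:3.2}. Recall that for a generic $z \sim \operatorname{SUN}_{K,J}(\xi, \Omega, \Delta, \gamma, \Gamma)$, one has the additive representation
$$
z \stackrel{\mathrm{d}}{=} \xi + \omega \bigl( U_0 + \bar{\Omega}\,\omega\,\Delta\,\Gamma^{-1} U_1 \bigr),
$$
wait, more precisely the standard form is $z \stackrel{\mathrm{d}}{=} \xi + \omega( U_0 + \Delta \Gamma^{-1} U_1)$ with $U_0 \sim N_K(0, \bar{\Omega} - \Delta \Gamma^{-1} \Delta^{\mathrm{T}})$ independent of $U_1$, where $U_1$ is a zero-mean $J$-variate Gaussian with covariance $\Gamma$ truncated below $-\gamma$. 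I would open the proof by stating this representation and referencing it, so that the task reduces to plugging in the posterior quantities from Theorem \ref{thm:3.2} and simplifying.

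The computational core is then a short block of matrix algebra. First I would compute
$$
\Delta_{\text{post}} \Gamma_{\text{post}}^{-1} = \bar{\Omega}\,\omega D_1^{\mathrm{T}} S^{-1} \cdot S \bigl(D_1 \Omega D_1^{\mathrm{T}} + \mathbbm{I}_J\bigr)^{-1} S = \bar{\Omega}\,\omega D_1^{\mathrm{T}} \bigl(D_1 \Omega D_1^{\mathrm{T}} + \mathbbm{I}_J\bigr)^{-1} S,
$$
which is exactly the coefficient of $V_1$ inside the braces in the corollary statement. Next I would compute the residual covariance
$$
\bar{\Omega}_{\text{post}} - \Delta_{\text{post}} \Gamma_{\text{post}}^{-1} \Delta_{\text{post}}^{\mathrm{T}} = \bar{\Omega} - \bar{\Omega}\,\omega D_1^{\mathrm{T}} \bigl(D_1 \Omega D_1^{\mathrm{T}} + \mathbbm{I}_J\bigr)^{-1} D_1\,\omega\,\bar{\Omega},
$$
using $\Omega = \omega \bar{\Omega} \omega$ so that the $S^{-1}$ factors on either side of $\Delta_{\text{post}} \Gamma_{\text{post}}^{-1} \Delta_{\text{post}}^{\mathrm{T}}$ collapse. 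This yields precisely the stated distribution of $V_0$. Finally, by definition $V_1$ inherits covariance $\Gamma_{\text{post}} = S^{-1}(D_1 \Omega D_1^{\mathrm{T}} + \mathbbm{I}_J) S^{-1}$ and truncation point $-\gamma_{\text{post}} = -S^{-1}(D_1 \xi + D_2)$, matching the corollary verbatim.

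The main obstacle, if any, is conceptual rather than technical: one must be careful to distinguish $\Omega$ from its correlation counterpart $\bar{\Omega}$, and to track where the scale matrix $\omega$ enters, because the SUN parametrization uses the correlation matrix while Theorem \ref{thm:3.2} mixes covariance and correlation. In our setting this is mild since the Gaussian prior $\theta_i \sim N(\xi, \Omega)$ is typically taken to be standardized (e.g.\ $\Omega = \mathbbm{I}_K$, so $\omega = \mathbbm{I}_K$ and $\bar{\Omega} = \mathbbm{I}_K$), in which case the representation simplifies further. I would close the proof by noting that the independence $V_0 \perp V_1$ follows directly from the construction of the SUN stochastic representation, completing the proof.
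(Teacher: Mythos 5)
Your proposal is correct and follows essentially the same route as the paper: invoke the additive stochastic representation of the $\operatorname{SUN}$ distribution ($\xi + \omega(V_0 + \Delta\Gamma^{-1}V_{1,-\gamma})$ with $V_0 \sim N(0,\bar{\Omega}-\Delta\Gamma^{-1}\Delta')$ and $V_{1,-\gamma}$ truncated below $-\gamma$) and substitute the posterior parameters from Theorem \ref{thm:3.2}, the only change relative to the probit-regression case being the truncation point $-S^{-1}(D_1\xi+D_2)$ instead of $-S^{-1}D_1\xi$. Your explicit simplifications of $\Delta_{\text{post}}\Gamma_{\text{post}}^{-1}$ and $\bar{\Omega}_{\text{post}}-\Delta_{\text{post}}\Gamma_{\text{post}}^{-1}\Delta_{\text{post}}'$ are exactly the algebra the paper leaves implicit, so nothing is missing.
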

The proofs of Theorem \ref{thm:3.2} and Corollary \ref{cor:4.3} can be found in Appendix C. 
\subsubsection{E-step Computation} \label{subsec:posterior_sampling}
Corollary \ref{cor:4.3}  provides us with a direct approach to assess $\langle \log \Phi((2Y_{ij}-1)(B_j' \theta_i +d_j)) \rangle$. This involves sampling a sufficiently large number of $\theta_i$ values from its posterior distribution and subsequently performing Monte Carlo integration. Assuming we require $M$ samples for each latent factor $\theta_i$, we can generate individual samples $\theta_i^{(m)}, m = 1, \cdots, M$, by following these three steps:
\vspace{-0.1cm}
\begin{itemize}
    \item Step 1: Sample $V_{0}^{(m)} \sim N \left\{0, \bar{\Omega}-\bar{\Omega} \omega D_1'\left(D_1 \Omega D_1'+\mathbbm{I_{J}}\right)^{-1} D_1 \omega \bar{\Omega}\right\} \in \mathbbm R^{K}$. 
    \item Step 2: Leveraging the  minimax tilting methods of Botev \cite{Botev_2016}, sample $V_{1}^{(m)}$ from a zero mean $J$-variate truncated normal with covariance matrix $S^{-1}\left(D_1 \Omega D_1'+\mathbbm{I_{J}}\right) S^{-1}$ and truncation below $-S^{-1}(D_1 \xi +D_2)$. 
    \item Step 3: Compute $\theta_i^{(m)} = \xi+\omega\left\{V_{0}^{(m)}+\bar{\Omega} \omega D_1' \left(D_1 \Omega D_1'+\mathbbm{I_{J}}\right)^{-1} S V_{1}^{(m)}\right\}.$
\end{itemize}

Finally, we need to compute the conditional distribution $\langle \gamma_{jk} \rangle$ as
$$\langle \gamma_{jk} \rangle = P(\gamma_{jk} =1 | \Delta) = \frac{c_k \psi (B_{jk} | \lambda_1)}{c_k \psi (B_{jk} | \lambda_1) + (1-c_k) \psi(B_{jk} | \lambda_0)}.$$

\vspace{-0.3cm}
\subsection{The M-step} \label{subsec:m-step}
\vspace{-0.2cm}

As discussed in Section \ref{subsec:em_setup}, the M-Step of our algorithm can be decomposed into two independent optimization problems. To maximize $Q_1(B,D)$, we can maximize each $Q_j(B_j, d_j)$ as defined in \eqref{eqn:3.6} in parallel. Suppose we obtain $M$ samples of $(\theta_i^{(1)}, \cdots, \theta_i^{(M)})$ from the E-Step for each subject $i$, we can maximize the following surrogate objective function:
\begin{equation*}
 \frac{1}{M} \sum_{i=1}^{N} \sum_{m=1}^{M} \log \Phi((2Y_{ij}-1) (B_j' \theta_i^{(m)} +d_j)) - \sum_{k=1}^{K} |B_{jk}|(\lambda_1  \langle \gamma_{jk} \rangle  + \lambda_0(1- \langle  \gamma_{jk} \rangle )) -\lambda_1 |d_j|. 
\end{equation*}
Hence maximizing $Q_j(B_j,d_j)$ is equivalent to estimating a penalized probit regression with an intercept. Specifically, this regression involves $N \times M$ response variables, which result from replicating the $j$-th column of response $Y_j$ for $M$ times. The corresponding design matrix consists of $M$ identical copies of the latent factor matrix $\Theta$. For each latent dimension $k$, we apply a distinct $l_1$ penalty to $B_{jk}$, with a weight of $(\lambda_1 \langle \gamma_{jk} \rangle + \lambda_0(1- \langle \gamma_{jk} \rangle ))$. Solving this penalized regression can be efficiently achieved using standard off-the-shelf packages, such as the 'glmnet' package \cite{glmnet} in R.

To maximize $Q_2(C)$, we need to solve the following constraint linear program:
\begin{equation*}
\begin{aligned}
\max_{C} \quad & \sum_{j=1}^{J} \sum_{k=1}^{K} [ \langle \gamma_{jk} \rangle \log c_k + (1- \langle \gamma_{jk} \rangle) \log(1-c_k)] + (\alpha-1) \log(c_{K})  \\
\textrm{s.t.} \quad & c_k - c_{k-1} \leq 0, k = 2, \cdots, K,\\
  &0 \leq c_k \leq 1, k =1, \cdots, K .   \\
\end{aligned} 
\end{equation*}
Note the optimization problem is convex, and hence can be solved with standard convex optimization package. As observed in \cite{bfa}, the optimization process can be accelerated if we sort the factors by their binary order before optimizing the problem.
\vspace{-0.3cm}
\subsection{The PXL-EM Algorithm}
We present the parameter-expanded version of our EM algorithm (PXL-EM) to expedite convergence and enhance exploration of the parameter space. The updates for our PXL-EM algorithm are almost as identical to vanilla EM algorithm described above, except we now need an additional auxiliary parameter $A$, which serves to rotate the loading matrix $B$ towards a sparser configuration (PXL-EM). A brief summary of our PXL-EM algorithm can be found in the table of Algorithm \ref{alg:pxl-em}. The derivation and the justification of the factor rotation step is provided in Appendix D.

\RestyleAlgo{ruled}
\SetKwComment{Comment}{/* }{ */}
\begin{algorithm}[hbt!]
\caption{PXL-EM Algorithm for Sparse MIRT}\label{alg:pxl-em}
\KwData{Item Response $Y \in \mathbbm{R}^{N \times J}$}
Initialize model parameters $B= B^{(0)}, C= C^{(0)}, D= D^{(0)}$ ; \\
\While{\text{PXL-EM has not converged}}{
    \tcc{E-Step}
    \For{$i=1$ \KwTo $N$}{  
        Draw $\theta_i^{(1)}, \cdots, \theta_i^{(M)}$ as described in Section \ref{subsec:posterior_sampling} \Comment*[r]{Update Latent Factors}
     }
     
    $\langle \gamma_{jk} \rangle = \frac{c_k \psi (B_{jk} | \lambda_1)}{c_k \psi (B_{jk} | \lambda_1) + (1-c_k) \psi(B_{jk} | \lambda_0)}$ \Comment*[r]{Update Latent Indicators} 
    \tcc{M-Step}
    \For{$j=1$ \KwTo $J$}{  
        $B_j, d_j = \argmax Q_j(B_j, d_j)$, as described in Section \ref{subsec:m-step} \Comment*[r]{Update Loadings}
     }
    $C= \argmax Q_2(C)$ as described in Section \ref{subsec:m-step} \Comment*[r]{Update IBP Weights} 
    \tcc{Rotation-Step}
    $A = \frac{1}{N\cdot M} \sum_{i=1}^{N} \sum_{m=1}^{M}  \theta_{i}^{(m)} \theta_i^{(m)'}$ ;
    $B = B * A_{L}$, where $A_{l}A_{L}'=A $; 
}
\end{algorithm}
\vspace{-0.6cm}
\section{Extension to Mixed Data Types} \label{sec:grm_ordinal}
\vspace{-0.2cm}
We demonstrate the extension of our proposed methodology, illustrated for binary datasets, to item response with mixed data type. For each observation $Y_i \in \mathbbm R^J$, we allow $Y_{ij}$ to be categorized as continuous, binary, or ordinal. Specifically, let $J_C$ be the set of continuous responses, $J_B$ be the set of binary responses, $J_{O}$ be the set of ordinal responses, then we have $J = J_B \cup J_C \cup J_O$. let $B_{S} \in \mathbbm R^{|S| \times K}$ be the sub matrix of $B$ indexed by set $S$. Assuming the binary responses $j \in J_B$ are generated by the standard MIRT model as defined in equation (\ref{eqn:2.3}), we consider the following data generating process for the continuous and ordinal items:
\begin{itemize}
    \item For each continuous feature $j \in J_c$, we assume it has been centered, and assign a noninformative inverse gamma prior on its unknown variance $\sigma_{j}^2 \sim \text{IG}(\eta_1, \eta_2)$. Then 
        \begin{equation} \label{eq:y-c}
        Y_{i} | \theta_i, B_{J_c}, \Sigma \sim N(B_{J_c} \theta_i, \Sigma), \quad \Sigma:=\text{diag}(\{\sigma_j^2 \}_{j \in J_c}) .
        \end{equation}
    \item For each ordinal feature $j$ that takes values in $\{0, 1, \cdots, L_j\}$, we assume it has been generated from the standard graded response model \cite{grmodel}:
    \begin{equation} \label{eq:y-o}
     P(Y_{ij}= l | \theta_i, B_j, d_j) = P(Y_{ij} \leq l) - P(Y_{ij} \leq l-1) = \Phi(B_j'\theta_i +d_{j,l}) -  \Phi(B_j'\theta_i +d_{j,l-1}),
    \end{equation}
     where we need to estimate $L_j$-dimensional intercept $d_j$ such that $- \infty = d_{j,-1} < d_{j, 0} \leq \cdots \leq d_{j, L_j-1} < d_{j,L_j} = \infty$. 
\end{itemize}

To derive a tractable E-step, we will show the latent factor posterior distribution still belongs to an instance of $\operatorname{SUN}$ under the mixed item response types. For ease of notation, we fix an observation $i$ and write $Y:= Y_i \in \mathbbm R^{J}$. Define $J_{O_1}$ as the set of ordinal features for which $y_{ij} \notin \{0, L_j\}$, and $J_{O_2} = J_O \setminus J_{O_1}$. Let $D_{J_B}$ and $Y_{J_B}$ denote the sub vectors of $D$ and $Y$ indexed by $J_B$ respectively We further define the following quantities:
\vspace{-0.2cm}
\begin{itemize}
    \item For any n-dimensional vector $v$, let $\{v=0\}$ denote the n-dimensional binary vector, in which the $i$-th component is $1$ iff $v_i=0$. Let
    \vspace{-0.3cm}
    \begin{align*}
        &D_1 = \text{diag}(2Y_{J_B}-\mathbbm{1}_{J_B}) B_{J_B}, \quad D_2 = \text{diag}(2Y_{J_B}-\mathbbm{1}_{J_B}) D_{J_B}, \\
        &D_3 = \text{diag}(2\{Y_{J_{O_2}} = 0\} - \mathbbm{1}_{J_{O_2}}) B_{J_{O_2}}, \quad D_4 = \text{diag}(2\{Y_{J_{O_2}} = 0\} - \mathbbm{1}_{J_{O_2}}) D_{J_{O_2}},
    \end{align*}
    where $D_{J_{O_2}}$ has its $j$-th component as $d_{j,0}$ when $y_j=0$, and as $d_{j, L_{j}-1}$ when $y_j = L_j$. 
    
    \item Let $B_{J_{O_1}}^{(i)}$ be the $i$-th row vector of $B_{J_{O_1}}$,  and $D_{j}^{(i)}$ be the $i$-th component of the intercept vector of the $j$-th feature in $J_{O_1}$. Define 
    \vspace{-0.3cm}
    \begin{align*}
        \tilde B_{J_{O_1}} &= [B_{J_{O_1}}^{(1)'}, -B_{J_{O_1}}^{(1)'}, \cdots, B_{J_{O_1}}^{(|J_{O_1}|)'}, -B_{J_{O_1}}^{(|J_{O_1}|)'}]' \in \mathbbm R^{2|J_{O_1}| \times K}, \\
        \tilde D_{J_{O_1}} &= [D_{1}^{(y_{j})}, -D_{1}^{(y_{j}-1)} \cdots,D_{|J_{O_1}|}^{(y_{j})}, -D_{|J_{O_1}|}^{(y_{j}-1)} ] \in \mathbbm R^{2|J_{O_1}|}.
    \end{align*}
    \item Let $\overline{\mathbbm{I_{J_{O_1}}}}$ be a $2|J_{O_1}|$ by $2|J_{O_1}|$ block diagonal matrix, each block of which is the $2$ by $2$ matrix $-\mathbbm{1}_2 \mathbbm{1}_2' + 2\mathbbm{I_2}$.
\end{itemize}
\vspace{-0.3cm}

We are now ready to characterize the conditional posterior distribution for each latent factor $\theta_i$ as an instance of the unified skew-normal distribution: 
\vspace{-0.1cm}
\begin{theorem}\label{thm:mixed}
Suppose the model parameters $(B, D, \Sigma)$ are known. Let $y=\left(y_{1}, \ldots, y_{J}\right)' \in \mathbbm R^J $ be a vector of conditionally independent response with mixed data types, each component of which is generated through either equations \ref{eqn:2.3}, \ref{eq:y-c}, or \ref{eq:y-o}.  Assuming latent factor prior $\theta_i \sim N(\xi, \Omega) \in \mathbbm{R}^{K}$. Then
\vspace{-0.3cm}
\begin{equation*}
    (\theta_i \mid y, B, D, \Sigma) \sim \operatorname{SUN}_{K, \bar J}\left(\xi_{\mathrm{post}}, \Omega_{\text {post }}, \Delta_{\text {post }}, \gamma_{\text {post }}, \Gamma_{\text {post }}\right)
\end{equation*}
\vspace{-0.2cm}
with posterior parameters
\begin{align*}
 &\xi_{\mathrm{post}} = \Omega_{\text {post }}[\tilde{B}_{J_c}' \tilde{B}_{J_c} \hat{\theta_i} + \Omega ^{-1} \xi], \quad \Omega_{\text {post }}= (\tilde{B}_{J_c}' \tilde{B}_{J_c} + \Omega^{-1})^{-1},  \quad \Delta_{\text {post }}= \bar{\Omega}_{\text {post }} \omega_{\text {post }} \bar{D}^{'} S^{-1},\\
 & \hspace{6em} \gamma_{\text {post }} = S^{-1}(\bar{D} \xi_{\text {post}} + \bar{V}), \quad \Gamma_{\text {post }} =S^{-1}\left(\bar{D} \Omega_{\text {post }} \bar{D}^{'}+\overline{\mathbbm{I_{\bar{J}}}}\right) S^{-1},
\end{align*}
where $\tilde{B}_{j_c} = \Sigma^{-\frac{1}{2}}B_{J_c}$, $\hat{\theta_i} = (\tilde{B}_{j_c}' \tilde{B}_{j_c})^{-1} \tilde{B}_{j_c}' \Sigma^{-\frac{1}{2}} y_{\raisebox{-0.5mm}{$\scriptscriptstyle J_c$}}$, $\bar J = |J_B| + 2|J_{O_1}| + |J_{O_2}|$,  $\bar{D} = [D_1', D_3', \tilde B_{J_{O_1}}']' \in \mathbbm{R}^{\bar{J} \times K}$ with $\overline{D_{j}} \in \mathbbm R^{k}$ as its $j$-th row, 
\vspace{-0.2cm}
$$S = \text{diag}\{(\overline{D_{1}}' \Omega_{\text {post }} \overline{D_{1}} +1)^{\frac{1}{2}}, \cdots,  (\overline{D_{\bar{J}}}' \Omega_{\text {post }} \overline{D_{\bar{J}}} +1)^{\frac{1}{2}} \} \in \mathbbm R^{\bar{J} \times \bar{J}}, $$, $\bar{V} = [ D_2', D_4',  \tilde{D}_{J_{O_1}}']' $,  and $\overline{\mathbbm{I_{\bar{J}}}}$ is a $\bar{J}$ by $\bar{J}$ diagonal block matrix, in which $\overline{\mathbbm{I_{\bar{J}}}}_{[1,1]} = \mathbbm{I_{|J_B|+|J_{O_2}|}}$ and $\overline{\mathbbm{I_{\bar{J}}}}_{[2,2]} = \overline{\mathbbm{I_{J_{O_1}}}}$.
\end{theorem}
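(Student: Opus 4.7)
The plan is to extend Durante's derivation used for Theorem \ref{thm:3.2} by handling the three response types sequentially. I start from the unnormalized posterior
$$\pi(\theta_i \mid y, B, D, \Sigma) \propto \phi_K(\theta_i - \xi; \Omega) \cdot \prod_{j \in J_c} \phi(y_j; B_j'\theta_i, \sigma_j^2) \cdot \prod_{j \in J_B} p_B(y_j \mid \theta_i) \cdot \prod_{j \in J_O} p_O(y_j \mid \theta_i),$$
and I treat each product in turn. The continuous block is conjugate: completing the square in $\theta_i$ absorbs $\prod_{j \in J_c}\phi(y_j; B_j'\theta_i,\sigma_j^2)$ together with the Gaussian prior into a new Gaussian kernel with precision $\Omega_{\text{post}}^{-1} = \tilde B_{J_c}'\tilde B_{J_c} + \Omega^{-1}$ and mean $\xi_{\text{post}}$ as stated in the theorem, using the OLS-style representation $\hat\theta_i = (\tilde B_{J_c}'\tilde B_{J_c})^{-1}\tilde B_{J_c}' \Sigma^{-1/2} y_{J_c}$. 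After this step the problem looks formally identical to Theorem \ref{thm:3.2}, except that the Gaussian kernel is $\phi_K(\theta_i-\xi_{\text{post}};\Omega_{\text{post}})$ and the "likelihood-like" factor is a product of binary and ordinal probit pieces.

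Next I rewrite every remaining factor as a (possibly bivariate) Gaussian CDF. For binary $j \in J_B$, the factor equals $\Phi\bigl((2y_j-1)(B_j'\theta_i + d_j)\bigr)$, giving the rows of $D_1,D_2$. For ordinal $j \in J_{O_2}$ the observation is on the boundary, so the factor is either $\Phi(B_j'\theta_i + d_{j,0})$ or $1 - \Phi(B_j'\theta_i + d_{j,L_j-1}) = \Phi(-(B_j'\theta_i + d_{j,L_j-1}))$; in both cases the appropriate sign is captured by $\text{diag}(2\{Y_{J_{O_2}}=0\} - \mathbbm 1_{J_{O_2}})$, producing $D_3, D_4$. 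The only non-trivial identity is the one needed for $j \in J_{O_1}$: there I use
$$\Phi(a) - \Phi(b) = P(b < Z \leq a) = P(Z_1 \leq a,\; Z_2 \leq -b),$$
where $(Z_1,Z_2) = (Z,-Z)$ has covariance $-\mathbbm 1_2\mathbbm 1_2' + 2\mathbbm I_2$. Applied with $a = B_j'\theta_i + d_{j,y_j}$ and $b = B_j'\theta_i + d_{j,y_j-1}$, this turns each interior-ordinal factor into a bivariate Gaussian CDF whose rows of coefficients populate $\tilde B_{J_{O_1}}$ and whose intercept rows populate $\tilde D_{J_{O_1}}$; the induced correlation structure is exactly the block-diagonal $\overline{\mathbbm I_{J_{O_1}}}$.

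Because the dummy Gaussian variables used in each $\Phi$-representation are independent across items, I can stack them into a single $\bar J$-dimensional Gaussian vector and express the entire product of $\Phi$- and $\Phi_2$-factors as one multivariate CDF $\Phi_{\bar J}\bigl(\bar D\theta_i + \bar V;\;\overline{\mathbbm I_{\bar J}}\bigr)$ with $\bar D = [D_1',D_3',\tilde B_{J_{O_1}}']'$ and $\bar V = [D_2',D_4',\tilde D_{J_{O_1}}']'$. Thus the posterior is proportional to $\phi_K(\theta_i - \xi_{\text{post}};\Omega_{\text{post}}) \cdot \Phi_{\bar J}\bigl(\bar D\theta_i + \bar V;\overline{\mathbbm I_{\bar J}}\bigr)$. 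Standardizing the CDF argument by $S = \text{diag}\{(\overline{D_k}'\Omega_{\text{post}}\overline{D_k}+1)^{1/2}\}$ and applying the algebraic identity that underlies the SUN density (the same one used in Durante's Theorem~1) produces the claimed SUN parameters $\Delta_{\text{post}}, \gamma_{\text{post}}, \Gamma_{\text{post}}$.

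The main obstacle is the ordinal-interior case: I must verify that the singular correlation matrix $-\mathbbm 1_2\mathbbm 1_2' + 2\mathbbm I_2$ is correctly handled inside the SUN framework and that, when combined into $\overline{\mathbbm I_{\bar J}}$ together with identity blocks for binary and boundary terms, the resulting $(K+\bar J)\times(K+\bar J)$ block matrix $\Omega^{*}$ in Definition \ref{def:usn} has full rank (which it will, since the degeneracy lives only within each $2\times 2$ block and is absorbed correctly by the standardization through $S$). The remaining work is bookkeeping: matching the stacked $(\bar D,\bar V,\overline{\mathbbm I_{\bar J}})$ to the SUN parameters after substituting $\xi_{\text{post}},\Omega_{\text{post}}$ for $\xi,\Omega$, which is a direct generalization of the calculation already carried out for the purely binary case in Theorem \ref{thm:3.2}.
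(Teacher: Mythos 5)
Your proposal follows essentially the same route as the paper's proof: conjugate absorption of the continuous block into $\phi_K(\theta_i-\xi_{\text{post}};\Omega_{\text{post}})$, the identity $\Phi(a)-\Phi(b)=P(Z\leq a,\,-Z\leq -b)$ with degenerate correlation $-\mathbbm{1}_2\mathbbm{1}_2'+2\mathbbm{I}_2$ (the paper's ``CDF trick'' lemma) for interior ordinal responses, stacking all probit factors into $\Phi_{\bar J}(\bar D\theta_i+\bar V;\overline{\mathbbm{I_{\bar{J}}}})$, and matching the rescaled kernel to the SUN density exactly as in the binary case. The only place you are looser than the paper is the final check that $\Omega^{*}_{\text{post}}$ is a valid (full-rank) correlation matrix, which the paper handles by exhibiting the middle block matrix as the covariance of $[Z_1,Z_2]'$ with $Z_1=\bar D Z_2+\epsilon$ and $\epsilon\sim(0,\overline{\mathbbm{I_{\bar{J}}}})$; this is a minor bookkeeping point, not a gap in the approach.
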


Theorem \ref{thm:mixed} is particularly valuable because it demonstrates that sampling from the latent factor posterior distribution remains tractable under mixed item response data types, as it still conforms to an instance of the unified skew-normal distribution. This ensures that our methodology can be generalized to handle a mix of continuous, binary, and ordinal item response types. The proof of Theorem \ref{thm:mixed}, and the comprehensive EM procedure are provided in Appendix I.
\vspace{-0.7cm}
\section{Experiments} \label{sec:empirical}
\vspace{-0.5cm}
\subsection{Dynamic Posterior Exploration of EM} \label{subsec:calibration}
\vspace{-0.2cm}

A common critique of the Bayesian approach centers on the perceived arbitrariness in prior specification and the challenges in prior calibration. Here we introduce a systematic methodology to calibrate the priors in our EM algorithm, aiming to achieve optimal outcomes. The calibration of penalty parameters $\lambda_0$ and $\lambda_1$ of the SSL prior is of paramount importance for effectively exploring the posterior landscape. In \cite{bfa}, the authors adopt the idea of deterministic annealing \cite{UEDA1998271, JMLR:v11:yoshida10a} and propose to keep the value of $\lambda_1$ fixed at a reasonable level, and initialize $\lambda_0$ at a small value to quickly obtain an initial solution. They then iteratively refit the model by gradually increasing the value of $\lambda_0$ and reinitializing the model parameters from the previous iteration. Once the increase of $\lambda_0$ no longer affects the solution, the model has stabilized and its solution is ready for interpretation. We can readily apply this approach of  ``dynamic posterior exploration'' to our PXL-EM algorithm for MIRT models. In Section \ref{subsec:ibp_loading}, we will describe this calibration process of in detail.
\vspace{-0.5cm}
\subsection{IBP Loading Matrix} \label{subsec:ibp_loading}
\vspace{-0.2cm}
We randomly generated the true loading matrix from the Indian Buffet Process (IBP) prior with an intensity parameter $\alpha = 2$, resulting in a complex loading structure as shown in the leftmost subplot of Figure \ref{fig:plot4}. Since a draw from IBP prior would be infinite-dimensional, we applied a stick-breaking approximation to keep the strongest five factors, and reordered the rows and columns of the loading matrix to enhance visualization. We focused on a difficult high-dimensional scenario, with the number of observations $N=250$ being smaller than the number of items $J=350$.

To assess whether PXL-EM can recover the true number of factors, we deliberately overshot the number of factors as $K^*=10$. Keeping $\lambda_1=0.5$ fixed, we gradually increased $\lambda_0$ along the path $\{0.5, 1, 3, 6, 10, 20, 30, 40\}$ following the procedure described in Section \ref{subsec:calibration}. A visualization of the loading matrix estimation along this $\lambda_0$-path can be found in Appendix E.1.  Convergence was determined when $|B^{(t+1)} - B^{(t)}|_{\infty} < 0.06$. Table \ref{tab:ibp_px_em} documents how the estimation accuracy dramatically improved over time.

\begin{table}[t]
    \centering
    \renewcommand{\arraystretch}{1.0}
    \captionsetup{labelfont=bf, format=hang} 
    \caption{Dynamic Posterior Exploration of PXL-EM}
    \begin{tabular}{@{}l S[table-format=1.0] S[table-format=1.3] S[table-format=1.3] S[table-format=1.3] S[table-format=1.3] S[table-format=1.3] S[table-format=4]@{}}
        \toprule
        {$\boldsymbol{\lambda}_0$} & {Loading MSE} & {Intercept MSE}  & {FDR} & {FNR} & {Accumulated Time (s)} \\
        \midrule
        \hline
        1 & 0.078 & 0.035 & 0.666 & 0 & 554\\
        6 & 0.016 & 0.028 & 0.544 & 0  & 1433\\
        20 & 0.011 & 0.022 & 0.103 & 0  & 2173\\
        30 & 0.01 & 0.022  & 0.022 & 0 &  2450\\
        40 & 0.01 & 0.022  & 0.009 & 0 & 2682\\
        \bottomrule
    \end{tabular}
    
    \label{tab:ibp_px_em}
\end{table}

We also considered the MHRM algorithm, the sparse Gibbs sampler, and the JML estimation method as introduced in Section \ref{sec:motivation}. In Figure \ref{fig:plot4}, we observe that the Gibbs sampler was able to recover the general structure of the factor loading, but required further thresholding to output exact zeros. On the other hand, the MHRM algorithm suffered from rotational indeterminacy issues due to the absence of structural assumptions on the sparsity of the loading matrix. The JML estimation method converged amazingly fast compared to the other methods. While it captured the general IBP pattern, it only identified four factors, and missed several major components for each of the recovered factor. 
\begin{figure}[t]
    \centering
    \includegraphics[width=0.85\textwidth, height=0.55\textwidth]{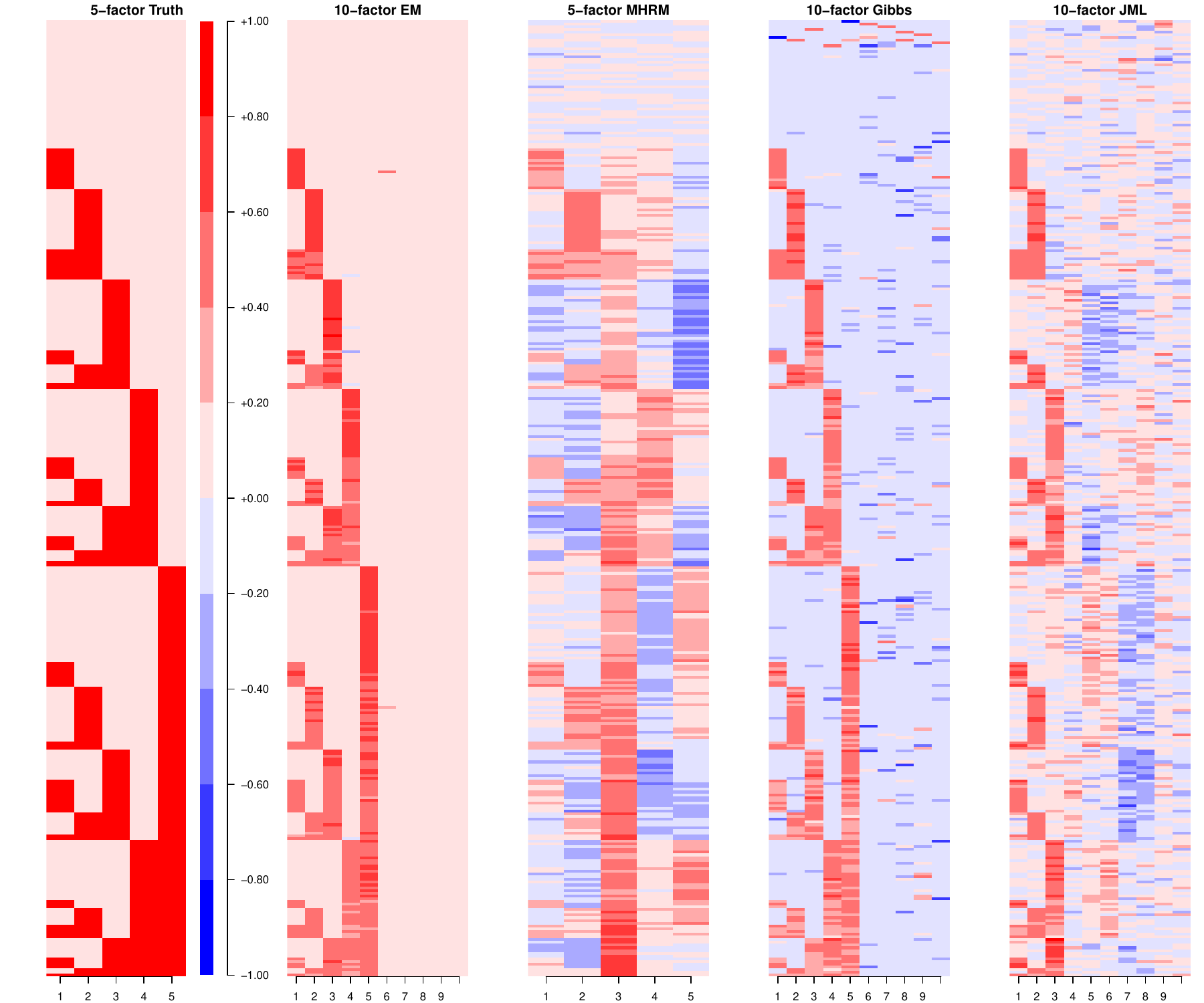}
    \caption{IBP Loading Matrix Estimation: Baseline Models}
    \label{fig:plot4}
\end{figure}

As shown in Table \ref{tab:ibp_models_comparison}, our PXL-EM algorithm achieved the best overall performance in the accuracy of loading estimation, as well as the recovery of true zeros. One might have noticed that PXL-EM was not dramatically fast in this experiment. It is essential to realize our PXL-EM had already attained matching performance when $\lambda_0=6$. The Gibbs sampler would not scale well with either the sample size or the truncated dimension. The availability of more CPUs can always effectively accelerate our EM algorithm, but may not speed up the Gibbs sampler. As the sample size $N$ increases, we can expect the estimation becomes increasingly accurate. This is due to the IBP loading structure possessing a GLT structure, making it identifiable by Theorem \ref{thm:identification}. In Appendix E.2 , we empirically verify consistency by repeating the same experiment with varying $N$ from $100$ to $1,000$.

\begin{table}[t]
    \centering
    \renewcommand{\arraystretch}{1.0}
    \captionsetup{labelfont=bf, format=hang} 
    \caption{Comparison of Model Performances}
    \begin{threeparttable}
    \begin{tabular}{@{}l S[table-format=1.3] S[table-format=1.3] S[table-format=1.3] S[table-format=1.3] S[table-format=4]@{}}
        \toprule
        {Algorithm} & {Loading MSE} & {Intercept MSE} & {FDR} & {FNR} & {Time (Seconds)} \\
        \midrule
        PXL-EM & 0.01 & 0.022 & 0.009 & 0.000 & 2682 \\
        10-factor Gibbs & 0.018 & 0.037 & 0.542 \tnote{*}   & 0.000 \tnote{*}  & 2385 \\
        JML & 0.165 & 0.079 & \multicolumn{1}{c}{NA} & \multicolumn{1}{c}{NA} & 4 \\
        MHRM & 0.214 & 0.043 & \multicolumn{1}{c}{NA} & \multicolumn{1}{c}{NA} & 1848 \\
        \bottomrule
    \end{tabular}

    \smallskip
    \footnotesize
    \begin{tablenotes}
    \item[*] FDR and FNR for Gibbs samplers were computed after thresholding.
    \end{tablenotes}
    \end{threeparttable}
    \label{tab:ibp_models_comparison}
\end{table}

\vspace{-0.3cm}
\subsection{Massachusetts DESE Data} \label{subsec:dese_experiment}
\vspace{-0.2cm}

To highlight the advantages of our approach in educational assessment,  we utilized student item response data from Massachusetts Department of Elementary and Secondary Education (DESE). The DESE datasets encompass students' item response to both the English and the Mathematics exam items administrated in Massachusetts in 2022. We focus on the grade 10 students who participated in both the English and Math exams, resulting in a total sample size $N=60,918$. We omitted items with unconventional scores, such as the writing scores in the English exams for simplicity. This resulted in a dataset with 21 English items and 32 math items eligible for analysis. \footnote{Exam questions are available here: \url{https://www.doe.mass.edu/mcas/2022/release/}.}

Considering that a certain level of English skill is necessary to solve math items, but not necessarily vice versa, we anticipated that math items load on both the English and math factors. However, the math factors for the English items should be negligible. To assess whether our approach would successfully recover such a sparse structure, we fit the DESE dataset with our PXL-EM algorithm, and set the truncated dimension as $10$. Setting $\lambda_1=0.1$, we deployed the procedure of dynamic posterior exploration with convergence claimed when $|B^{(t+1)} - B^{(t)}|_{\infty} < 0.04$. Despite the large sample size, the PXL-EM algorithm converged in just $3,483$ seconds, whereas the competing Bayesian method using Gibbs sampling took over $15$ hours. This highlights the computational efficiency of our E-step in large N setting. For the frequentist approaches, the JML and MHRM estimation algorithms took 2,055 and 8,225 seconds, respectively, to fit the data.

Figure \ref{fig:plot5} presents the estimated factor loading matrix for each model. We've labeled each row to indicate its corresponding item type as either English or math. Notably, our PXL-EM algorithm demonstrates a distinct pattern: high factor loadings for the secondary factor commence at item m$22$, aligning precisely with our expectations. Impressively, all math items exhibit substantial positive loadings on the second factor, yet the second factor for the English items are negligible, rendering our estimated loading matrix considerably more interpretable compared to the other models. In contrast, both the Gibbs sampler and the JML estimation algorithm managed to identify two prominent components, but neither of them successfully captured the distinction between the math and the english items. More importantly, for all the competing exploratory methods, their loading estimates for the extra dimensions $3$-$10$ significantly deviate from zero, making it very difficult to determine either the number of factors, or the potential latent structure. 

\begin{figure}[t]
    \centering
    \includegraphics[width=0.7\textwidth, height=0.45\textwidth]{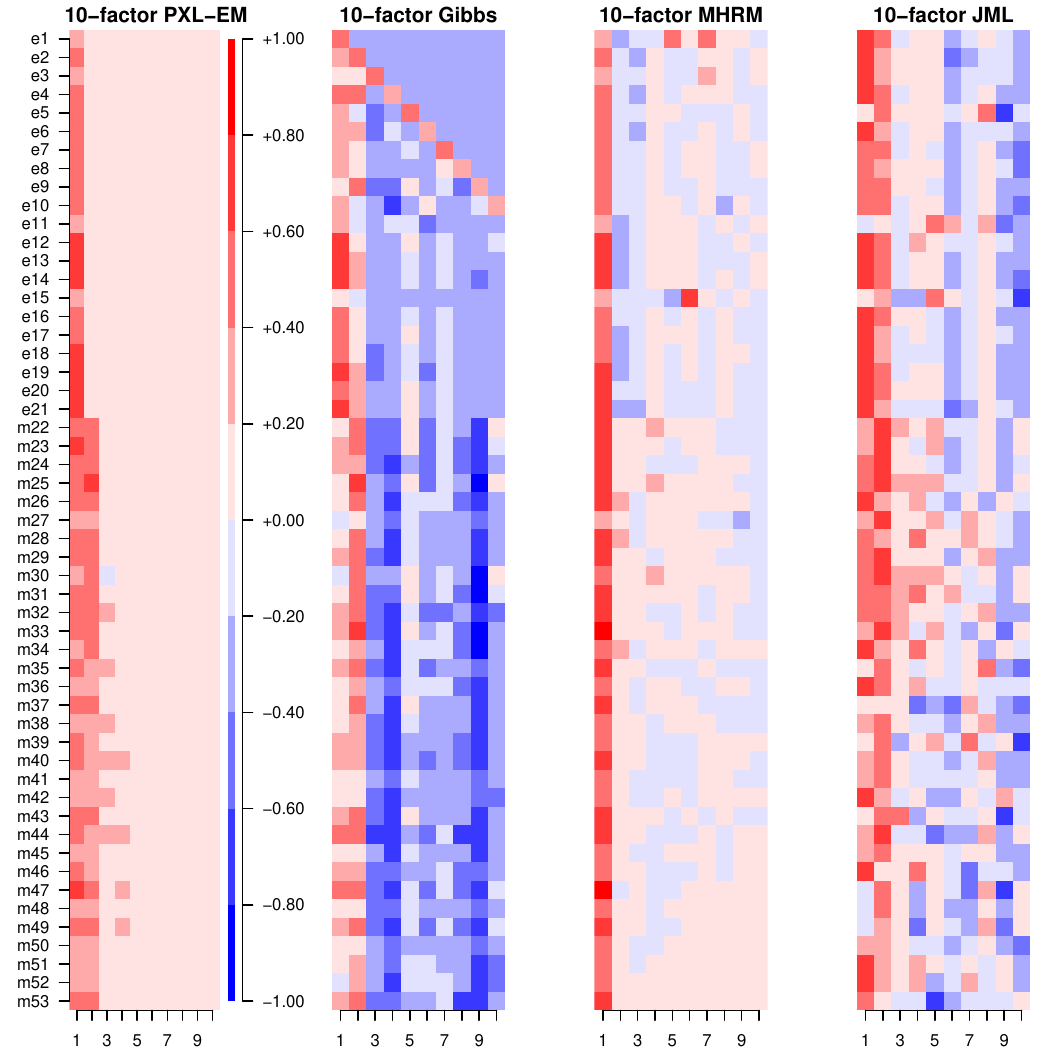}
    \caption{DESE Loading Matrix Estimation}
    \label{fig:plot5}
\end{figure}

Observant readers might notice that our estimated loadings for the math items are noisier than the  English ones, and are not entirely zero across dimensions $3$ to $4$. This is intuitive as DESE classifies these 32 math items into four categories: algebra (13 items) , geometry (13 items), number \& quantity (3 items) and statistics (3 items). There were only $11$ nonzero estimates out of $32$ total math items for the fourth factor, among which only $4$ items exhibit loadings higher than $0.2$. Interestingly, all $3$ rare statistics items have loadings higher than $0.2$ on the fourth factor, which cannot be coincidental. These intriguing patterns suggest that the extra factors might be associated with the detection of rare skills, rather than simply being an artifact of estimation noise. Tables of exact factor loading estimates for each item can be found in the Appendix F.

This experiment emphasizes the benefits of our proposed exploratory MIRT framework in producing interpretable and sparse solutions, as well as detecting the nuances of rare latent math skills in a large $N$ setting. Our method eliminates the need for expert item categorization into rigid subdomains, and is able to make interesting and objective discoveries from data that might be unaware by experts. 

\vspace{-0.5cm}
\subsection{Quality of Life Measurement} \label{subsec:qol_experiment}
\vspace{-0.2cm}

We demonstrate the potential of our approach by applying it to a more intricate Quality of Life (QOL) dataset discussed in Lehman \cite{Lehman1988AQO}. This dataset comprises responses from $586$ individuals with chronic mental illnesses, who answered $35$ items related to their life satisfaction across seven distinct subdomains (family, finance, health, leisure, living, safety, and social). In the psychological measurement literature, Gibbons et al. \cite{bifactor_graded} showcased the advantages of the bifactor model in a confirmatory analysis. We replicated their analysis, and the estimated bifactor loading structure is visualized in the leftmost plot of figure \ref{fig:plot6}.
\begin{figure}[t]
    \centering
    \includegraphics[width=0.85\textwidth, height=0.5\textwidth]{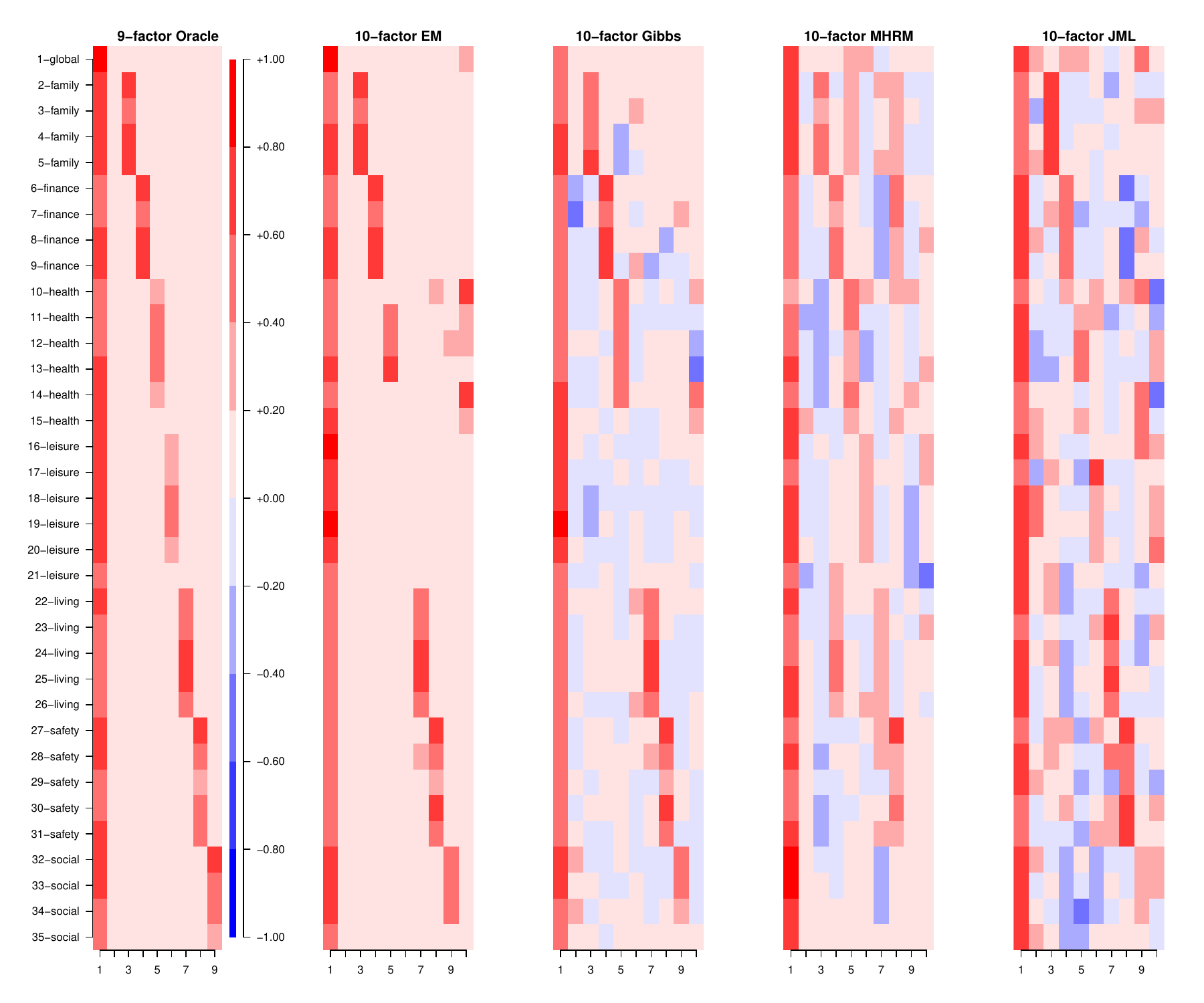}
    \caption{QOL Loading Matrix Estimation}
    \label{fig:plot6}
\end{figure}

Our goal is to determine whether our unconstrained PXL-EM algorithm can reveal a similar bifactor structure without any confirmatory guidance. This particular instance presents a significant challenge, as the algorithm must learn an $8$-dimensional latent space from a mere $35$ available items. We once again adopted an overshot latent dimension of $K^* = 10$ and conducted dynamic posterior exploration. In figure \ref{fig:plot6}, we observe a remarkable resemblance between our estimated latent structure and the bifactor loading pattern. Our algorithm successfully identifies the primary factor and accurately recovers $6$ out of the $7$ subdomains, only with the exception of the ``Leisure'' subdomain. It also correctly discerns the actual latent dimensionality, as dimensions $2$ and $6$ are entirely zeros. 

The loading matrix produced by our PXL-EM algorithm only differs from the oracle bifactor structure in two subdomains: the additional dimensionality for the ``Health'' items, and the absence of ``Leisure'' factor account for the FDR and FNR respectively as shown in Table \ref{tab:qol_table}. In contrast, both the MHRM algorithm and the Gibbs sampler utilized all $10$ dimensions, hence overfitting the item response data and having a marginally lower in-sample reconstruction MSE. The JML estimation method converged dramatically fast for this small $N$ dataset, but obtained relatively high estimation MSE, and a noisy loading representation. In contrast, the pink areas visualized by our PXL-EM algorithm are exact zeros. We provide the detailed loading matrices and more discussions in Appendix G. 

\begin{table}[t]
    \centering
    \renewcommand{\arraystretch}{1.0}
    \captionsetup{labelfont=bf, format=hang}
    \caption{Model Performances for QOL Data}
     \begin{threeparttable}
    \begin{tabular}{@{}l S[table-format=1.3] S[table-format=1.3] S[table-format=1.3] S[table-format=4]@{}}
        \toprule
        {Algorithm} & {FDR\tnote{*}} & {FNR\tnote{*}} & {Reconstruction MSE} & {Time (Seconds)} \\
        \midrule
        PXL-EM & 0.092 & 0.157 & 0.102 & 411 \\
        Sparse Gibbs & 0.678 & 0.057 & 0.096 & 410 \\
        MHRM & NA & NA & 0.088 & 350 \\
        JML & NA & NA & 0.156 & 3 \\
        \bottomrule
    \end{tabular}
    \smallskip
    \footnotesize
    \begin{tablenotes}
    \item[*]  We performed thresholding for the Gibbs sampler, but not for the PXL-EM.
    \end{tablenotes}
    \end{threeparttable}
     \label{tab:qol_table}
\end{table}

Essentially, if the goal of exploratory analysis is to cluster items into distinct types, our PXL-EM algorithm clearly demonstrates distinct characteristics among all seven subdomains. This experiment showcases the power of our proposed EM approach to learn high-dimensional latent structures with very few items, highlighting its unique advantages compared to other state-of-the-art exploratory factor analysis strategies.

\vspace{-0.3cm}
\subsection{Ordinal bio-behavioral Data} \label{subsec:bsnip_experiment}

To highlight the potential of our methodology in uncovering new scientific insights, we apply it to a challenging ordinal bio-behavioral dataset with intricate latent structures. Recent research \cite{Stan2020} utilized the confirmatory MIRT bifactor model to identify strong negative correlations between cortical thickness and severity of clinical symptoms. Our goal is to determine whether our exploratory data-driven framework can make similar discoveries without the need to prespecify a loading structure and the number of factors, which typically require multiple stages of statistical analysis in a confirmatory setting.

Following the analysis in \cite{Stan2020}, we investigate the relationship between the 16 cerebral cortex regions (ordinal scale 1-5) and 51 symptom items across 5 subdomains: MADRS (depression; 10 items), PANSS-P (positive psychosis; 7 items), PANSS-N (negative psychosis; 7 items), PANSS-G (general psychosis; 16 items), and Young (mania; 11 items). The symptom items vary in ordinal scales from 1 to 8. We consider participants who have both complete cortical thickness items and clinical symptom items, resulting in a sample size of $N=575$. We set truncation level $K^*=10$, and deploy our PXL-EM algorithm via dynamic posterior exploration. The estimated loading matrix along with the ones produced by the other competing methods are presented in figure \ref{fig:plot6}. \footnote{We did not include the JML method this time because it does not seem to handle ordinal datasets.}

As indicated by the color palette, the red, light pink, and blue regions represent negative, zero, and positive loadings respectively. Notably, the loading matrix estimated by PXL-EM is significantly sparser and more interpretable compared to those produced by the other methods: the first 1-3 dimensions represent the primary bio-behavioral dimensions of interest, shared by both the cortical thickness items (first 16 rows) and the clinical symptom items (last 51 rows). Dimension 4 is an independent factor for cortical thickness, while dimensions 5-10 are shared by the 51 clinical symptoms.

\begin{figure}[t]
    \centering
    \includegraphics[width=0.75\textwidth, height=0.55\textwidth]{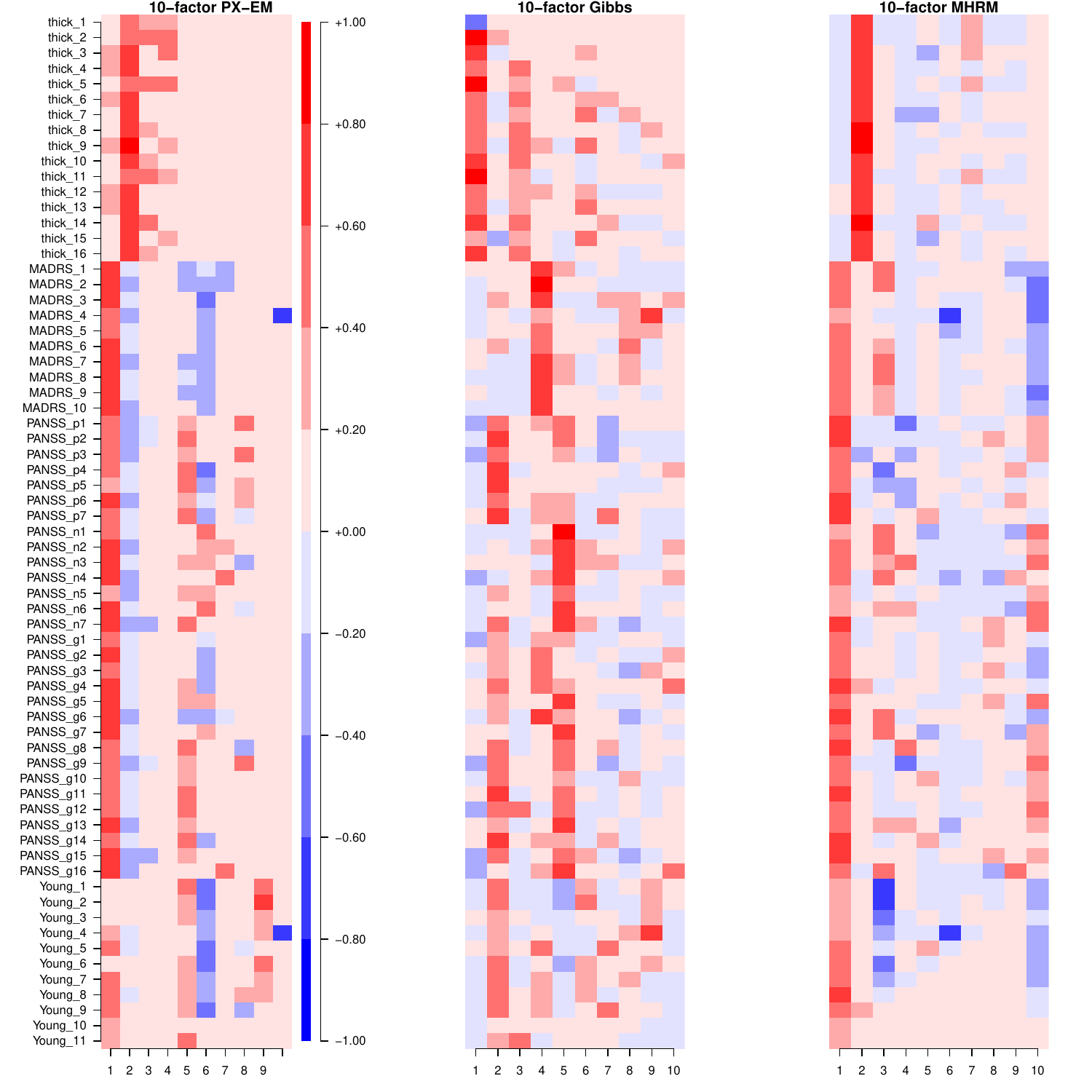}
    \caption{Estimated bio-behavioral Factor Loading Matrices}
    \label{fig:plot6}
\end{figure}

To investigate the relationship between cortical thickness and clinical psychotic manifestations, it is essential to identify the (bio-behavioral) factors that are shared by all the items. Such analysis is unfeasible based on the factor loadings produced by the Gibbs sampler and the MHRM algorithm, where all the items are associated with all the $10$ factors, because these methods neither produce exact zeros nor can determine the number of factors. In contrast, our PXL-EM algorithm simplifies this investigation by identifying only three bio-behavioral dimensions, with the remaining seven dimensions independently loaded by either the biological or clinical features. More importantly, the second bio-behavioral dimension clearly highlights the negative correlation between cortical thickness and the severity of clinical symptoms, as the cortical thickness items are predominantly red, while the clinical symptoms items are overwhelmingly blue. 

In addition to identifying the negative correlation, we are also able to pinpoint the exact symptoms that are highly correlated with reduced cortical thickness. The research in \cite{Stan2020} highlights seven symptom items with highly negative factor loadings lower than -0.25: delusions (PANSS-p1), hallucinatory behavior (PANSS-p3), suspiciousness (PANSS-p6), depression (PANSS-g6), unusual thought content (PANSS-g9), active social avoidance (PANSS-g16), and passive-apathetic social withdrawal (PANSS-n4). Remarkably, by examining the second bio-behavioral dimension, we identify six symptoms with factor loadings lower than -0.25, each of which is among the seven highlighted symptoms listed above. The only missing item is PANSS-g6, which still has a very negative loading of -0.24. We invite readers to refer to the tables in Appendix H for the detailed estimates and comparison.

This experiment demonstrates the generality and effectiveness of our approach in handling complex bio-behavioral ordinal datasets with nuanced latent structures. Not only can we generate a sparse and interpretable solution unattainable by other competing estimation methods, but our scientific discoveries also closely align with the established research findings. Moreover, our exploratory data-driven approach avoids making assumptions about the loading structure, which is particularly valuable in settings where such assumptions are too ambiguous to be made, as is often the case in scientific inquiries.

\vspace{-0.7cm}
\section{Discussion} \label{sec:discussion}
\vspace{-0.3cm}

This article advocates for a novel Bayesian framework for estimating exploratory sparse MIRT models. The combination of the Indian Buffet Process prior and the spike-and-slab LASSO prior allows our EM algorithm to simultaneously learn latent dimensionality and sparse representations of factor loadings, which have not been sufficiently addressed in the categorical actor analysis literature. The efficiency of the algorithm is driven by the adaptation of parameter expansion, as well as by an efficient sampling strategy during the E-step, where we explicitly derived latent factor posterior distributions. We further demonstrate the advantage of adopting a probit link for establishing identification results and provide theoretical guarantees for recovering the true number of factors and contraction towards the true factor loading matrix under our prior specification. Our approach is capable of accommodating mixed data types, and we showcase the effectiveness of our approach on a challenging synthetic dataset and three real-world datasets from different domains. In each experiment presented in Section \ref{sec:empirical}, our algorithm consistently generates sparse and interpretable estimations of the factor loading matrices, demonstrating its unique strength compared to other popular exploratory MIRT estimation frameworks.

While we focus on the estimation of the factor loadings and intercepts, our framework also has direct implications for latent trait measurements. For a well-calibrated set of $J$ items where the item parameters are known, Theorems \ref{thm:3.2} essentially provides an analytic form of the latent factor posterior distribution of any new observation $y \in \mathbbm R^J$. This avoids the additional computational need to perform MCMC sampling in order to conduct posterior inference on latent traits. Compared to the other MIRT estimation strategies, this can potentially save tremendous amount of computational time when the number of observations is large. Furthermore, many important properties of the unified skew-normal distribution such as the mean and covariance are readily available in \cite{sun}, making exact Bayesian inference possible. For more complicated objects of inferential interest, one may also consider the efficient posterior sampling strategies outlined in Section \ref{subsec:e-step} to compute credible intervals. In Appendix G.2, we conducted a comparative analysis of the estimated latent factors and their posterior variances between our proposed PXL-EM approach and the oracle bifactor model for the QOL dataset, highlighting the potential of our approach in posterior factor inference and psychological measurements.

Several promising research directions can shape the future of our approach. First, advancements in evaluating high-dimensional Gaussian cumulative distribution functions could eliminate the need for Monte-Carlo simulations in the E-step and factor rotation step. This improvement would not only speed up computation but also simplify Bayesian inference of latent factors following factor loading estimation. Moreover, we can enhance the flexibility of our Bayesian model. Recent work by Anceschi et al. \cite{doi:10.1080/01621459.2023.2169150} has demonstrated posterior conjugacy in probit and tobit models, allowing the prior for coefficient parameters to follow a unified-skew normal distribution (SUN). This adjustment could further refine the modeling capabilities of our approach, with implications for various applications.

While our methodology can handle mixed continuous, binary, and ordinal responses, it would be interesting to extend it to other item types such as count data, or to embed our Bayesian model into generalized mixed-models for longitudinal data. One can derive the latent factor posterior distribution using techniques similar to those in Theorem \ref{thm:3.2}, in order to create a feasible E-step to estimate various factor models of interest. 

\vspace{-0.7cm}
\section*{Acknowledgments}
\vspace{-0.4cm}

The authors sincerely thank the Associate Editor and four anonymous referees for their insightful comments and valuable suggestions, which have significantly enhanced the quality of this paper. Additionally, Robert D. Gibbons acknowledges support from the National institute on Aging (NIA-R56 AG084070), and Veronika Ro\v{c}kov\'{a} acknowledges support from the National Science Foundation (NSF-DMS 1944740).
\vspace{-0.7cm}
\section*{Disclosure Statement}
\vspace{-0.4cm}
The authors report there are no competing interests to declare.

\addtolength{\textheight}{-.3in}%

\spacingset{1.2}
\bibliographystyle{chicago}
\bibliography{reference_all}

\clearpage
\bigskip
\spacingset{1.0}

\begin{center}
{\Large{\textbf{Supplementary Material: Sparse Multidimensional Item Response Theory }}}
\end{center}

\appendix

\renewcommand{\baselinestretch}{1.8} \normalsize
\section{Gibbs Sampler with Adaptive Spike-and-Slab Prior} \label{subsec:gibbs}
We provide the derivation of our sparse Gibbs sampler in details. Following similar notations as in the paper: let $B \in \mathbbm{R}^{J \times K}$ be the factor loading matrix,  $Y \in \mathbbm{R}^{N \times J}$ be the item response data, $\Gamma \in \mathbbm{R}^{J \times K}$ be the latent allocation matrix, $D \in \mathbb R^{J}$ be the intercept parameters, and $\Theta = [\theta_1, \cdots, \theta_n]' \in \mathbbm{R}^{N \times K}$ be the matrix of latent factors. Additionally, we introduce the P\'{o}lya-Gamma augmentation matrix  $\Omega= \{w_{ij}\} \in \mathbbm{R}^ {N\times J}$,  in which each element  $w_{ij} \sim \text{PG}(1,0)$.

Many of our derivation steps are adapted from \cite{pg_mirt}. The key derivation idea is explored in \cite{pg}: for a logistic regression with covariate $x_i \in \mathbb R^J$, coefficient vector $\beta \in \mathbb R^J$, outcome variable $y \in \mathbb R^N$, and P\'{o}lya-Gamma random variables $w=[w_{1}, w_{2}, \ldots, w_{N}]$, we can express the posterior of $\beta$ as following:
$$
\begin{aligned}
p({\beta} \mid {w}, {y}) & \propto p({\beta}) \prod_{i=1}^{N} \exp \left\{k_{i} {x}_{i}^{T} {\beta}-\frac{w_{i}\left({x}_{i}^{T} \beta\right)^{2}}{2}\right\} \\
& \propto p({\beta}) \exp \left\{-\frac{1}{2}(z-{X} {\beta})' {\tilde \Omega}(z-{X} {\beta})\right\}
\end{aligned},
$$
where $k=(w_1-\frac{1}{2}, \cdots, w_N- \frac{1}{2})$, $z = (k_1/w_1, \cdots, k_N/w_N)$, and $\tilde \Omega = \text{diag}(w_1, \cdots, w_N)$, and $X \in \mathbbm R^{N\times J}$ is the data.

Our Gibbs sampler iteratively sample model parameters using the following four steps: 

\subsection{Gibb Sampling Latent Trait $\theta_i$}

Let $y_i \in \mathbbm{R}^J$ be student $i$'s item response ($i$-th row of $Y$), $k_i =[y_{i1} - \frac{1}{2}, \cdots, y_{iJ}- \frac{1}{2}]' \in \mathbbm{R}^J$, $\Omega_i = \text{diag}(w_{i1}, \cdots, w_{iJ})$, and $z_i= [k_{i1}/w_{i1},\cdots, k_{iJ}/w_{iJ}]'$.

Given factor loading matrix $B$, Intercept $D$, P\'{o}lya-Gamma augmentation matrix $\Omega$, item response data $Y$, the conditional posterior distribution for $\theta_i$ can be obtained as:
$$
\begin{aligned}
\pi(\theta_i | B, D, \Omega, Y) & \propto p(\theta_i) \exp \{-\frac{1}{2}(z_i-(B\theta_i + D))' \Omega_i (z_i-(B\theta_i + D)) \} \\
& \propto p(\theta_i) \exp \{-\frac{1}{2}((z_i-D) - B\theta_i))' \Omega_i ((z_i-D) - B\theta_i)) \}  .
\end{aligned}
$$

Here we have $p(\theta_i) \sim N(0, \mathbbm{I}_k)$, likelihood can also be treated as Gaussian density in $\theta_i$, it follows the posterior is another Multivariate Gaussian $N(\mu_{\theta_i}, \Sigma_{\theta_i})$ where
\vspace{-0.3cm}
\begin{gather*}
\Sigma_{\theta_i} = (B' \Omega_i B + \mathbbm{I_k})^{-1},\\
\mu_{\theta_i} = \Sigma_{\theta_i}(B'\Omega_i(z_i-D)).
\end{gather*}

\subsection{Gibb Sampling Intercept D}

Let $y_j \in \mathbbm{R}^N$ be all student's item response for question $j$, ($j$-th column of $Y$). $k_j =[y_{1j} - \frac{1}{2}, \cdots, y_{Nj}- \frac{1}{2}]' \in \mathbbm{R}^N$, $\Omega_j = \text{diag}(w_{1j}, \cdots, w_{Nj})$, and $z_j= [k_{1j}/w_{1j},\cdots, k_{Nj}/w_{Nj}]'$. Let $B_j \in \mathbbm{R}^k$ be the $j$-th column in $B$.

Given factor loading matrix $B$, latent traits $\Theta$, P\'{o}lya-Gamma augmentation matrix $\Omega$, item response data $Y$, the conditional posterior distribution for $d_j$ can be obtained as:
$$
\begin{aligned}
\pi(d_j | B, \Theta, \Omega, Y) & \propto p(d_j) \exp \{-\frac{1}{2}(z_j-(\Theta B_j + \mathbbm{1} d_j))' \Omega_j (z_i-(\Theta B_j + \mathbbm{1} d_j)) \} \\
& \propto p(d_j) \exp \{-\frac{1}{2}((z_j-\Theta B_j) - \mathbbm{1} d_j)' \Omega_j ((z_i-\Theta B_j) - \mathbbm{1} d_j)) \}  .
\end{aligned}
$$

Here we have $p(d_j) \sim N(0, 1)$, likelihood can also be treated as the product of $N$ normal density of $d_j$, which makes the overall posterior still have Gaussian density. To see this, write $V= Z_j-\Theta B_j$, we have
$$\exp \{-\frac{1}{2}((z_j-\Theta B_j) - \mathbbm{1} d_j)' \Omega_j ((z_i-\Theta B_j) - \mathbbm{1} d_j)) \} \propto \prod_{i=1}^n \exp \{- \frac{1}{2} w_{ij} (V_i-d_j)^2\}.$$

Hence each term in the likelihood function follows a normal density $N(V_j, \frac{1}{w_{ij}})$.

\subsection{Gibbs Sampling Loading}

\subsubsection{Baseline: Normal Prior on $B$}

We start by considering the case when only the normal prior is assumed for $B$. Let $B_{j(-k)} \in \mathbbm{R}^{k-1}$ be the vector $B_j$ excluding its $k$ component. Given intercept $D$, latent traits $\Theta$, P\'{o}lya-Gamma augmentation matrix $\Omega$, item response data $Y$, and $B_{j(-k)}$, the joint conditional posterior distribution for $B_{jk}$ can be obtained as:
$$
\begin{aligned}
\pi(B_{jk} | D, \Theta, \Omega, Y, B_{j(-k)}) & \propto p(B_{jk})  \exp \{-\frac{1}{2}(z_j-(\Theta_{-k} B_{j(-k)} + \Theta_k B_{jk} + \mathbbm{1} d_j))' \\
 & \qquad \Omega_j (z_j-(\Theta_{-k} B_{j(-k)} + \Theta_k B_{jk} + \mathbbm{1} d_j)) \} \\
& \propto p(B_{jk})  \exp \{-\frac{1}{2}((z_j- \Theta_{-k} B_{j(-k)} - \mathbbm{1} d_j) - \Theta_k B_{jk}))' \\
 & \qquad \Omega_j ((z_j-\Theta_{-k} B_{j(-k)} - \mathbbm{1} d_j) -  \Theta_k B_{jk}) \}  .
\end{aligned}
$$
As before, we expect the likelihood to have normal density as it can be written as the product of $N$ Gaussian densities: write $V= Z_j-\Theta_{-k} B_{j(-k)} -\mathbbm 1 d_j$ and $\Theta_{k}^{i}$ be the $i$-th element of the vector $\Theta_{k}$, we have:
\vspace{-0.3cm}
    \begin{align*}
    & \exp \{-\frac{1}{2}((z_j- \Theta_{-k} B_{j(-k)} - \mathbbm{1} d_j) - \Theta_k B_{jk}))'\Omega_j ((z_j-\Theta_{-k} B_{j(-k)} - \mathbbm{1} d_j) -  \Theta_k B_{jk}) \}   \\
    & \propto \prod_{i=1}^n \exp\{-\frac{1}{2}w_{ij}(V_i - \Theta_{k}^{i} B_{jk})^2\} \\
    & \propto \prod_{i=1}^n \exp\{-\frac{1}{2} (\Theta_{k}^{i})^2 w_{ij}(\frac{V_i}{\Theta_{k}^{i}} - B_{jk})^2\}.
    \end{align*}
    It follows each term $i$ in the likelihood part follows the normal density $N(\frac{V_i}{\Theta_{k}^{i}}, \frac{1}{w_{ij} (\Theta_{k}^{i})^2 })$.

\subsubsection{Adaptive Spike-and-Slab Gaussian Prior on $B$}

We consider imposing the adaptive Spike-and-Slab Gaussian prior from \cite{ir_ss} on factor loading matrix $B$. Note this is the Gibbs sampler we implemented to conduct all our experiments: 
\vspace{-0.3cm}
\begin{gather*}
B_{jk}|\gamma_{jk}, \tau_{jk}^2 \sim  N(0, \gamma_{jk} \tau_{jk}^2), \\
\gamma_{jk} | v_0, \theta \sim (1-\theta) \delta_{v_0}(.) + \theta \delta_1(.), \\
\tau_{jk}^2 | a_1, a_2 \sim \text{InverseGamma}(\alpha_1, \alpha_2),\\
\theta \sim \text{Uniform}[0, 1].
\end{gather*}
To sample $B_{jk}$ at step $i$, our Gibbs sampler would perform the following steps

\begin{itemize}
    \item Sample $B_{jk}^{(i)}$:
    $$
\begin{aligned}
\pi(B_{jk}^{(i)} | \gamma_{jk}^{(i-1)}, \theta^{(i-1)}, \tau_{jk}^{2(i-1)}, D, \Theta, \Omega, Y, B_{j(-k)}) & \propto p(B_{jk}^{(i)}| \gamma_{jk}^{(i-1)}, \tau_{jk}^2) p(\gamma_{jk}^{i-1} | \theta^{(i-1)}) p(\theta^{(i-1)}) \times \\
& \quad \exp \{-\frac{1}{2}((z_j- \Theta_{-k} B_{j(-k)} - \mathbbm{1} d_j)  - \Theta_k B_{jk}))' \\
 & \qquad   \Omega_j ((z_j-\Theta_{-k} B_{j(-k)} - \mathbbm{1} d_j) -  \Theta_k B_{jk}) \} .
\end{aligned}
$$
\item  Sample $\gamma_{jk}^{(i)}$:
\begin{align*}
&\pi(\gamma_{jk}^{(i)} =  1 | \beta_{jk}^{(i)}, \theta^{(i-1)}, \tau_{jk}^{(i-1)}) =  \\
& \qquad \frac{p(B_{jk}^{(i)}| \gamma_{jk}=1, \tau_{jk}^{(i-1)}, \theta^{(i)})p(\gamma_{jk}^{(i)} =1 | \theta)}{p(B_{jk}^{(i)}| \gamma_{jk}=1, \tau_{jk}^{(i-1)}, \theta^{(i)})p(\gamma_{jk}^{(i)} =1 | \theta) + p(B_{jk}^{(i)}| \gamma_{jk}=v_0, \tau_{jk}^{(i-1)}, \theta^{(i)})p(\gamma_{jk}^{(i)} =v_0 | \theta)}.
\end{align*}
\item Sample $\tau_{jk}^{(i)}$
\begin{align*}
    \pi(\tau_{jk}^2 | B_{jk}, \gamma_{jk}, \theta) & \propto p(B_{jk}| \tau_jk^2, \gamma_{jk}) p(\tau_{jk}^2) \\
    & \propto (\gamma_{jk} \tau_{jk}^2)^{-\frac{1}{2}} \exp\{-\frac{1}{2\gamma_{jk} \tau_{jk}^2} B_{jk}^2\} (\gamma_{jk}^2)^{-\alpha_1-1} \exp\{-\frac{\alpha_2}{\gamma_{jk}^2}\} \\
    & \propto (\tau_{jk}^2)^{-(\alpha_1+\frac{1}{2})-1} \exp\{-(\frac{B_{jk}^2}{2\gamma_{jk}}+\alpha_2) / \gamma_{jk}^2 \}.
\end{align*}
It follows we can sample $\tau_{jk}^{2(i)}$ from InverseGamma($\alpha_1+\frac{1}{2}$, $\frac{B_{jk}^2}{2\gamma_{jk}}+ \alpha_2$).

\item Sample $\theta^{(i)}$:
\begin{align*}
\pi(\theta^{(i)} | \{\gamma_{jk}^{(i)}\}) & \propto \theta^{\#\{\gamma_{jk}=1\}} (1-\theta)^{\# \{\gamma_{jk}= v_0\}} \\
& \propto \theta^{\sum_j \sum_k \gamma_{jk}} (1-\theta)^{JK- \sum_j \sum_k \gamma_{jk}} \\
& \propto \text{Beta}(\sum_j \sum_k \gamma_{jk}+1 , JK- \sum_j \sum_k \gamma_{jk} + 1).
\end{align*}
    
\end{itemize}

\subsection{Updating P\'{o}lya-Gamma Variables}
Based on the updates in \cite{pg}, we have $$\pi(w_{ij}|B, \Theta, D) \sim \text{PG}(1, B_j' \theta_i+d_j ).$$

\section{Sparse Bayesian MIRT: Model Properties}

\subsection{Proof of Theorem 3.3} \label{subsec:id_proof}

\begin{proof}
    \textbf{Part (a)}: Suppose two sets of model parameters $(A, D_1)$ and $(B, D_2)$ define the same probit factor model, we want to show $D_1 = D_2$ when condition (A) holds. By proposition 3.1, we must have 
    \begin{equation} \label{eqn:linearization}
        \tilde{A}\tilde{A'} + \Sigma_A = \tilde{B}\tilde{B'} + \Sigma_B,
    \end{equation}
    where $\Sigma_A= \text{diag}\{\frac{D_{1,j}}{\sqrt{\|A_j\|^2 +1}} - \frac{\|A_j\|^2}{\|A_j\|^2 + 1}\}_{j=1}^{J}$ and $\Sigma_A= \text{diag}\{\frac{D_{2,j}}{\sqrt{\|B_j\|^2 +1}} - \frac{\|B_j\|^2}{\|B_j\|^2 + 1}\}_{j=1}^{J}$. Observe if we can show the diagonal elements of $\tilde{A} \tilde{A'}$ and $\tilde{B} \tilde{B'}$ are equal, then we are done. To see why, note $\text{diag}(\tilde{A} \tilde{A'}) = \text{diag}(\tilde{B} \tilde{B'})$ implies $\frac{\|A_j\|^2}{\|A_j\|^2 + 1} = \frac{\|B_j\|^2}{\|B_j\|^2 + 1}$ for all $j$, which further implies $\|A_j\|^2= \|B_j\|^2$ for all $j$. In addition, $\text{diag}(\tilde{A} \tilde{A'}) = \text{diag}(\tilde{B} \tilde{B'})$ also implies $\Sigma_A = \Sigma_B$. It then follows we must have $D_{1,j} = D_{2,j}$ for all $j$ and the intercept is identified. Finally, to prove $\text{diag}(\tilde{A} \tilde{A'})$ is indeed equivalent to $\text{diag}(\tilde{B} \tilde{B'})$, note the row deletion condition is imposed on $\tilde{B}$ rather than on $B$, and hence we can  follow the same arguments as appeared in Theorem 5.1 of Anderson and Rubin \cite{Anderson1956StatisticalII} to finish the proof.  
    
    \textbf{Part(b)}: Since the intercept is uniquely identified, we consider two sets of model parameters $(A, D)$ and $(B, D)$ defining the same probit factor model. We first observe if $\tilde{A} = \tilde{B}$, it must be the case $A=B$. This is because for any two K-dimensional vectors $a$ and $b$, the equation $\frac{a}{\sqrt{\|a\|^2+1}} = \frac{b}{\sqrt{\|b\|^2+1}}$ implies $a=b$. Hence it is sufficient to prove the theorem by showing $\tilde{A}= \tilde{B}$. Since $D_{1,j}= D_{2,j}$, the equation in the Proposition 3.1  implies $\|A_j\| = \|B_j\|$, which further implies $\Sigma_A = \Sigma_B$ as defined in equation \ref{eqn:linearization}. It follows we must have $\tilde{A}{A'} = \tilde{B}{B'}$. Since $A$ and $B$ conforms to an ordered GLT structure, we know $\tilde{A}$ and $\tilde{B}$ also follows an ordered GLT structure. Then we can simply apply Theorem 1 of \cite{Fr_hwirth_Schnatter_2023} to obtain $\tilde{A} = \tilde{B}$, which implies $A=B$.    
\end{proof}

\subsection{Proof of Theorem 3.4} \label{subsec:proof-ld}

\begin{proof}
We denote the Kullback-Leibler divergence between two probability densities $f$ and $g$ with respect to a measure $\mu$ as  $K(f,g) = \int f \log (f/g) d\mu$, and the variation measure $V_{2,0}(f, g) = \int f |\log(f/g) -K(f,g)|^2 d \mu$.
Define event $E = \{B : K(B) > DS_n K_{0n} \}$ and the likelihood $f_{\pi}(y_{ij}) = [\Phi(B_j' \theta_i + d_j)]^{y_{ij}} [1-\Phi(B_j' \theta_i + d_j)]^{1- y_{ij}}$ under the prior distribution of $\Pi := (B, \Theta)$. Similarly, let $\Pi_0$ be the sequence of true parameters distributions. Then we have
\begin{equation} \label{eq:evidence-lb}
    P(E | Y^{n}) = \frac{\int_E \prod_{i=1}^N \prod_{j=1}^J f_{\pi}(y_{ij})/f_{\pi_0}(y_{ij}) d \Pi }{\int \prod_{i=1}^N \prod_{j=1}^J f_{\pi}(y_{ij})/f_{\pi_0}(y_{ij}) d \Pi} := \frac{N_n}{D_n}.
\end{equation}
We further define $A_n \in \sigma(Y^{n})$ (the sigma field generated by the observed data) as follows:
\begin{align} \label{eq:kl-concentration}
  A_n:= \{ B \in \mathbbm R^{J_n \times \infty}, \Theta \in \mathbbm R^{N \times \infty}:  \frac{1}{NJ_n} \sum_{i, j} K(f_{\pi_0}(y_{ij}), f_{\pi}(y_{ij})) \leq \frac{ K_{0n} S_n \log(J_n)}{N J_n},\\
   \frac{1}{N J_n} \sum_{i, j} V_{2,0}(f_{\pi_0}(y_{ij}), f_{\pi}(y_{ij})) \leq \frac{ K_{0n} S_n \log(J_n)}{N J_n}\}
\end{align}
Note while each component of observation $y_i \in \mathbbm R^j$ is not identically distributed, they are still independent. Hence the KL divergence term is additive and the variation measure term is subadditive up to $O(1)$ (section 8.3.1 of \cite{GhosalVaart2017}). When conditional on the event $A_n$,  we can apply Lemma 10 of \cite{noniid} to lower bound $D_n$, and obtain for any $c>0$:
\begin{equation} \label{eq:evidence-lb-2}
    P_{\Pi_0}^{(n)}(D_n \geq \pi(A_n) \cdot e^{-(1+c)  K_{0n} S_n \log(J_n) } ) \geq 1- \frac{1}{c^2  K_{0n} S_n \log(J_n)} \xrightarrow[]{n \to \infty} 1
\end{equation}

Furthermore, by Theorem $2$ of \cite{bfa}, we can upper bound the probability of event E such that $P(E) \precsim e^{-DK_{0n} S_n \log(J_n + 1)}$. Hence we have:
\begin{equation} \label{eq:con_pos}
    \mathbbm{E}_{\Pi_0} [P(E | Y^{n}) \mathbbm{1}(A_n)] \precsim \frac{e^{-DK_{0n} S_n \log(J_n + 1)}}{\pi(A_n) \cdot e^{-(1+c)  K_{0n} S_n \log(J_n) } }
\end{equation}
Observe if we can show $\pi(A_n) \gtrsim e^{- K_{0n} S_n \log(J_n)}$, then with $D$ large enough, the above equation will converge to $0$ and we will be done. To lower bound $\pi(A_n)$, we may compute the KL divergence and the variation terms directly and then bound their maximums in a similar fashion as in the proof of lemma 1 of \cite{glm_contraction}, but under a random design matrix: let $\eta_{ij} := B_j' \theta_i$, $\eta_{0ij} := B_{0j}' \theta_{0i}$, $h(x):= \Phi^{-1}(x)$, $b(x) := \log(1+e^x)$, and $\xi(x) := (h \circ b')^{-1} (x)$. Then it is easy to see for each $(i,j)$ pair, we have:
\begin{align} \label{eq:kl-explicit}
    & K(f_{\pi_0}(y_{ij}), f_{\pi}(y_{ij})) = (\xi(\eta_{0_{ij}}) - \xi(\eta_{{ij}}))b'(\xi(\eta_{0_{ij}})) - b(\xi(\eta_{0_{ij}})) + b(\xi(\eta_{{ij}})), \\
    & V_{2,0}(f_{\pi_0}(y_{ij}), f_{\pi}(y_{ij})) = b^{''}(\xi(\eta_{0_{ij}}))(\xi(\eta_{0_{ij}}) - \xi(\eta_{{ij}}))^2.
\end{align}
Observe that $b^{''}(\xi(\eta_{0ij})) \xi'(\eta_{0ij}) = (h^{-1})'(\eta_{0ij})$ by definition, if we conduct Taylor expansion at $\eta_{0ij}$ for the two terms above as functions of $\eta_{ij}$, we have
\begin{align} \label{eqn:klv_bound}
  & \max\{K(f_{\pi_0}(y_{ij}), f_{\pi}(y_{ij})), V_{2,0}(f_{\pi_0}(y_{ij}), f_{\pi}(y_{ij}))  \} \notag \\
  & \quad \leq b^{''} (\xi(\eta_{0ij}))(\xi'(\eta_{0ij}))^2 (\eta_{ij} - \eta_{0ij})^2 + O((\eta_{ij}-\eta_{0ij})^2) \notag \\
  & \quad = (h^{-1})^{'}(\eta_{0ij})(\xi'(\eta_{0ij})) (\eta_{ij} - \eta_{0ij})^2 + O((\eta_{ij}-\eta_{0ij})^2) \notag \\
  & \quad = 1 + \max_{\{i \leq n, j \leq J_n\}} \frac{\phi^2(\eta_{0_{ij}})}{\Phi(\eta_{0ij}) (1-\Phi(\eta_{0ij}))}  + O((\eta_{ij}-\eta_{0ij})^2).
\end{align}
Recall $\gamma_{n}(B_{0n}, \Theta_{0n}) :=  1 + \max_{\{i \leq n, j \leq J_n\}} \frac{\phi^2(\eta_{0_{ij}})}{\Phi(\eta_{0ij}) (1-\Phi(\eta_{0ij}))} $. We can then lower bound $\pi(A_n)$ as follows:
\begin{align*}
    \pi(A_n) & \geq  \pi \{B, \Theta: \gamma_n \max_{i,j} \{ (\eta_{ij} - \eta_{0ij})^2\} \leq \frac{ K_{0n} S_n \log(J_n)}{N J_n}  \} \\
    & \geq \pi\{B, \Theta: \|\Theta B' - \Theta_0 B_0'\|^2_{\infty} \leq \frac{ K_{0n} S_n \log(J_n)}{ \gamma_n N^2 J_n}  \} 
\end{align*}
Observe under the prior distribution $\Pi_0(.)$, if we condition on the event that the effective dimension $B$ is no larger than $K_0$, then each row of the matrix $(\Theta B' - \Theta_{0} B_{0}')$ follows the same distribution as $(B_{K_0} + B_{0, K_0})Z$, where $Z \sim N(0, \mathbbm{I}_{K_0})$. Hence we have $$\|(B_{K_0} + B_{0, K_0})Z\|_{\infty} \leq \|B_{K_0}\|_{\infty} \|Z\|_{\infty} + \|B_{0, K_0}\|_2 \|Z\|_2 \leq (\|B_{K_0}\|_F+\sqrt{\bar{S}_n J_n}) \|Z\|_{\infty},$$
where the first inequality is due to the combination of the triangle and the Cauchy-Schwarz inequalities, and the last inequality is due to restriction of the 2-norm on $B_{0n}$ by assumption B. Conditional on the prior concentration event $\|B\|_F < \sqrt{K_{0n}/n}$, we can further lower bound the expression as follows:
\begin{align*}
    \pi(A_n) \geq P(K \leq K_{0n}) P(\|B\|_F < \sqrt{K_{0n}/n}) \prod_{i=1}^n P(\|Z\|_{\infty}^2 \leq \frac{ K_{0n} S_n \log(J_n)}{ \gamma_n n^2 J_n (\sqrt{K_{0n} / n} + \sqrt{S_n J_n})^2 })
\end{align*}
With the choice $\alpha = \frac{1}{J_n}$, we can again lower bound $P(K\leq K_{0n})$ by Theorem 2 of \cite{bfa}:
$$P(K \leq K_{0n}) \geq (1-C_2e^{-K_{0n} \log(1+J_n)}),$$
where $C_2$ is a suitable constant. Furthermore, by Lemma G.5 of the Appendix in \cite{bfa}, we have the prior concentration of the true loading matrix under the Frobenius norm:
$$P(\|B\|_F \leq \sqrt{K_{0n}/n}) \gtrsim e^{-C_3 (K_{0n} S_n + 1)(1+ \log(J_n + 1))}$$
To lower bound the last product term, note each term can be lower bounded by the smallest density of $K_{0n}$ multidimensional Gaussian multiply by the area of a hypercube, with side length $\sqrt{q_n}$ , where $q_n := \frac{ K_{0n} S_n \log(J_n)}{ \gamma_n n^2 J_n (\sqrt{K_{0n} / n} + \sqrt{S_n J_n})^2 }$:
\begin{align*}
    \prod_{i=1}^n P(\|Z\|_{\infty}^2 \leq \frac{ K_{0n} S_n \log(J_n)}{ \gamma_n n^2 J_n (\sqrt{K_{0n} / n} + \sqrt{S_n J_n})^2 }) &\geq \prod_{i=1}^n q_n^{K_{0n}/2} \inf_{\|x\|_{\infty} \leq \sqrt{q_n}} (2 \pi)^{-K_{0n}/2} \exp \{-\frac{1}{2}x'x\} \\
    &\geq (\frac{q_n}{2 \pi})^{n K_{0n}/2} \exp \{-nq_n/2\} \\
    & \gtrsim e^{-(K_{0n} S_n + 1)(1+ \log(J_n + 1)) }
\end{align*}
To see why the last inequality is true for sufficiently large $n$, we can take $\log$ on both side. It is clear the second term $\frac{-nq_n}{2} \gtrsim -(K_{0n} S_n + 1)(1+ \log(J_n + 1) $ due to the large denominator term. To see why the first term $-\frac{nK_{on}}{2}\log(2\pi/q_n) \gtrsim - (K_{0n} S_n +1)(1+\log(J_n + 1))$, we can divide the fractional term and then exponentiate both side, and it is sufficient to show $\frac{K_{0n}S_n \log(J_n)}{2\pi \gamma_n n^2 J_n(\sqrt{K_{0n}/n} + \sqrt{S_n J_n})^2} \geq (J_n+1)^{-c_2}$ for sufficiently large n and constant $c_2$. Since $\gamma_n \lesssim \log(J_n)$ by assumption C, $J_n >n$ by assumption A, we know the left hand side is lower bounded by the order of $J_n^{-4
}$, and then with the choice $c_2 > 4$, the last inequality holds.

Finally, with sufficient large $n$,  both $P(\|B\|_F \leq \sqrt{K_{0n}/n})$ and the product terms above are lower bounded by the order $e^{- (K_{0n} S_n + 1)(1+ \log(J_n + 1)) }$. Hence, we have shown $\pi(A_n) \gtrsim e^{- K_{0n} S_n \log(J_n)}$, and with sufficient large choice of $D$, the display in (\ref{eq:con_pos}) approaches zero. 
\end{proof}

\subsection{Proof of Theorem 3.5} \label{subsec:posterior contraction}

\begin{proof}
  For simplicity, let $g_n := C_0 k_{0n} S_n$ and we introduce the following notations:
    \begin{equation*}
      \epsilon_n := \sqrt{\frac{S_n k_{0n} \log^2 J_n}{n}}, \quad e_n := S_n k_{0n} \log J_n, \quad t_n := C_2  e_n^2, \quad \delta_n := \frac{C_3 \epsilon_n}{J_n^{3/2} g_n (\sqrt{n}+\sqrt{g_n})}.  
    \end{equation*}
   Define event $E_n = \{B \in \mathbbm R^{J_n \times \infty} : K(B) \leq C_0 k_{0n} S_n | Y^{(n)} \}$. To prove the Theorem, we start by bounding the entropy number for a smaller subspace of $E_n$, and then argue the remaining complement space is negligible. Similar to the proof in \cite{pati}, we restrict our attention on the factor loading matrix $B$ with controlled $l_1$ norm and the cardinality of the support higher than certain threshold $\delta_n$ smaller than $H e_n$, where $H$ is a constant to be specified later in Lemma \ref{lem:sieve-complement}. Here we define the $l_1$ norm of matrix $\|B\|_1:= \sum_{j} \sum_k |B_{jk}|$, same as the $l_1$ norm for $\text{vec}(B)$, the vectorization of $B$. Denote $\text{Supp}_{\delta_n}(B_n)$ as the number of entries in $B_n$ with absolute value higher than $\delta_n$. We hence consider the sieve:
   \begin{equation}
        B_{n,k,S} := \{B_n \in \mathbbm R^{J_n \times k}: k \leq C_0k_{0n}S_n,  \text{Supp}_{\delta_n}(B_n) = S \leq H e_n,  \|B_n\|_1 \leq t_n \} \subset E_n.  
   \end{equation}
   Denote $B_n^* := \bigcup_{\substack{n, k, s}} B_{n, k, s} \subset E_n$, event $F_n := \{D_n \geq e^{-C_1 k_{0n} S_n \log(J_n)} \}$, where $D_n$ is the evidence term defined in equation (\ref{eq:evidence-lb}). For any $\epsilon > 0$,  we then have
    \begin{align*}
    & E_{B_0} \Pi_{n} \{ B: H_n (B, B_0) > \epsilon | Y^{(n)}  \} \\
    & \leq E_{B_0} \Pi_{n} \{ B \in E_n: H_n (B, B_0) > \epsilon | Y^{(n)}  \} \mathbbm{1}_{F_n} + E_{B_0} \Pi(E_{n}^{c} | Y^{(n)}) + P_0 E_{n}^c \\
    & \leq E_{B_0} \Pi_{n} \{ B \in B_n^*: H_n (B, B_0) > \epsilon | Y^{(n)}  \} \mathbbm{1}_{F_n} + E_{B_0} \Pi_{n} \{ B_n^{*c} \cap E_n | Y^{(n)}  \} \mathbbm{1}_{F_n} +  E_{B_0} \Pi(E_{n}^{c} | Y^{(n)}) + P_0 F_{n}^c \\
    & \xrightarrow[n \to \infty]{}E_{B_0} \Pi_{n} \{ B \in B_n^*: H_n (B, B_0) > \epsilon | Y^{(n)}  \} \mathbbm{1}_{F_n}.
    \end{align*}
    We show the expression of $E_{B_0} \Pi_{n} \{ B_n^{*c} \cap E_n | Y^{(n)}  \} \mathbbm{1}_{F_n}$ approaches $0$ in Lemma \ref{lem:sieve-complement}, an analogous of Lemma $9.3$ in \cite{pati}. The expressions $E_{B_0} \Pi(E_{n}^{c} | Y^{(n)})$ and $P_0 F_{n}^c$ also approach to $0$ by Theorem 3.4, and the evidence lower bound established in equation (\ref{eq:evidence-lb-2}).

    To study the asymptotic property of $E_{B_0} \Pi_{n} \{ B \in B_n^*: H_n (B, B_0) > \epsilon | Y^{(n)}  \} \mathbbm{1}_{F_n}$, we are interested in constructing a test $\varphi_n$ to better separate the neighborhood around the true $B_0$, and then applying Lemma 9 of \cite{noniid}. This motivates us to bound the entropy number for $B_n^*$ in terms of the Hellinger distance, by showing $\log N(\frac{\epsilon_n}{36}, B_n^*, H_n) \lesssim n \epsilon_n^2$. We hence consider a slightly larger space $\tilde{B_n}^* :=  \bigcup_{\substack{n, k, s}} \tilde{B}_{n,k,s}$, where
     \begin{equation*}
        \tilde{B}_{n,k,S} := \{B_n \in \mathbbm R^{J_n \times k}: k \leq C_0k_{0n}S_n,  \text{Supp}_{\delta_n}(B_n) = S \leq H e_n,  \|B_n\|_F \leq g_n t_n \}. 
   \end{equation*}
    We know $B_{n, k, s} \subset  \tilde{B}_{n,k,S}$ since $\|B_n\|_1 \leq t_n$ implies $\|B_n\|_F \leq \sqrt{k} \|B_n\|_2 \leq k \|B_n\|_1 \leq g_n t_n$. For arbitrary loading matrices $B_1, B_2 \in \tilde{B}_{n,k,S}$, and latent factor matrix $\Theta $, we define $\eta_{1,ij}:=B_{1j}' \theta_i$, $\eta_{2, ij}:= B_{2j}'\theta_i$, $\Theta_k,  B_{1,k}, B_{2,k}$ as the first $k$ columns of $\Theta$, $B_1$, and $B_2$ respectively. We can then upper bound the squared Hellinger distance by the KL divergence, which has been computed in equation (\ref{eq:kl-explicit}):
   \begin{align*}
        H_{n, \theta}^{2}(B_1, B_2) & \leq \sum_{i=1}^n K_{\theta}(B_1, B_2) \\
        & =  \sum_{i=1}^n \sum_{j=1}^{J_n} \frac{\phi^2(\eta_{1, ij})}{2 \Phi(\eta_{1, ij}) (1- \Phi(\eta_{1, ij}))} (\eta_{1, ij}-\eta_{2, ij})^2 + O((\eta_{1, ij}-\eta_{2, ij})^2 ) \\
        & \leq \gamma_n \|B_{1,k} \Theta_k' - B_{2,k} \Theta_k' \|_F^2 \leq \gamma_n  \|\Theta_k\|_2^2 \|B_{1,k} - B_{2,k}\|_F^2,  
   \end{align*}
   where $\gamma_n = \max_{i,j} 1 + \frac{\phi^2(\eta_{1, ij})}{2 \Phi(\eta_{1, ij}) (1- \Phi(\eta_{1, ij}))}$. Observe if $\|\Theta_k\|_2 \|B_{1,k} - B_{2,k}\|_F \leq (\gamma_n)^{-1/2} \epsilon_n$, we have the above expression $\gamma_n  \|\Theta_k\|_2^2 \|B_{1,k} - B_{2,k}\|_F^2 \leq \epsilon_n^2$. Since the 2-norm of Gaussian matrix $\|\Theta_k\|_2 \lesssim \sqrt{n} + \sqrt{k}$ (see section 4.4.2 of \cite{hdp}), there exits a constant $C_4$ such that
   $$N(\frac{\epsilon}{36}, \tilde{B}_{k, S, n}, H_n) \leq N \left( \frac{C_4 \epsilon_n /36}{(\gamma_n)^{\frac{1}{2}}(\sqrt{n} + \sqrt{k})}, \tilde{B}_{k, S, n} , \|\|_F \right).$$
Write $\xi_n := \frac{C_4 \epsilon_n /36}{(\gamma_n)^{\frac{1}{2}}(\sqrt{n} + \sqrt{k})}$, and we proceed to explicitly construct a $\xi_n$ - net for the space $\tilde{B}_{k, S, n}$ as follows: let $\{\bar{\theta_l}\}_{l=1}^L$ be a $\frac{\xi_n}{2}$-net of an Euclidean ball in $\mathbbm R^{s}$ with radius $g_n t_n$. Then by the standard result (see Corollary 4.2.13 of \cite{hdp}), we have $L \leq 
 \big(1+\frac{g_n t_n}{\xi_n} \big)^s$. In addition, since $\frac{\phi^2(\eta_{1, ij})}{2 \Phi(\eta_{1, ij}) (1- \Phi(\eta_{1, ij}))}$ grows linearly by the Mills ratio, and yet $\max_{ij} \eta_{ij} \lesssim \log (n \cdot J_n)$ grows only in log scale with controlled factor dimension, we have $\gamma_n \lesssim J_n$. This implies there exists a constant $C$ such that $\big(1+\frac{g_n t_n}{\xi_n} \big)^s \leq e^{CslogJ_n}$. For any $B \in \tilde{B}_{k, S, n}$, we consider its vectorization $\text{vec}(B)$, with $\text{vec}(B_s)$ corresponding to the components in $\text{vec}(B)$ where the elements have value larger than $\delta_n$. By definition of the $\frac{\xi_n}{2}$-net, there exists a $\theta_l \in \{\bar{\theta_l}\}_{l=1}^L$, such that $\|\text{vec}(B_s) - \bar{\theta_l}\|_2 \leq \frac{\xi_n}{2}$. We can then define a factor loading matrix $B^{(\bar{\theta_l})}$ such that $\text{vec}\big(B^{(\bar{\theta_l})}_s \big) = \theta_l$ and $\text{vec}\big(B^{\bar{(\theta_l})}_{s^c} \big) = 0$. It then follows $$\|B-B^{(\bar{\theta_l})}\|_F \leq \|\text{vec}(B_s) - \bar{\theta_l}\|_2 + \|\text{vec}(B_{S^c})\|_2 \leq \frac{\xi_n}{2} + \delta_n J_n g_n \leq \xi_n,$$
with sufficient small choice of $C_3$.

After finding the cover number for each sieve $\tilde{B}_{k, S, n}$, we are ready to bound the entropy number for the whole space $B_{n}^*$:
\begin{align*}
    N(\frac{\epsilon_n}{36}, B_n^*, H_n) \leq N(\frac{\epsilon_n}{36}, \tilde{B_n^*}, H_n) &\leq \sum_{k, s} N\left(\frac{C_4 \epsilon_n /36}{(\gamma_n)^{\frac{1}{2}}(\sqrt{n} + \sqrt{k})}, \tilde{B}_{k, S, n} , \|\|_F\right) \\
    & = \sum_{k: k \leq g_n} \sum_{s=0}^{He_n} \binom{J_n \cdot k}{s} \big(1 + \frac{t_n g_n}{\xi_n} \big)^s \\
    & \leq g_n \cdot (J_{n}e_n)^{He_n} \cdot e^{CHe_n \log J_n}  .
\end{align*}
Recall $e_n = S_n k_{0n} \log(J_n)$, it follows we have $\log N(\frac{\epsilon_n}{36}, B_n^*, H_n) \lesssim S_n {k_{0n}} \log^2{J_n} =  n \epsilon_n^2$. By Lemma 9 of \cite{noniid}, there exists a test $\varphi_n$, and some constant $C_5$ such that
$$E_{B_0} \varphi_n \leq \frac{1}{2} \exp \big\{C_5 n \epsilon_n^2 - \frac{n\epsilon^2}{2} \big\}, \quad \sup_{B \in B_n^*: H_n(B, B_0)>\epsilon} E_B (1-\varphi_n) \leq \exp\big\{\frac{-n\epsilon^2}{2} \big\}.$$
Hence we have
\vspace{-0.7cm}
\begin{align*}
    & E_{B_0} \Pi_{n} \{ B \in B_n^*: H_n (B, B_0) > \epsilon | Y^{(n)}  \} \mathbbm{1}_{F_n} \\
    & \leq E_{B_0} \Pi_{n} \{ B \in B_n^*: H_n (B, B_0) > \epsilon | Y^{(n)}  \} \mathbbm{1}_{F_n}(1-\varphi_n) + E_{B_0} \varphi_n \\
    & \leq \big\{ \sup_{B \in B_n^*: H_n(B, B_0)>\epsilon} E_{B}(1-\varphi_n)\big\} \exp \{C_1 S_n k_{0n} \log J_n \} + E_{B_0} \varphi_n \\
    & \leq \exp \big\{ \frac{-n\epsilon^2}{2} + C_1 S_n k_{0n} \log J_n  \big\} + \frac{1}{2} \exp \big\{C_5 n \epsilon_n^2 - \frac{n\epsilon^2}{2}\big\} \rightarrow 0.
\end{align*}
By assumption A and the fact that $\epsilon_n < \epsilon$ as $n \to \infty$, this completes the proof.

\end{proof}

\begin{lemma} \label{lem:sieve-complement}
Recall the events  $B_n^* = \{ |\text{supp}_{\delta_n}(B_n)| \leq H e_n, \|B_n\|_1 \leq t_n\}$, and $E_n = \{K(B) \leq C_0 k_{0n} S_n \}$, where $e_n = k_{0n}S_n \log J_n$, $t_n= C_2 e_n^2$.  there exist constants $H$ and $C_2$ such that
     $$E_{B_0} \Pi_{n} \{ B_n^{*c} \cap E_n | Y^{(n)}  \} \mathbbm{1}_{F_n} \xrightarrow[n \to \infty]{} 0.$$
\end{lemma}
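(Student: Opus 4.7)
The plan is to reduce the claim to a pure prior-mass estimate via the standard Bayes-rule decomposition. On the event $F_n$, the denominator $D_n$ in \eqref{eq:evidence-lb} is bounded below by $\exp\{-C_1 k_{0n} S_n \log J_n\}$, so writing the posterior as numerator over denominator and then applying Fubini to take expectation inside the numerator yields
\[
E_{B_0}\,\Pi_n\{B_n^{*c}\cap E_n\mid Y^{(n)}\}\,\mathbbm{1}_{F_n}
\;\le\;\Pi(B_n^{*c}\cap E_n)\,\exp\{C_1 k_{0n} S_n \log J_n\}.
\]
It therefore suffices to prove, for suitable constants $H$ and $C_2$, the prior mass estimate $\Pi(B_n^{*c}\cap E_n)\lesssim \exp\{-(C_1+2)\,k_{0n} S_n\log J_n\}$.

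I would decompose $B_n^{*c}\subset \mathcal{A}_1\cup\mathcal{A}_2$, where $\mathcal{A}_1:=\{|\mathrm{supp}_{\delta_n}(B)|>H e_n\}$ is the event of too many ``large'' entries and $\mathcal{A}_2:=\{\|B\|_1>t_n\}$ is the event of excessive $l_1$ magnitude, and bound each intersection with $E_n$ separately. On $E_n$ the effective dimension is at most $g_n=C_0 k_{0n} S_n$, so only $J_n g_n$ coordinates of $B$ are active, each governed by the SSL-IBP mixture with conditional Laplace tails
\[
P(|B_{jk}|>\delta_n\mid\gamma_{jk}=0)\le e^{-\lambda_{0k}\delta_n},\qquad
P(|B_{jk}|>\delta_n\mid\gamma_{jk}=1)\le e^{-\lambda_1\delta_n}.
\]
For $\mathcal{A}_1$, the choice $\alpha=1/J_n$ in the IBP concentrates the number of slab indicators around $O(k_{0n})$, so a Chernoff bound on the slab count, combined with a union bound over the $J_n g_n$ spike coordinates using $\lambda_{0k}\ge 2 J_n^2 k_{0n}^3 n/S_{nk}$ (which makes $\lambda_{0k}\delta_n\gg\log J_n$), makes $\Pi(\mathcal{A}_1\cap E_n)$ decay faster than $\exp\{-c H e_n \log J_n\}$ for any prescribed $c$ once $H$ is taken large.

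For $\mathcal{A}_2$, Markov's inequality is the natural route. Conditional on $K\le g_n$, each active coordinate contributes at most $1/\lambda_1$ in expectation to $\|B\|_1$ under the slab (and only $1/\lambda_{0k}\ll 1/\lambda_1$ under the spike), giving
\[
E_\Pi\bigl[\|B\|_1\,\mathbbm{1}_{E_n}\bigr]\;\lesssim\;J_n g_n/\lambda_1.
\]
Taking $t_n=C_2 e_n^2$ with $C_2$ large and recalling $\lambda_1<e^{-2}$ then produces $\Pi(\mathcal{A}_2\cap E_n)\le \exp\{-(C_1+2) k_{0n} S_n\log J_n\}$, and combining with the bound on $\mathcal{A}_1$ finishes the argument.

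The main obstacle is the delicate bookkeeping needed to align all constants: the exponent $(C_1+2)$ in the prior-mass bound must dominate the $C_1$ coming from the evidence, while simultaneously the SSL prior must be tuned so that its two regimes cooperate. The spike produces very many small entries whose cumulative $l_1$ contribution has to be controlled by the diverging $\lambda_{0k}$, whereas the slab produces a few large entries whose count has to be controlled by the IBP intensity $\alpha=1/J_n$. Neither bound alone is tight enough; their combination—essentially the SSL-IBP probit analogue of Lemma~9.3 in \cite{pati}, with Laplace tails replacing the Gaussian tails used in the linear factor model—is what forces a simultaneous choice of $H$, $C_2$, and the thresholds $\lambda_{0k}$, $\lambda_1$ already fixed by Theorem~\ref{thm:factor-dimensionality}.
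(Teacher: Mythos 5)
Your overall skeleton matches the paper's: on $F_n$ you lower-bound the evidence $D_n$ by $e^{-C_1 k_{0n}S_n\log J_n}$, use Fubini to reduce the posterior probability to the prior mass $\Pi(B_n^{*c}\cap E_n)$ inflated by $e^{C_1 k_{0n}S_n\log J_n}$, and split $B_n^{*c}$ into the oversized-support event and the large-$\ell_1$ event; the paper does exactly this, handling the support event by a Chernoff-type (Poisson-binomial) bound on the count of $\delta_n$-exceedances with per-entry probability $\lesssim \log J_n/J_n$ (slab inclusion $\approx 1/J_n$ plus spike tail $e^{-\lambda_{0k}\delta_n}$), which is close in spirit to your slab-count-plus-spike-exceedance argument, though note that a union bound over single spike coordinates is not the right event — you only need to rule out \emph{more than} $He_n/2$ exceedances, and the clean way to do that (and the paper's way) is a concentration bound on the exceedance count itself, which also avoids needing $\lambda_{0k}\delta_n$ to dominate $e_n\log J_n$.

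The genuine gap is in your treatment of $\mathcal{A}_2=\{\|B\|_1>t_n\}$. Plain Markov with the first-moment bound $E_\Pi[\|B\|_1\mathbbm{1}_{E_n}]\lesssim J_n g_n/\lambda_1$ gives only
$$\Pi(\mathcal{A}_2\cap E_n)\;\lesssim\;\frac{J_n g_n}{\lambda_1 t_n}\;=\;\frac{C_0 J_n}{\lambda_1 C_2\, k_{0n}S_n\log^2 J_n},$$
which is at best polynomially small in $J_n$ (and under assumption (A), where $k_{0n}S_n\log^2 J_n< n< J_n$, it need not even tend to zero). No choice of the constant $C_2$ turns this into the exponential bound $e^{-(C_1+2)k_{0n}S_n\log J_n}$ that your own reduction requires to beat the $e^{C_1 k_{0n}S_n\log J_n}$ factor coming from $F_n$. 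What is needed is an exponential tail for the sum of spike-and-slab Laplace variables: the paper notes that, conditionally on the Beta$(1/J_n,1)$ mixing weight, each entry is sub-exponential with $\psi_1$-norm $\frac{\lambda_1+\theta(\lambda_0-\lambda_1)}{2\lambda_1\lambda_0}$, and applies a Bernstein bound (as in Lemma 7.4 of the linear-factor analysis) to get $P(\|B\|_1>t)\le e^{-c\sqrt{t}}$ for $t>1$. This is precisely why $t_n$ is taken to be $C_2 e_n^2$ — so that $e^{-c\sqrt{t_n}}=e^{-c\sqrt{C_2}\,e_n}$ dominates $e^{-(C_1+2)e_n}$ for suitable $C_2$ — a structural point your Markov route neither uses nor explains. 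Replacing your Markov step with this sub-exponential (or exponential-Markov/Chernoff) argument would close the gap.
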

\begin{proof}
Define events $G_n:=\{|\text{supp}_{\delta_n}(B_n)| > H e_n\}$, and $G_n':= \{|\|B_n\|_1 > t_n\}$. Similar to the proof of Lemma 9.3 of \cite{pati}, we will show:
\begin{align}
    E_{B_0} \Pi_{n} \{ G_n \cap E_n | Y^{(n)}  \} \mathbbm{1}_{F_n} \xrightarrow[n \to \infty]{} 0. \label{eq:first_g} \\
    E_{B_0} \Pi_{n} \{ G_n' \cap E_n | Y^{(n)}  \} \mathbbm{1}_{F_n} \xrightarrow[n \to \infty]{} 0. \label{eq:second_g'}
\end{align}
We start by proving equation (\ref{eq:first_g}) and the proof of equation (\ref{eq:second_g'}) can be proceeded similarly:
\begin{align*}
    E_{B_0} \Pi_{n} \{ G_n \cap E_n | Y^{(n)}  \} &= E_{B_0}\left\{\frac{\int_{G_n \cap E_n} \prod_{i=1}^N \prod_{j=1}^J f_{\pi}(y_{ij})/f_{\pi_0}(y_{ij}) d \Pi }{\int \prod_{i=1}^N \prod_{j=1}^J f_{\pi}(y_{ij})/f_{\pi_0}(y_{ij}) d \Pi} \mathbbm{1}_{F_n}\right\}  \\
    &\leq  \frac{\Pi_n(G_n \cap E_n)}{e^{c_1 k_{0n} S_n \log J_n}}  \leq  \frac{\Pi_n(G_n | E_n)}{e^{-c_1 k_{0n} S_n \log J_n}} . 
\end{align*}

For any $k \leq g_n$, we may write $|\text{supp}_{\delta_n}(B_n)| := \sum_{k} \xi_k$, where $\xi_k \sim \text{Binomial}(J_n, P(|B_{jk}| > \delta_n))$. Hence, $|\text{supp}_{\delta_n}(B_n)|$ follows a Poisson binomial distribution, since each column of $B_n$ has distinct prior calibration. Define $\mu := \sum_k E[\xi_k]$ and $t:= H S_n k_{0n} \log J_n$, the standard Azuma-Hoeffding concentration inequality (see section 2 of \cite{10.1214/22-STS852}) yields:
\begin{equation} \label{eq:posbin}
    P(|\text{supp}_{\delta_n}(B_n)| > t ) \leq \left(\frac{\mu}{t} \right)^{t}\left(\frac{g_n - \mu}{g_n-t}\right)^{(g_n-t)}, \quad \text{for any } \mu < t.
\end{equation}
We need to show the right hand side of equation (\ref{eq:posbin}) decays faster than $e^{-c_1 k_{0n} S_n \log J_n}$. Recall with the prior choices $\alpha=\frac{1}{J_n}$, $\lambda_{0k} \geq \frac{2J_n^2k^3 n}{S_n}$, $\lambda_1 < e^{-2}$ and $\delta_n= \frac{C_3 \epsilon_n}{J_n^{3/2} g_n (\sqrt{n}+\sqrt{g_n})}$,we have for any column $k$, 
\vspace{-0.5cm}
$$P(|B_{n, jk}| > \delta_n) \leq \frac{1}{J_n} \exp \{-\delta_n \lambda_1\} + \frac{J_n-1}{J_n} \exp\{-\delta_n \lambda_0\} \lesssim \frac{\log(J_n)}{J_n}.$$
This suggests we have $\mu = \sum_k E[\xi_k] \lesssim C_{0}k_{0n}S_n \log J_n$. Define $\rho := \frac{H S_n k_{0n} \log J_n}{\mu}$. Taking the log on the right hand side  of equation (\ref{eq:posbin}) yields
$$-Hk_{0n}S_n \log J_n \Big[\log \bigg\{\frac{\mu - C_0 S_nk_{0n}}{H S_n k_{0n} \log J_n - C_0 S_n k_{0n}} \cdot \rho \bigg\} \Big]  + S_nK_{0n} \log \frac{\mu - C_0 S_nk_{0n}}{H S_n k_{0n} \log J_n - C_0 S_n k_{0n}}.$$ With sufficient large of $H$, we can make the equation above smaller than $-c_1 k_{0n} S_n \log J_n$. The completes the proof for equation $(\ref{eq:first_g})$.

To prove the claim in equation (\ref{eq:second_g'}), we may argue similarly as before, and show $P(\|B_n\|_1 > t_n) \leq e^{-c_1 k_{0n} S_n \log J_n}$. Since $t_n = C_2 e_n^2$, and $e_n =k_{0n} S_n \log J_n$, it is sufficient to show the concentration 
\begin{equation*} \label{eq:bernstein}
    P(\|B\|_1 > t) \leq e^{-c \sqrt{t}} \quad \text{for } t > 1,
\end{equation*}
where $B$ is a $J$-dimensional random vector of spike-and-slab Laplacian variables. Conditional on the prior proportion $\theta \sim \text{Beta}(\frac{1}{J_n}, 1)$, each element of $B_j$ has sub-exponential norm $\|B_j\|_{\psi_1} = \frac{\lambda_1 + \theta(\lambda_0 - \lambda_1)}{2\lambda_1 \lambda_0}$. But this is the same scenario as in Lemma 7.4 of \cite{pati}. An application of the Bernstein tail bound for $B$ would verify equation (\ref{eq:second_g'}) for any $t > 1$. 

\end{proof}

\section{E-step for Binary Sparse MIRT} \label{sec: binary_proof}

\subsection{Proof of Theorem 4.2}

\begin{proof}
The likelihood term can be expressed as $$\prod_{j=1}^{J} \Phi (B_j' \theta_i + d_j)^{y_{j}} (1- \Phi(B_j' \theta_i +d_j))^{1-y_j} = \prod_{j=1}^{J} \Phi\{(2y_{j}-1)(B_j' \theta_i +d_j)\} = \Phi_J \{ (D_1 \theta_i + D_2) ; \mathbbm I_J\}.$$
The posterior then becomes
\begin{align*}
\pi(\theta_i | y, B, D) & \propto \phi_k(\theta_i - \xi ; \Omega) \Phi_J \{ (D_1 \theta_i + D_2) ; \mathbbm I_J\} \\
& = \phi_k(\theta_i - \xi ; \Omega) \Phi_J \{ S^{-1}(D_1 \theta_i + D_2) ; S^{-1}S^{-1}\} \\
& = \phi_k(\theta_i - \xi ; \Omega) \Phi_J \{ S^{-1}(D_1 \xi + D_2) + S^{-1} D_1 (\theta_i - \xi) ; S^{-1}S^{-1}\}.
\end{align*}

The result follows from the probability density kernel as defined in definition 4.1. In particular, the main difference from Theorem 1 in \cite{Durante_2019} is we need $\gamma_{\text{post}} = S^{-1}(D_1\xi + D_2)$ as opposed to  $S^{-1}(D_1\xi)$. In addition, it's easy to see $\Omega_{post}^{*}$ is still a full rank correlation matrix, as we have the same expression of $\Gamma, \Delta, \bar \Omega$ as in the standard probit regression setting as described in \cite{Durante_2019}. \qedhere

\end{proof}

\subsection{Proof of Corollary 4.3}
\begin{proof}
    Note this corollary has almost the same form as in corollary 2 of \cite{Durante_2019}, except we need the truncation level for $V_1$ as $-S^{-1}(D_1 \xi +D_2)$ instead of $-S^{-1} D_1 \xi$. The proof still relies on equation 12 from  \cite{unified_skew_normal}, which states the distribution of $\operatorname{SUN}_{p, n}(\xi, \Omega, \Delta, \gamma, \Gamma)$ is equivalent to
    $$\xi + \omega (V_0 + \Delta \Gamma^{-1} V_{1, - \gamma}),$$
    where $V_0 \sim N(0, \bar \Omega - \Delta \Gamma^{-1} \Delta') \in \mathbbm R^{K}$ and $V_{1, -\gamma}$ is obtained by component-wise truncation below $-\gamma$ of a variate $N(0, \Gamma) \in R^{J}$. Plugging our results in Theorem 4.2 into the equation above yields the desired result. 
\end{proof}

\section{Sparse MIRT with Parameter Expansion} \label{sec:px-em}
\vspace{-0.2cm}

In the MIRT literature, PX-EM has not gained widespread adoption. Although Rubin and Thomas demonstrated a toy example for estimating a three-parameter logistic MIRT model \cite{px-em-irt}, they assume that each item only loads on a single factor and evaluates the likelihood using a normal measurement error approximation. This assumption is overly stringent for practical applications.

Rather than depending on a normal approximation, we can extend our parameter space following the idea in \cite{bfa}, incurring minimal computational cost. Their approach involves augmenting the likelihood portion of the posterior by introducing an additional auxiliary parameter, which serves to rotate the loading matrix towards a sparser configuration (PXL-EM). We therefore introduce a new model parameter $A$, allowing our expanded model to take on the following structure:
\begin{equation} \label{eq:complete_model}
    Y_{ij} | \theta_i, B_j', d_j , A\sim \text{Bernoulli}(\Phi(B_j' A_L^{-1}\theta_i+d_j)), \text{ } \theta_i | A \sim N(0, A), \text{ } A \sim \pi(A), 
\end{equation}
where $A_L$ is the lower Cholesky factor for A. Here we assume an improper prior for $A$ so that $\pi(A) \propto 1$. Define $\Delta^{*}:= (B^{*}, C, D, A)$. As suggested in \cite{px-em}, two essential conditions are needed to justify a proper parameter expansion strategy for the EM algorithm: (1) There exists a many-to-one reduction function $R$ that can preserve the observed data model: $Y_{\text{obs}}| \Delta^* \sim P(Y_{\text{obs}}| \Delta = R(\Delta^*))$ for any model parameter $\Delta^*$; (2) there exists a null value $A_0$ such that for all $\Delta$, the complete data model is preserved: $P(Y_{\text{com}}| B^{*}, C, D, A_0) = P(Y_{\text{comp}}| \Delta= B^{*}, C, D)$. To verify for condition (1), consider the reduction function $R(B^*, C, D, A)= (B, C, D)$, where $B=B^*A_L$. Note the observed-data likelihood is still invariant in this expanded model. In the original model, we have 
\vspace{-0.3cm}
$$Y_{ij} | B_j, d_j \sim  \text{Bernoulli}(\Phi(N(d_j, \|B_j\|^2)));$$
In the expanded PX-EM model, we have $$Y_{ij}| B_j^*, d_j, A \sim \text{Bernoulli}(\Phi( N(d_j, B_j'A_l^{-1} A (B_j'A_L^{-1})')) = \text{Bernoulli}(\Phi( N(d_j, \|B_j\|^2))).$$ Hence condition (1) is satisfied. To verify for condition (2), we can refer to the complete data model as demonstrated in equation \ref{eq:complete_model}, and observe by letting $A_0 = \mathbbm{I_K}$ would recover the original complete data model $Y_{ij}| \Delta, \theta_i$ since $B=B^*$. Hence our expanded version satisfies the conditions of the PX-EM algorithm as defined in \cite{px-em}.

\vspace{-0.3cm}
\subsection{Probit PXL-EM Algorithm}
\vspace{-0.2cm}
The derivation of the PXL-EM algorithm is straightforward. In summary, we now need to estimate model parameters $\Delta^{*}= (B^{*}, C, D, A)$, where we define $B^{*}= BA_L^{-1}$ and and $A_0= I_k$. The optimization step now decomposes into three independent optimization problems:
\vspace{-0.3cm}
$$Q(\Delta^{*}) = Q_1(B^*, D) + Q_2(C) + Q_3(A),$$ 
where $Q_1(B^*,D)$ and $Q_2(C)$ can be maximized using the same procedure as presented in Section 4.3 respectively. In addition, we have 
\begin{equation*}
Q_3(A)   \propto -\frac{1}{2} \sum_{i=1}^n \langle \theta_i'A^{-1} \theta_i \rangle -\frac{n}{2} \log |A| = -\frac{1}{2} \sum_{i=1}^{n} \Tr (A^{-1} \langle \theta_i \theta_i' \rangle) - \frac{n}{2} \log (|A|).
\end{equation*}

\vspace{-0.5cm}
\subsection{PXL-EM Updates}
\vspace{-0.2cm}
 The E-step of our PXL-EM algorithm is the same as in section 4.2.2, except that we use $B = B^{*}A_L$ instead of $B^{*}$ when computing conditional expectation. For the M-step, the updates for $(B^{*}, C, D)$ are the same as illustrated in Section 4.3. To update $A$, Consider the first-order condition:
\begin{equation*}
\begin{split}
\frac{\alpha Q_3(A)}{\alpha A} = 0  &\implies -\frac{1}{2} \sum_{i=1}^{N} (-A^{-1} \langle \theta_i \theta_i' \rangle A^{-1}) - \frac{n}{2} A^{-1} = 0\\ 
       & \implies A^{(m+1)} = \frac{1}{N} \sum_{i=1}^{N} \langle \theta_i \theta_i'\rangle  \approx \frac{1}{N\cdot M} \sum_{i=1}^{N} \sum_{m=1}^{M}  \theta_{i}^{(m)} \theta_i^{(m)'}.
\end{split} 
\end{equation*}

As mentioned before, while there exists a closed form for the second moment of $\langle \theta_i \theta_i'\rangle$, the computation would be too involved, so we propose to utilize the i.i.d. random samples from the E-step in order to update $A$ efficiently with Monte-Carlo estimation. This can be achieved with a simple matrix computation with no additional computational cost. 

\subsection{Remarks on Convergence Properties}
\vspace{-0.2cm}
Since our EM algorithm approximates the E-step through i.i.d. Monte Carlo draws, its convergence properties resemble those of the MCEM algorithm proposed by Wei and Tanner \cite{doi:10.1080/01621459.1990.10474930}. Consequently, the conventional guarantee of monotone convergence associated with the regular EM algorithm no longer holds, given that the E-step is not conducted in a precise manner. Another factor contributing to the potential non-monotonic convergence is the parameter expansion within the likelihood portion of the posterior. However, as noted by the authors in \cite{bfa}, the PXL-EM algorithm is expected to outperform the vanilla EM algorithm in terms of convergence speed.

Fort and Moulines \cite{10.1214/aos/1059655912} have established the convergence properties of MCEM for models within curved exponential families, of which our probit MIRT model is a specific instance. Notably, they demonstrate that point-wise convergence to the stationary point of the likelihood function can be achieved under certain regularity conditions for the MCEM algorithm with probability one. This is conditioned on the Monte Carlo sample size $M$ increasing over iterations, such that $\sum_{t} M_{(t)}^{-1} < \infty$.

\section{IBP Experiment: More Analysis} \label{sec:ibp_more}

\subsection{Visualizing Dynamic Posterior Exploration} \label{subsec:dpe_vis}
 Figure \ref{fig:dpe} showcases the dynamic posterior exploration process for our PXL-EM algorithm. Notably, PXL-EM stabilized when $\lambda_0=6$, and was eventually able to recover the true factor patterns.

 \begin{figure}[ht]
    \centering
    \includegraphics[width=0.8\textwidth, height=0.55\textwidth]{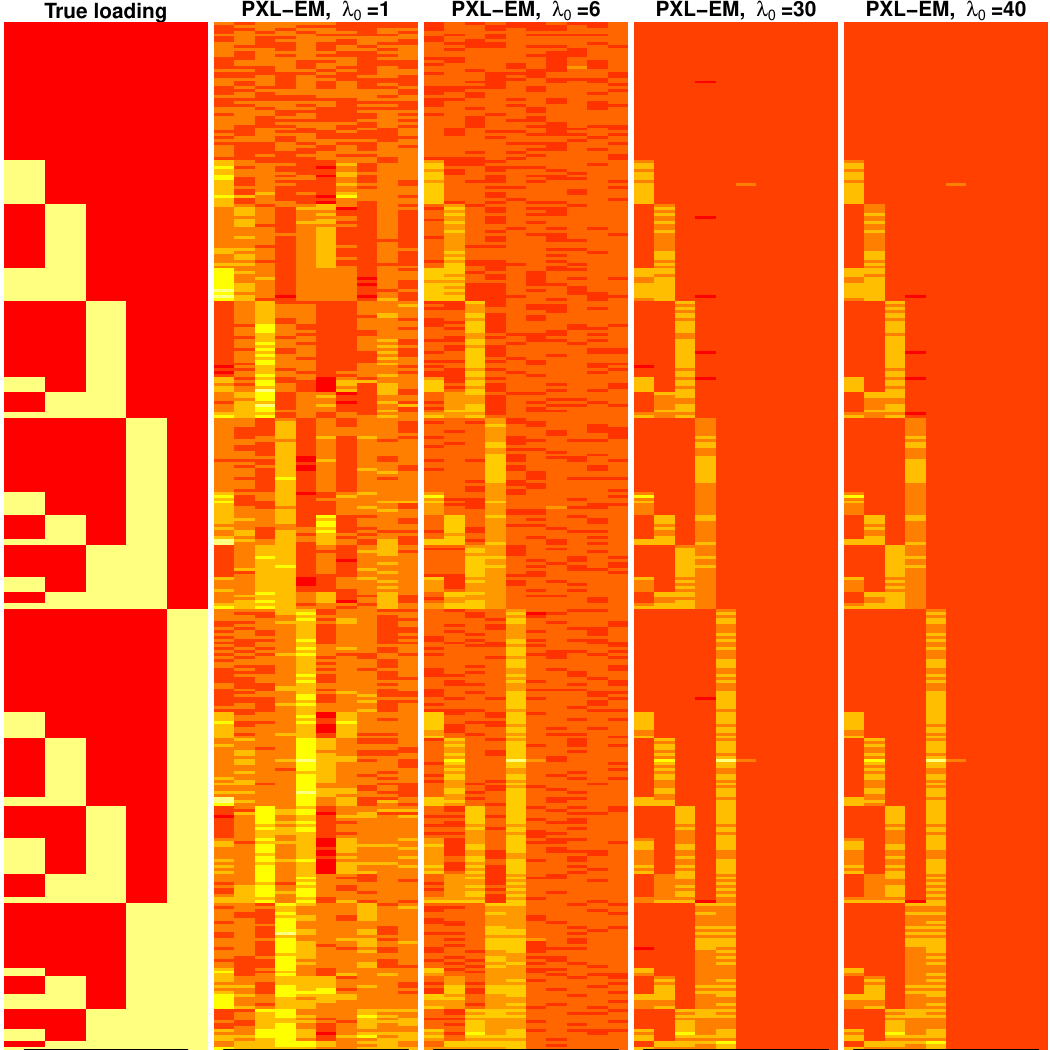}
    \caption{Dynamic Posterior Exploration of PXL-EM Algorithm}
    \label{fig:dpe}
\end{figure}

\subsection{IBP Consistency Experiment} \label{subsec:ibp_consis}

 We conduct a follow-up consistency experiment using the same IBP experimental setup described in Section 6.2. The loading matrix remains as visualized in the leftmost subplot of Figure 2, where $J=350$ and true number of factors $K^*=5$. Recall that $N=250$ in the original experiment; here, we plan to investigate the effect of sample size $N$ on the estimation mean square error of the true loading matrix, by varying $N$ from $100$ to $1,000$. For each $N$, we keep the same loading matrix, the same initialization procedure (uniformly generated between $[-0.02, 0.02]$), and the same dynamical posterior exploration process ($\lambda_1 = 0.5$ and $\lambda_0 \in \{0.5, 1, 3, 6, 10, 20 , 30, 40\}$). 

 Since the IBP loading matrix possesses a GLT structure, the probit IBP loading matrix must be identifiable by Theorem 3.3, provided that the intercept is also identifiable. Therefore, we should expect the MAP estimator produced by our PXL-EM algorithm would become increasingly accurate as $N$ increases. Table \ref{tab:ibp_consistency} empirically verifies our theory, as both the loading MSE and the intercept MSE steadily decrease as $N$ becomes larger.  

\begin{table}[htbp]
    \centering
    \renewcommand{\arraystretch}{1.2}
    \captionsetup{labelfont=bf, format=hang}
    \caption{Mean Squared Error of IBP Loading Matrix Estimation}
    \label{tab:ibp_consistency}
    \begin{tabular}{l|l| l}
        \toprule
        Number of Observations &  Loading Matrix MSE & Intercept MSE \\
        \midrule
        100 & 0.06078 &  0.07932\\
        150 & 0.02024 & 0.04724 \\
        200 & 0.01459 &  0.0315 \\
        250 & 0.00969 & 0.02185 \\
        300 &  0.0078 & 0.01703 \\
        500 & 0.00453 & 0.01088 \\
        750& 0.00322  & 0.0055 \\
        1000 & 0.00256 & 0.00616 \\
        \bottomrule
    \end{tabular}
\end{table}

\section{DESE Experiment: Loading Tables} \label{subsec:dese_loading}

 We report the first six factors of the estimated loading for the math items (table \ref{tab:m_item_loadings}), which could be referenced back to the exact items as appeared in the exam. We also attach the exact loading estimates of our PXL-EM algorithm for the english items (table \ref{tab:e_item_loadings}). Note how the second dimension of the english items is entirely sparse indicating math skills are not needed for English items as expected. The classification of these items can be found on the last page here:   \url{https://www.doe.mass.edu/mcas/2022/release/gr10-math.pdf}

\clearpage

\begin{table}[htbp]
    \centering
    \renewcommand{\arraystretch}{1.2}
    \captionsetup{labelfont=bf, format=hang}
    \caption{PXL-EM Loading Estimates of DESE MATH Items in 2022}
    \label{tab:m_item_loadings}
    \begin{tabular}{l|l|l|l|l|l|l|l|l}
        \toprule
        Item in Figure 3 & Item in Exam & Type & dim1 & dim2 & dim3 & dim4 & dim5 & dim6  \\
        \midrule
        m22 & mitem1 & Algebra & 0.49 & 0.54 & 0 & 0.04 & 0 & 0\\
        m23 & mitem2 & Algebra & 0.6 & 0.48 & 0 & 0.16 & 0 & 0\\
        m24 & mitem3 & Geometry & 0.42 & 0.4 & 0.16 & 0.18 & 0 & 0\\
        m25 & mitem4 & Algebra & 0.56 & 0.65 & 0 & 0 & 0 & 0\\
        m26 & mitem5 & Algebra & 0.45 & 0.52 & 0.12 & 0 & 0 & 0\\
        m27 & mitem7 & Geometry & 0.28 & 0.28 & 0.1 & 0 & 0 & 0\\
        m28 & mitem8 & Algebra & 0.43 & 0.51 & 0.15 & 0 & 0 & 0\\
        m29 & mitem9 & Geometry & 0.52 & 0.48 & 0.2 & 0.04 & 0 & 0\\
        m30 & mitem10 & Algebra & 0.38 & 0.51 & -0.02 & 0 & 0 & 0\\
        m31 & mitem12 & Number & 0.51 & 0.59 & 0 & 0 & 0 & 0\\
        m32 & mitem15 & Geometry & 0.45 & 0.46 & 0.28 & 0 & 0 & 0\\
        m33 & mitem16 & Algebra & 0.59 & 0.59 & 0.14 & 0 & 0 & 0\\
        m34 & mitem17 & Algebra & 0.38 & 0.48 & 0.08 & 0 & 0 & 0\\
        m35 & mitem19 & Geometry & 0.49 & 0.38 & 0.22 & 0.09 & 0 & 0\\
        m36 & mitem20 & Algebra & 0.35 & 0.38 & 0.13 & 0 & 0 & 0\\
        m37 & mitem21 & Geometry & 0.47 & 0.45 & 0.17 & 0 & 0 & 0\\
        m38 & mitem22 & Geometry & 0.4 & 0.4 & 0.21 & 0 & 0 & 0\\
        m39 & mitem23 & Number & 0.42 & 0.37 & 0.2 & 0 & 0 & 0\\
        m40 & mitem24 & Statistics & 0.49 & 0.34 & 0.2 & 0.22 & 0 & 0\\
        m41 & mitem25 & Algebra & 0.36 & 0.27 & 0.1 & 0 & 0 & 0\\
        m42 & mitem26 & Geometry & 0.28 & 0.28 & 0.27 & 0 & 0 & 0\\
        m43 & mitem28 & Geometry & 0.48 & 0.49 & 0.15 & 0 & 0 & 0\\
        m44 & mitem29 & Algebra & 0.51 & 0.39 & 0.22 & 0.29 & 0 & 0\\
        m45 & mitem30 & Geometry & 0.29 & 0.32 & 0.13 & 0 & 0 & 0\\
        m46 & mitem31 & Number & 0.45 & 0.31 & 0.13 & 0.1 & 0 & 0\\
        m47 & mitem33 & Statistics & 0.62 & 0.42 & 0.05 & 0.32 & 0 & 0\\
        m48 & mitem36 & Geometry & 0.31 & 0.28 & 0.1 & 0 & 0 & 0\\
        m49 & mitem37 & Statistics & 0.55 & 0.49 & 0.08 & 0.25 & 0 & 0\\
        m50 & mitem38 & Geometry & 0.35 & 0.22 & 0.14 & 0 & 0 & 0\\
        m51 & mitem40 & Algebra & 0.32 & 0.31 & 0.16 & 0 & 0 & 0\\
        m52 & mitem41 & Geometry & 0.29 & 0.36 & 0.15 & 0 & 0 & 0\\
        m53 & mitem42 & Algebra & 0.48 & 0.41 & 0.16 & 0.17 & 0 & 0\\
        \bottomrule
    \end{tabular}
\end{table}

\clearpage
\begin{table}[htbp]
    \centering
    \renewcommand{\arraystretch}{1.2}
    \captionsetup{labelfont=bf, format=hang}
    \caption{PXL-EM Loading Estimates of DESE English Items in 2022}
    \label{tab:e_item_loadings}
    \begin{tabular}{l|l|l|l|l|l|l|l}
        \toprule
        Item in Figure 3 & Item in Exam & dim1 & dim2 & dim3 & dim4 & dim5 & dim6 \\
        \midrule
        e1 & eitem1 & 0.37 & 0 & 0 & 0.06 & 0 & 0\\
        e2 & eitem2 & 0.54 & 0.12 & 0.05 & 0 & 0 & 0\\
        e3 & eitem3 & 0.38 & 0.06 & 0 & 0 & 0 & 0\\
        e4 & eitem4 & 0.57 & 0.1 & 0 & 0 & 0 & 0\\
        e5 & eitem5 & 0.47 & 0.09 & 0 & 0 & 0 & 0\\
        e6 & eitem6 & 0.47 & 0.11 & 0 & 0 & 0 & 0\\
        e7 & eitem10 & 0.49 & 0.13 & 0 & 0 & 0 & 0\\
        e8 & eitem11 & 0.5 & 0.07 & 0 & 0 & 0 & 0\\
        e9 & eitem12 & 0.54 & 0.18 & 0.07 & 0 & 0 & 0\\
        e10 & eitem14 & 0.43 & 0.08 & 0 & 0 & 0 & 0\\
        e11 & eitem15 & 0.39 & 0 & 0 & 0.04 & 0 & 0\\
        e12 & eitem16 & 0.63 & 0.04 & 0 & 0.08 & 0 & 0\\
        e13 & eitem17 & 0.64 & 0.12 & 0 & 0 & 0 & 0\\
        e14 & eitem18 & 0.65 & 0.15 & 0 & 0 & 0 & 0\\
        e15 & eitem19 & 0.27 & 0.05 & 0 & 0 & 0 & 0\\
        e16 & eitem23 & 0.57 & 0.12 & 0 & 0 & 0 & 0\\
        e17 & eitem24 & 0.59 & 0.08 & 0 & 0 & 0 & 0\\
        e18 & eitem25 & 0.62 & 0.1 & 0.06 & 0 & 0 & 0\\
        e19 & eitem26 & 0.76 & 0.15 & 0.07 & 0.05 & 0 & 0\\
        e20 & eitem27 & 0.67 & 0.17 & 0 & 0 & 0 & 0\\
        e21 & eitem28 & 0.78 & 0.15 & 0.09 & 0 & 0 & 0\\
        \bottomrule
    \end{tabular}
\end{table}

\section{QOL Experiment: More Analysis} \label{subsec:qol_loading}

\subsection{Loading Tables}

We report the factor loading estimates of the PXL-EM algorithm (table \ref{tab:qol_em_loadings}) and of the bifactor model (table \ref{tab:qol_bifactor_loadings}) for the QOL data. As depicted in Figure 4, the health-related items exhibit significant loading on both the fifth and the tenth factors for our EM algorithm. Among the six health items, only items $11$-$13$ load primarily on factor $5$, while all but item $13$ load primarily on factor $10$. Interestingly, items $11$-$13$ pertain to medicare accessibility, while the remaining health items are focused on general health status. \footnote{Items $11-13$ inquire about ``Medical care", ``Frequency of doctor visits", and ``Interactions with therapists". Items $10, 14, 15$ explore ``General health", ``Physical condition", and ``Emotional well-being''.} Consequently, we posit that factor $5$ signifies medical access, while factor $10$ captures general physical well-being. Notably, item $14$, which directly addresses ``physical condition'', exhibits the highest loading on factor $10$.

The ``Leisure'' factor is the only item category that loads solely on the primary component. One plausible reason is that many leisure items, as indicated in figure 4, do not display strong loadings on the secondary leisure dimensions, only with the exception of items $18$ and $19$. However, items $18$ and $19$ pose general queries about ``Chance to Enjoy time'' and ``Amount of fun", respectively, which may have been well-reflected by the underlying primary factor representing quality of life.  

\begin{table}[htbp]
    \centering
    \renewcommand{\arraystretch}{1.2}
    \captionsetup{labelfont=bf, format=hang}
    \caption{PXL-EM Loading Estimates of QOL Data}
    \label{tab:qol_em_loadings}
    \begin{tabular}{l|l|l|l|l|l|l|l|l|l|l}
        \toprule
        Item & dim1 & dim2 & dim3 & dim4 & dim5 & dim6 & dim7 & dim8 & dim9  & dim10 \\
        \midrule
        1-global & 0.802 & 0 & 0 & 0 & 0 & 0 & 0 & 0 & 0 & 0.244\\
        2-family & 0.578 & 0 & 0.702 & 0 & 0 & 0 & 0 & 0 & 0 & 0\\
        3-family & 0.588 & 0 & 0.51 & 0 & 0 & 0 & 0 & 0 & 0 & 0\\
        4-family & 0.628 & 0 & 0.69 & 0 & 0 & 0 & 0 & 0 & 0 & 0\\
        5-family & 0.665 & 0 & 0.64 & 0 & 0 & 0 & 0.145 & 0 & 0 & 0\\
        6-finance & 0.582 & 0 & 0 & 0.693 & 0 & 0 & 0 & 0 & 0 & 0\\
        7-finance & 0.554 & 0 & 0 & 0.552 & 0 & 0 & 0 & 0 & 0 & 0\\
        8-finance & 0.618 & 0 & 0 & 0.663 & 0 & 0 & 0 & 0 & 0 & 0\\
        9-finance & 0.644 & 0 & 0 & 0.634 & 0 & 0 & 0 & 0 & 0 & 0\\
        10-health & 0.462 & 0 & 0 & 0 & 0 & 0 & 0 & 0.222 & 0 & 0.601\\
        11-health & 0.548 & 0 & 0 & 0 & 0.422 & 0 & 0 & 0 & 0 & 0.35\\
        12-health & 0.52 & 0 & 0 & 0 & 0.571 & 0 & 0 & 0 & 0.206 & 0.224\\
        13-health & 0.606 & 0 & 0 & 0 & 0.609 & 0 & 0 & 0 & 0 & 0\\
        14-health & 0.593 & 0 & 0 & 0 & 0 & 0 & 0 & 0.147 & 0 & 0.669\\
        15-health & 0.677 & 0 & 0 & 0 & 0 & 0 & 0 & 0.191 & 0 & 0.314\\
        16-leisure & 0.817 & 0 & 0 & 0 & 0 & 0 & 0 & 0 & 0 & 0\\
        17-leisure & 0.68 & 0 & 0 & 0 & 0 & 0 & 0 & 0 & 0 & 0\\
        18-leisure & 0.772 & 0 & 0 & 0 & 0 & 0 & 0 & 0 & 0 & 0\\
        19-leisure & 0.859 & 0 & 0 & 0 & 0 & 0 & 0 & 0 & 0 & 0\\
        20-leisure & 0.748 & 0 & 0 & 0 & 0 & 0 & 0 & 0 & 0 & 0\\
        21-leisure & 0.554 & 0 & 0 & 0 & 0 & 0 & 0 & 0 & 0 & 0\\
        22-living & 0.593 & 0 & 0 & 0 & 0 & 0 & 0.595 & 0 & 0 & 0\\
        23-living & 0.497 & 0 & 0 & 0 & 0 & 0 & 0.489 & 0 & 0 & 0\\
        24-living & 0.515 & 0 & 0 & 0 & 0 & 0 & 0.663 & 0 & 0 & 0\\
        25-living & 0.516 & 0 & 0 & 0 & 0 & 0 & 0.685 & 0 & 0 & 0\\
        26-living & 0.564 & 0 & 0 & 0 & 0 & 0 & 0.549 & 0 & 0 & 0\\
        27-safety & 0.561 & 0 & 0 & 0 & 0 & 0 & 0 & 0.68 & 0 & 0\\
        28-safety & 0.551 & 0 & 0 & 0 & 0 & 0 & 0.331 & 0.589 & 0 & 0\\
        29-safety & 0.568 & 0 & 0 & 0 & 0 & 0 & 0 & 0.356 & 0 & 0\\
        30-safety & 0.498 & 0 & 0 & 0 & 0 & 0 & 0 & 0.611 & 0 & 0\\
        31-safety & 0.58 & 0 & 0 & 0 & 0 & 0 & 0 & 0.572 & 0 & 0\\
        32-social & 0.694 & 0 & 0 & 0 & 0 & 0 & 0 & 0 & 0.568 & 0\\
        33-social & 0.739 & 0 & 0 & 0 & 0 & 0 & 0 & 0 & 0.523 & 0\\
        34-social & 0.604 & 0 & 0 & 0 & 0 & 0 & 0 & 0 & 0.431 & 0\\
        35-social & 0.555 & 0 & 0 & 0 & 0 & 0 & 0 & 0 & 0 & 0\\
        \hline
        \bottomrule
    \end{tabular}
\end{table}

\begin{table}[htbp]
    \centering
    \renewcommand{\arraystretch}{1.2}
    \captionsetup{labelfont=bf, format=hang}
    \caption{Oracle Bifactor Loading Estimates of QOL Data}
    \label{tab:qol_bifactor_loadings}
    \begin{tabular}{l|l|l|l|l|l|l|l|l|l}
        \toprule
        Item & dim1 & dim2 & dim3 & dim4 & dim5 & dim6 & dim7 & dim8 & dim9   \\
        \midrule
        1-global & 0.853 & 0.01 & 0 & 0 & 0 & 0 & 0 & 0 & 0\\
        2-family & 0.604 & 0.00 & 0.692 & 0 & 0 & 0 & 0 & 0 & 0\\
        3-family & 0.631 & 0.00 & 0.480 & 0 & 0 & 0 & 0 & 0 & 0\\
        4-family & 0.646 & 0.00 & 0.687 & 0 & 0 & 0 & 0 & 0 & 0\\
        5-family & 0.703 & 0.00 & 0.622 & 0 & 0 & 0 & 0 & 0 & 0\\
        6-finance & 0.593 & 0.00 & 0 & 0.692 & 0 & 0 & 0 & 0 & 0\\
        7-finance & 0.559 & 0.00 & 0 & 0.554 & 0 & 0 & 0 & 0 & 0\\
        8-finance & 0.622 & 0.00 & 0 & 0.664 & 0 & 0 & 0 & 0 & 0\\
        9-finance & 0.641 & 0.00 & 0 & 0.643 & 0 & 0 & 0 & 0 & 0\\
        10-health & 0.581 & 0.00 & 0 & 0 & 0.251 & 0 & 0 & 0 & 0\\
        11-health & 0.584 & 0.00 & 0 & 0 & 0.522 & 0 & 0 & 0 & 0\\
        12-health & 0.564 & 0.00 & 0 & 0 & 0.596 & 0 & 0 & 0 & 0\\
        13-health & 0.608 & 0.00 & 0 & 0 & 0.501 & 0 & 0 & 0 & 0\\
        14-health & 0.711 & 0.00 & 0 & 0 & 0.257 & 0 & 0 & 0 & 0\\
        15-health & 0.755 & 0.00 & 0 & 0 & 0.081 & 0 & 0 & 0 & 0\\
        16-leisure & 0.787 & 0.00 & 0 & 0 & 0 & 0.211 & 0 & 0 & 0\\
        17-leisure & 0.632 & 0.00 & 0 & 0 & 0 & 0.243 & 0 & 0 & 0\\
        18-leisure & 0.692 & 0.00 & 0 & 0 & 0 & 0.434 & 0 & 0 & 0\\
        19-leisure & 0.766 & 0.00 & 0 & 0 & 0 & 0.562 & 0 & 0 & 0\\
        20-leisure & 0.689 & 0.00 & 0 & 0 & 0 & 0.339 & 0 & 0 & 0\\
        21-leisure & 0.531 & 0.00 & 0 & 0 & 0 & 0.182 & 0 & 0 & 0\\
        22-living & 0.631 & 0.00 & 0 & 0 & 0 & 0 & 0.553 & 0 & 0\\
        23-living & 0.519 & 0.00 & 0 & 0 & 0 & 0 & 0.491 & 0 & 0\\
        24-living & 0.542 & 0.00 & 0 & 0 & 0 & 0 & 0.635 & 0 & 0\\
        25-living & 0.555 & 0.00 & 0 & 0 & 0 & 0 & 0.682 & 0 & 0\\
        26-living & 0.586 & 0.00 & 0 & 0 & 0 & 0 & 0.539 & 0 & 0\\
        27-safety & 0.617 & 0.00 & 0 & 0 & 0 & 0 & 0 & 0.637 & 0\\
        28-safety & 0.642 & 0.00 & 0 & 0 & 0 & 0 & 0 & 0.501 & 0\\
        29-safety & 0.577 & 0.00 & 0 & 0 & 0 & 0 & 0 & 0.321 & 0\\
        30-safety & 0.554 & 0.00 & 0 & 0 & 0 & 0 & 0 & 0.548 & 0\\
        31-safety & 0.629 & 0.00 & 0 & 0 & 0 & 0 & 0 & 0.554 & 0\\
        32-social & 0.680 & 0.00 & 0 & 0 & 0 & 0 & 0 & 0 & 0.606\\
        33-social & 0.727 & 0.00 & 0 & 0 & 0 & 0 & 0 & 0 & 0.541\\
        34-social & 0.600 & 0.00 & 0 & 0 & 0 & 0 & 0 & 0 & 0.456\\
        35-social & 0.567 & 0.00 & 0 & 0 & 0 & 0 & 0 & 0 & 0.221\\
        \hline
        \bottomrule
    \end{tabular}
\end{table}

\subsection{Factor Inference} \label{subsec:factor_inference}

To showcase the potential of our approach in posterior factor inference, we conducted a comparative analysis of the estimated latent factors and their posterior variances between our proposed PXL-EM approach and the oracle bifactor model using the QOL dataset. With our PXL-EM algorithm, after successfully estimating the factor loading matrix $B$ and the intercepts $D$, we applied Theorem 4.2 to efficiently generate extensive posterior samples of latent factors for conducting inference on posterior means, posterior variances, and other objects of interest. Specifically, we drew 500 samples of latent factors for each subject $i$ and estimated their posterior means and standard errors. This entire inferential process was completed in just 9.9 seconds on our personal laptop.

In the case of the bifactor model, we obtained the estimated latent factors and their standard errors using the "mirt" R package, which required approximately 14.3 seconds for the entire estimation process. Table \ref{tab:qol_factor_inference} presents the correlations and mean squared errors of estimated latent factors and their standard errors when comparing our PXL-EM algorithm with the oracle bifactor model. Notably, our estimated latent factors and their standard errors exhibit a remarkably high degree of correlation with those derived from the oracle bifactor model for each factor. Furthermore, the mean squared errors between our estimation and the bifactor estimation are consistently minimal, affirming the accuracy and reliability of our estimation approach.

It's worth mentioning that the most significant estimation disparities between the PXL-EM algorithm and the bifactor model concentrate on the health component. This arises because our algorithm identified two latent factors for the health items, as discussed in Section 6.4. Please note that we did not consider the "Leisure" component in our analysis, as our PXL-EM did not recover this particular latent factor.

\begin{table}[htbp]
    \centering
    \renewcommand{\arraystretch}{1.2}
    \captionsetup{labelfont=bf, format=hang}
    \caption{Factor Inference: Oracle Bifactor v.s PXL-EM}
    \label{tab:qol_factor_inference}
    \begin{tabular}{l|l|l|l|l}
        \toprule
        Factor Name & Mean Correlation & Mean MSE & SE Correlation & SE MSE \\
        \midrule
        \hline
        Primary & 0.981 & 0.034 & 0.905 & 0.001\\
        Family & 0.981 & 0.022 & 0.862 & 0.006\\
        Finance & 0.983 & 0.019 & 0.815 & 0.006\\
        Health & 0.892 & 0.089 & 0.702 & 0.008\\
        Living & 0.951 & 0.055 & 0.742 & 0.009\\
        Safety & 0.953 & 0.052 & 0.756 & 0.008\\
        Social & 0.934 & 0.054 & 0.78 & 0.008\\
        \hline
        \bottomrule
    \end{tabular}
\end{table}

Finally we want to highlight the flexibility and efficiency of our approach in posterior factor inference. Thanks to our ability to efficiently derive extensive samples of latent factors from their posterior distribution without relying on time-consuming MCMC sampling, we gain the capability to conduct rapid inference on a wide range of objects of interest. This flexibility extends beyond just posterior mean or posterior variance, allowing us to explore and analyze various aspects of the latent factor distribution with ease.

\section{Cortical Thickness Experiment: Loading Tables} \label{sec:cor_loading}

For the ordinal bio-behavioral dataset, We provide the estimated factor loadings for our PXL-EM algorithm in table \ref{tab:cor_em_loadings} as below.  Additionally, table \ref{tab:bi_em_comp} compares our estimated loadings with the primary dimension of the confirmatory bifactor models as appeared in \cite{Stan2020}. In particular, table \ref{tab:bi_em_comp} highlights the striking similarities between our exploratory findings and the established scientific results.

\clearpage
\begin{table}[htbp]
    \centering
    \renewcommand{\arraystretch}{0.95}
    \captionsetup{labelfont=bf, format=hang}
    \caption{PXL-EM Loading Estimates of the Cortical Thickness Data}
    \label{tab:cor_em_loadings}
    {\scriptsize
    \begin{tabular}{l|l|l|l|l|l|l|l|l|l|l}
        \toprule
        Item & dim1 & dim2 & dim3 & dim4 & dim5 & dim6 & dim7 & dim8 & dim9  & dim10 \\
    \hline
    thick\_1 & 0.13 & 0.57 & 0.26 & 0.34 & 0 & 0 & 0 & 0 & 0 & 0\\
    thick\_2 & 0.1 & 0.5 & 0.45 & 0.41 & 0.1 & 0 & 0 & 0 & 0 & 0\\
    thick\_3 & 0.24 & 0.72 & 0 & 0.43 & 0 & 0 & 0 & 0 & 0 & 0\\
    thick\_4 & 0.23 & 0.62 & 0.16 & 0 & 0 & 0 & 0 & 0 & 0 & 0\\
    thick\_5 & 0.15 & 0.59 & 0.4 & 0.44 & 0.07 & 0 & 0 & 0 & 0 & 0\\
    thick\_6 & 0.28 & 0.73 & 0 & 0 & 0 & 0 & 0 & 0 & 0 & 0\\
    thick\_7 & 0.18 & 0.75 & 0 & 0.18 & 0 & 0 & 0 & 0 & 0 & 0\\
    thick\_8 & 0.13 & 0.7 & 0.34 & 0.16 & 0 & 0 & 0 & 0 & 0 & 0\\
    thick\_9 & 0.2 & 0.8 & 0 & 0.33 & 0 & 0 & 0 & 0 & 0 & 0\\
    thick\_10 & 0.11 & 0.65 & 0.36 & 0 & 0 & 0 & 0 & 0 & 0 & 0\\
    thick\_11 & 0.14 & 0.58 & 0.51 & 0.34 & 0.1 & 0 & 0 & 0 & 0 & 0\\
    thick\_12 & 0.31 & 0.68 & 0 & 0 & 0 & 0 & 0 & 0 & 0 & 0\\
    thick\_13 & 0.27 & 0.69 & 0 & 0 & 0 & 0 & 0 & 0 & 0 & 0\\
    thick\_14 & 0.16 & 0.68 & 0.46 & 0.15 & 0.07 & 0 & 0 & 0 & 0 & 0\\
    thick\_15 & 0.18 & 0.76 & 0 & 0.2 & 0 & 0 & 0 & 0 & 0 & 0\\
    thick\_16 & 0.12 & 0.67 & 0.36 & 0 & 0 & 0 & 0 & 0 & 0 & 0\\
    MADRS\_1 & 0.74 & -0.17 & 0 & 0 & -0.28 & -0.15 & -0.25 & 0 & 0 & 0\\
    MADRS\_2 & 0.73 & -0.21 & 0 & 0 & -0.32 & -0.3 & -0.22 & 0 & 0 & 0\\
    MADRS\_3 & 0.63 & -0.14 & 0 & 0 & 0 & -0.45 & 0 & 0 & 0 & 0\\
    MADRS\_4 & 0.43 & -0.2 & 0 & 0 & 0 & -0.37 & 0 & 0 & 0.14 & -0.63\\
    MADRS\_5 & 0.55 & -0.15 & 0 & 0 & 0 & -0.39 & 0 & 0 & 0 & 0\\
    MADRS\_6 & 0.6 & -0.17 & 0 & 0 & 0 & -0.31 & 0 & 0 & 0 & 0\\
    MADRS\_7 & 0.66 & -0.21 & 0 & 0 & -0.24 & -0.24 & 0 & 0 & 0 & 0\\
    MADRS\_8 & 0.69 & -0.17 & 0 & 0 & -0.19 & -0.22 & 0 & 0 & 0 & 0\\
    MADRS\_9 & 0.62 & -0.18 & 0 & 0 & -0.27 & -0.34 & 0 & 0 & 0 & 0\\
    MADRS\_10 & 0.67 & -0.21 & 0 & 0 & 0 & -0.31 & 0 & 0 & 0 & 0\\
    PANSS\_p1 & 0.55 & -0.27 & -0.14 & 0 & 0.31 & 0 & 0 & 0.56 & 0 & 0\\
    PANSS\_p2 & 0.53 & -0.23 & -0.16 & 0 & 0.49 & 0 & 0 & 0.19 & 0 & 0\\
    PANSS\_p3 & 0.48 & -0.38 & 0 & 0 & 0.24 & 0 & 0 & 0.42 & 0 & 0\\
    PANSS\_p4 & 0.4 & -0.15 & 0 & 0 & 0.55 & -0.43 & 0 & 0 & 0 & 0\\
    PANSS\_p5 & 0.31 & -0.13 & 0 & 0 & 0.46 & -0.25 & 0 & 0.3 & 0 & 0\\
    PANSS\_p6 & 0.62 & -0.29 & 0 & 0 & 0.23 & -0.18 & 0.17 & 0.23 & 0 & 0\\
    PANSS\_p7 & 0.58 & -0.14 & 0 & 0 & 0.43 & -0.39 & 0 & -0.14 & 0 & 0\\
    PANSS\_n1 & 0.56 & -0.13 & 0 & 0 & 0 & 0.58 & 0 & 0 & 0 & 0\\
    PANSS\_n2 & 0.79 & -0.22 & 0 & 0 & 0 & 0.21 & 0.24 & 0 & 0 & 0\\
    PANSS\_n3 & 0.68 & -0.12 & 0 & 0 & 0.28 & 0.36 & 0 & -0.25 & 0 & 0\\
    PANSS\_n4 & 0.65 & -0.25 & 0 & 0 & 0 & 0.17 & 0.4 & 0 & 0 & 0\\
    PANSS\_n5 & 0.29 & -0.2 & 0 & 0 & 0.24 & 0.2 & 0 & 0 & 0 & 0\\
    PANSS\_n6 & 0.61 & -0.1 & 0 & 0 & 0.11 & 0.48 & 0 & -0.17 & 0 & 0\\
    PANSS\_n7 & 0.66 & -0.23 & -0.22 & 0 & 0.45 & 0.14 & 0 & 0 & 0 & 0\\
    PANSS\_g1 & 0.5 & -0.19 & 0 & 0 & 0.15 & -0.18 & 0 & 0 & 0 & 0\\
    PANSS\_g2 & 0.65 & -0.17 & 0 & 0 & 0 & -0.38 & 0 & 0 & 0 & 0\\
    PANSS\_g3 & 0.44 & -0.16 & 0 & 0 & 0 & -0.25 & 0 & 0 & 0 & 0\\
    PANSS\_g4 & 0.66 & -0.12 & 0 & 0 & 0.25 & -0.29 & 0 & 0 & 0 & 0\\
    PANSS\_g5 & 0.63 & -0.12 & 0 & 0 & 0.37 & 0.21 & 0 & 0 & 0 & 0\\
    PANSS\_g6 & 0.76 & -0.24 & 0 & 0 & -0.21 & -0.26 & -0.15 & 0 & 0 & 0\\
    PANSS\_g7 & 0.69 & -0.17 & 0 & 0 & 0 & 0.37 & 0 & 0 & 0 & 0\\
    PANSS\_g8 & 0.58 & -0.13 & 0 & 0 & 0.51 & 0 & 0 & -0.31 & 0 & 0\\
    PANSS\_g9 & 0.48 & -0.26 & -0.12 & 0 & 0.38 & 0 & 0 & 0.53 & 0 & 0\\
    PANSS\_g10 & 0.5 & -0.17 & 0 & 0 & 0.35 & 0 & 0 & 0 & 0 & 0\\
    PANSS\_g11 & 0.57 & -0.17 & 0 & 0 & 0.49 & 0 & 0 & 0 & 0 & 0\\
    PANSS\_g12 & 0.4 & -0.12 & 0 & 0 & 0.47 & 0 & 0 & 0 & 0 & 0\\
    PANSS\_g13 & 0.68 & -0.24 & 0 & 0 & 0.26 & 0 & 0 & 0 & 0 & 0\\
    PANSS\_g14 & 0.58 & -0.19 & 0 & 0 & 0.43 & -0.3 & 0 & 0 & 0 & 0\\
    PANSS\_g15 & 0.69 & -0.21 & -0.23 & 0 & 0.39 & 0 & 0 & 0 & 0 & 0\\
    PANSS\_g16 & 0.7 & -0.27 & 0 & 0 & 0 & 0 & 0.44 & 0 & 0 & 0\\
    Young\_1 & 0.06 & 0 & 0 & 0 & 0.41 & -0.41 & 0 & 0 & 0.56 & 0\\
    Young\_2 & 0.11 & 0 & 0 & 0 & 0.36 & -0.42 & 0 & 0 & 0.63 & 0\\
    Young\_3 & 0 & 0 & 0 & 0 & 0.25 & -0.29 & 0 & 0 & 0.34 & 0\\
    Young\_4 & 0.33 & -0.16 & 0 & 0 & 0.17 & -0.38 & 0.05 & 0 & 0.25 & -0.71\\
    Young\_5 & 0.51 & -0.13 & 0 & 0 & 0.19 & -0.5 & 0 & -0.16 & 0 & 0\\
    Young\_6 & 0.18 & 0 & 0 & 0 & 0.36 & -0.47 & 0 & 0 & 0.47 & 0\\
    Young\_7 & 0.42 & 0 & 0 & 0 & 0.29 & -0.35 & 0 & 0 & 0.34 & 0\\
    Young\_8 & 0.54 & -0.19 & 0 & 0 & 0.25 & -0.24 & 0 & 0.28 & 0.23 & 0\\
    Young\_9 & 0.46 & 0 & 0 & 0 & 0.38 & -0.47 & 0 & -0.23 & 0 & 0\\
    Young\_10 & 0.26 & 0 & 0 & 0 & 0 & 0 & 0 & 0 & 0 & 0\\
    Young\_11 & 0.22 & 0 & 0 & 0 & 0.44 & 0 & 0 & 0 & 0 & 0\\
        \bottomrule
    \end{tabular}}
\end{table}

\clearpage

\begin{table}[htbp]
    \centering
    \renewcommand{\arraystretch}{0.95}
    \captionsetup{labelfont=bf, format=hang}
    \caption{Comparison of PXL-EM Findings Versus Confirmatory Bifactor Findings}
    \label{tab:bi_em_comp}
    {\scriptsize
    \begin{tabular}{l|l|l|l}
        \toprule
        Item & Description & Bifactor Primary Dimension & PXL-EM Second Dimension \\
\hline
        thick\_1 & L inferior parietal lobule & 0.73 & 0.57 \\
        thick\_2 & L pars opercularis & 0.74 & 0.5 \\
        thick\_3 & L precuneus & 0.73 & 0.72 \\
        thick\_4 & L supramarginal gyrus & 0.79 & 0.62 \\
        thick\_5 & R supramarginal gyrus & 0.75 & 0.59 \\
        thick\_6 & L lateral orbitofrontal gyrus & 0.69 & 0.73 \\
        thick\_7 & R pars opercularis & 0.72 & 0.75 \\
        thick\_8 & R pars triangularis & 0.7 & 0.7 \\
        thick\_9 & L banks of the superior temporal sulcus & 0.71 & 0.8 \\
        thick\_10 & L fusiform gyrus & 0.66 & 0.65 \\
        thick\_11 & L middle temporal gyrus & 0.76 & 0.58 \\
        thick\_12 & L insula & 0.78 & 0.68 \\
        thick\_13 & R middle temporal gyrus & 0.76 & 0.69 \\
        thick\_14 & R superior temporal gyrus & 0.85 & 0.68 \\
        thick\_15 & L superior temporal gyrus & 0.87 & 0.76 \\
        thick\_16 & R insula & 0.75 & 0.67 \\
        MADRS\_1 & Apparent sadness & -0.22 & -0.17 \\
        MADRS\_2 & Reported sadness & -0.23 & -0.21 \\
        MADRS\_3 & Inner tension & -0.15 & -0.14 \\
        MADRS\_4 & Reduced sleep & -0.16 & -0.2 \\
        MADRS\_5 & Reduced appetite & -0.12 & -0.15 \\
        MADRS\_6 & concentration difficulties & -0.17 & -0.17 \\
        MADRS\_7 & Lassitude & -0.17 & -0.21 \\
        MADRS\_8 & Inability to feel & -0.18 & -0.17 \\
        MADRS\_9 & Pessimistic thoughts & -0.18 & -0.18 \\
        MADRS\_10 & Suicidal thoughts & -0.21 & -0.21 \\
        PANSS\_p1 & Delusions & \textbf{-0.32} & \textbf{-0.27} \\
        PANSS\_p2 & Conceptual disorganization & -0.22 & -0.23 \\
        PANSS\_p3 & Hallucinory Behavior & \textbf{-0.37} & \textbf{-0.38} \\
        PANSS\_p4 & Excitement & -0.03 & -0.15 \\
        PANSS\_p5 & Grandiosity & -0.05 & -0.13 \\
        PANSS\_p6 & Suspiciousness & \textbf{-0.28} & \textbf{-0.29} \\
        PANSS\_p7 & Hostility & -0.01 & -0.14 \\
        PANSS\_n1 & Blunted affect & -0.17 & -0.13 \\
        PANSS\_n2 & Emotional withdrawl & -0.23 & -0.22 \\
        PANSS\_n3 & Poor rapport & -0.02 & -0.12 \\
        PANSS\_n4 & Passive-apathetic social withdrawl & \textbf{-0.28} & \textbf{-0.25} \\
        PANSS\_n5 & Difficulty in abstract thicking & -0.16 & -0.2 \\
        PANSS\_n6 & lack of spontaneity & -0.07 & -0.1 \\
        PANSS\_n7 & Stereotyped thicking & -0.2 & -0.23 \\
        PANSS\_g1 & Somatic concern & -0.22 & -0.19 \\
        PANSS\_g2 & Anxiety & -0.18 & -0.17 \\
        PANSS\_g3 & Guilt Feeling & -0.16 & -0.16 \\
        PANSS\_g4 & Tension & -0.08 & -0.12 \\
        PANSS\_g5 & Mannerisms and posturing & -0.05 & -0.12 \\
        PANSS\_g6 & Depression & \textbf{-0.25} & -0.24 \\
        PANSS\_g7 & Motor retardation & -0.19 & -0.17 \\
        PANSS\_g8 & Uncooperativeness & 0 & -0.13 \\
        PANSS\_g9 & Unusual thought content & \textbf{-0.26} & \textbf{-0.26} \\
        PANSS\_g10 & Disorientation & -0.05 & -0.17 \\
        PANSS\_g11 & Poor attention & -0.04 & -0.17 \\
        PANSS\_g12 & Lack of judgement and insight & -0.09 & -0.12 \\
        PANSS\_g13 & Disturbance of volition & -0.02 & -0.24 \\
        PANSS\_g14 & Poor impulse control & -0.04 & -0.19 \\
        PANSS\_g15 & Preoccupation & -0.17 & -0.21 \\
        PANSS\_g16 & Active social avoidance & \textbf{-0.28} & \textbf{-0.27} \\
        Young\_1 & Elevanted mood & 0.07 & 0 \\
        Young\_2 & Increased motor activity & 0.08 & 0 \\
        Young\_3 & Sexual interest & 0.06 & 0 \\
        Young\_4 & Sleep & -0.13 & -0.16 \\
        Young\_5 & Irritability & -0.09 & -0.13 \\
        Young\_6 & Speech - rate and amount & 0.06 & 0 \\
        Young\_7 & Language-thought disorder & -0.09 & 0 \\
        Young\_8 & Content & -0.22 & -0.19 \\
        Young\_9 & Disruptive-aggressive behavior & -0.01 & 0 \\
        Young\_10 & Appearance & -0.14 & 0 \\
        Young\_11 & Insight & -0.06 & 0 \\
        \bottomrule
    \end{tabular}}
\end{table}

\clearpage 

\section{Extension to Mixed Data Type} \label{sec:mixed}

\subsection{CDF Trick}

Subject $i$'s contribution to the ordinal portion of the likelihood function is embedded in their responses to these $J$ distinct ordinal items:
\begin{equation} \label{eq:ord_lik}
    \prod_{j=1}^{J} \left[\sum_{l=0}^{L_j} \mathbbm{I} \{Y_{ij} = l\} (\Phi(B_j'\theta_i + d_{j, l }) - \Phi(B_j'\theta_i + d_{j, l-1 }))\right].
\end{equation}

At first glance,  reducing Equation \ref{eq:ord_lik} into a probability kernel for a unified skew-normal distribution might appear challenging,  due to the presence of the difference term $(\Phi(B_j'\theta_i + d_{j, l }) - \Phi(B_j'\theta_i + d_{j, l-1 }))$. In the binary case, $- \Phi(B_j'\theta_i + d_{j, l-1 })$ is simply zero, so the likelihood can be written as the product of the form $\Phi((2(Y_{ij}-1)(B_j' \theta_i + d_j)))$. Here one potential trick would be recognizing the difference of two normal CDF functions can be closely approximated by a two-dimensional multivariate-normal CDF function:

\begin{lemma} \label{lem:cdf_trick}
For $l \neq \{0, L_{j}\}$, let $\Phi_2 \left\{ [B_j'\theta_i + d_{j, l }, -B_j'\theta_i - d_{j, l-1 }]' ; - \mathbbm{1}_2 \mathbbm{1}_2' + 2\mathbbm{I} \right\} $ represent the cumulative distribution function of a degenerate two-dimensional multivariate Gaussian with zero mean vector and covariance matrix  $- \mathbbm{1}_2 \mathbbm{1}_2' + 2\mathbbm{I}$ evaluated at vector $[B_j'\theta_i + d_{j, l }, -B_j'\theta_i - d_{j, l-1 }]' $. Then we have
\begin{equation*}
\Phi(B_j'\theta_i + d_{j, l }) - \Phi(B_j'\theta_i + d_{j, l-1 }) = \Phi_2 \left\{ [B_j'\theta_i + d_{j, l }, -B_j'\theta_i - d_{j, l-1 }]' ; - \mathbbm{1}_2 \mathbbm{1}_2' + 2\mathbbm{I} \right\} 
\end{equation*}
\end{lemma}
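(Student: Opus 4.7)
The plan is to exploit the fact that the covariance matrix $\Sigma = -\mathbbm{1}_2 \mathbbm{1}_2' + 2\mathbbm{I} = \begin{pmatrix} 1 & -1 \\ -1 & 1 \end{pmatrix}$ is rank-deficient (its determinant is zero), so the corresponding degenerate bivariate Gaussian is supported on the anti-diagonal. First, I would let $(Z_1, Z_2)$ denote a random vector with this zero-mean degenerate Gaussian law and observe that $Z_1 + Z_2$ has variance $\mathbbm{1}_2' \Sigma \, \mathbbm{1}_2 = 0$, so $Z_2 = -Z_1$ almost surely, and $Z_1 \sim N(0,1)$. Hence the CDF reduces to a univariate probability on an interval.

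Next, I would compute
\begin{equation*}
\Phi_2\bigl\{[a,b]' \,;\, -\mathbbm{1}_2\mathbbm{1}_2' + 2\mathbbm{I}\bigr\} = P(Z_1 \leq a,\, Z_2 \leq b) = P(-b \leq Z_1 \leq a),
\end{equation*}
which equals $\Phi(a) - \Phi(-b)$ whenever $a \geq -b$ and $0$ otherwise. Substituting $a = B_j'\theta_i + d_{j,l}$ and $b = -B_j'\theta_i - d_{j,l-1}$ gives $-b = B_j'\theta_i + d_{j,l-1}$, and since the model requires $d_{j,l-1} \leq d_{j,l}$ we have $a \geq -b$, so the expression collapses to $\Phi(B_j'\theta_i + d_{j,l}) - \Phi(B_j'\theta_i + d_{j,l-1})$, as claimed.

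The main obstacle is really just notational: the authors' definition of $\Phi_J\{V;\Sigma\}$ (from Definition 4.1) is stated for the non-degenerate case, so one has to either interpret $\Phi_2$ in the degenerate sense directly (via the singular law on $\{(z,-z) : z \in \mathbbm{R}\}$, as I do above) or obtain the identity as the limit $\varepsilon \downarrow 0$ of $\Phi_2\{[a,b]'; \Sigma + \varepsilon \mathbbm{I}\}$ and invoke continuity of the bivariate normal CDF in its covariance parameter. Either route is routine; the limiting argument may be preferable for consistency with Theorem \ref{thm:mixed}, where the degenerate blocks $-\mathbbm{1}_2\mathbbm{1}_2' + 2\mathbbm{I}$ appear inside the larger matrix $\overline{\mathbbm{I_{\bar J}}}$ and one wants the unified skew-normal kernel to remain well-defined.
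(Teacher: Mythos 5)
Your proposal is correct and follows essentially the same route as the paper: both arguments recognize that the covariance matrix $-\mathbbm{1}_2\mathbbm{1}_2' + 2\mathbbm{I}$ is singular, so the bivariate vector satisfies $Z_2 = -Z_1$ almost surely, and the bivariate CDF collapses to the interval probability $P(-b \le Z_1 \le a)$, which equals the difference of univariate normal CDFs. Your added remarks on the ordering $d_{j,l-1} \le d_{j,l}$ and the $\varepsilon$-perturbation interpretation of the degenerate CDF are fine extra care but not a different method.
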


\begin{proof}
Define $a:= B_j' \theta_i + d_{j, l-1}$ , $b:= B_j' \theta_i + d_{j, l}$. let $Z$ represent a standard normal variable, and let $Z' = -Z$ whose value is determined by $Z$. It follows
\begin{equation}
\Phi(B_j'\theta_i + d_{j, l }) - \Phi(B_j'\theta_i + d_{j, l-1 })  = P(a \leq Z \leq b) = P(Z \leq b, Z' \leq -a ).
\end{equation}
Note the random vector $(Z, Z')'$ is a degenerated multivariate Gaussian as the matrix $\begin{bmatrix}
    1 & -1 \\
    -1 & 1 \\
\end{bmatrix}$ is positive semidefinite but not positive definite. 
\end{proof}

\subsection{Proof of Theorem 5.1}

\begin{proof}
 The likelihood for the continuous portion can be represented as:
    \begin{align*}
       L_c(\theta_i) &= (2\pi)^{-\frac{|J_c|}{2}} |\Sigma|^{-\frac{1}{2}} \exp \{ -\frac{1}{2} (y_{\raisebox{-0.5mm}{$\scriptscriptstyle J_c$}} - B_{J_c} \theta_i)' \Sigma^{-1} (y_{\raisebox{-0.5mm}{$\scriptscriptstyle J_c$}} - B_{J_c} \theta_i)\} \\
       & \propto \exp \{ -\frac{1}{2} (\Sigma^{-\frac{1}{2}}y_{\raisebox{-0.5mm}{$\scriptscriptstyle J_c$}} - \Sigma^{-\frac{1}{2}}B_{J_c} \theta_i)' (\Sigma^{-\frac{1}{2}}y_{\raisebox{-0.5mm}{$\scriptscriptstyle J_c$}} - \Sigma^{-\frac{1}{2}}B_{J_c} \theta_i)\} \\
        & \propto \exp \{-\frac{1}{2} (\hat{\theta_i} - \theta_i)' \tilde{B}_{J_c}' \tilde{B}_{J_c} (\hat{\theta_i} - \theta_i) \} \\
        & \propto \phi_k(\theta_i - \hat{\theta_i} ; (\tilde{B}_{J_c}' \tilde{B}_{J_c})^{-1}).
    \end{align*}

To derive the likelihood for the categorical responses, let $d_{j,l}$ be the shorthand for $d_{j, Y_j}$. For ordinal item $j \in J_{O_1}$, let $d_j$ represent $d_{j,0}$ when $Y_j=0$, but represent $d_{j, L_j-1}$ when $Y_j=L_j$. Using Lemma \ref{lem:cdf_trick}, the likelihood function of the graded response model for each observation $i$ can be expressed as follows:
\begin{align*}
  \prod_{j \in J_B} \Phi((2Y_j -1)(B_j' \theta_i + d_j)) \times \prod_{j \in J_{O_1}} \Phi_2 \{ \begin{bmatrix}
                        B_j'\theta_i + d_{j, l }  \\
                        -B_j'\theta_i - d_{j, l-1 }  \\
                    \end{bmatrix} ;
                    \begin{bmatrix}
                        1 & -1 \\
                        -1 & 1 \\
                    \end{bmatrix} \}  \\
                    \times \prod_{j \in J_{O_2}} \Phi((2\{Y_j=0\}-1)(B_j'\theta_i+d_j)).
\end{align*}
Using the notations above, the likelihood term can be further simplified:
\begin{align*}
    &\Phi_{|J_B|} \{(D_1 \theta_i + D_2) ; \mathbbm{I}_{|J_B|}\} \times \Phi_{2|J_{O_1}|} \{(\tilde B_{J_{O_1}} \theta_i + \tilde D_{J_{O_1}}) ;  \overline{\mathbbm{I_{J_{O_1}}}} \} \times \Phi_{|J_{O_2}|} \{ (D_3 \theta_i + D_4) ; \mathbbm{I}_{|J_{O_2}|} \} \\
    & = \Phi_{\bar J} \{ (\bar D \theta_i + \bar V) ; \overline{\mathbbm{I_{\bar{J}}}} \}.
\end{align*}
 Finally, recall the product of two multivariate Gaussian densities is again proportional to a new multivariate Gaussian density. Since $L_c(\theta_i)$ is proportional to a multivariate normal density in $\theta_i$, we should expect the its product with the prior density $\phi_k(\theta_i - \xi ; \Omega)$ is still multivariate Gaussian in $\theta_i$. This allows us to write the posterior distribution as follows:
\begin{align*}
\pi(\theta_i | y, B, D) & \propto \phi_k(\theta_i - \xi ; \Omega) \phi_k(\theta_i - \hat{\theta_i} ; (\tilde{B}_{J_c}' \tilde{B}_{J_c})^{-1})  \Phi_{\bar J} \{ (\bar D \theta_i + \bar V) ; \overline{\mathbbm{I_{\bar{J}}}} \} \\
& \propto \phi_k(\theta_i - \xi_{\text {post }} ; \Omega_{\text {post }}) \Phi_{\bar J} \{ (\bar D \theta_i + \bar V) ; \overline{\mathbbm{I_{\bar{J}}}} \} \\
& = \phi_k(\theta_i - \xi_{\text {post }} ; \Omega_{\text {post }}) \Phi_{\bar J} \{ S^{-1}(\bar D \theta_i + \bar V) ; S^{-1}\overline{\mathbbm{I_{\bar{J}}}} S^{-1} \}  \\
& = \phi_k(\theta_i - \xi_{\text {post }} ; \Omega_{\text {post }}) \Phi_{\bar J} \{ S^{-1}(\bar D \xi_{\text {post }} + \bar V) + S^{-1} \bar{D} (\theta_i - \xi_{\text {post }}) ; S^{-1} \overline{\mathbbm{I_{\bar{J}}}} S^{-1}\}.
\end{align*}
Note the posterior takes the form of the the probability density function as defined in definition 4.1. In particular, the expressions of $\xi_{\text{post }}, \Omega_{\text{post }}$, and $\gamma_{\text{post }}$ are immediate. Some simple algebra would lead to the claimed expressions for $\Delta_{\text{post }}$ and $\Gamma_{\text{post }}$.

It remains to show $\Omega_{\text{post }}^{*}$ is indeed a full-rank correlation matrix. To see this, note $\Omega_{\text{post }}^{*}$ can be decomposed as follows:
$$
\begin{bmatrix} \Gamma_{\text{post }} & \Delta_{\text{post }}' \\
                  \Delta_{\text{post }} & \bar \Omega \\
                  \end{bmatrix}  = \begin{bmatrix} S^{-1} & 0 \\
                  0 & \omega^{-1}\\
   \end{bmatrix}  \times \begin{bmatrix} \bar D \Omega \bar{D}' + \overline{\mathbbm{I_{\bar{J}}}}  & \bar{D} \Omega \\
                  \Omega \bar{D}' & \Omega\\
   \end{bmatrix} 
   \times \begin{bmatrix} S^{-1} & 0 \\
                  0 & \omega^{-1}\\
   \end{bmatrix}. 
$$
Observe the middle matrix on the right hand side is the covariance matrix of the random vector $[Z_1, Z_2]'$, where $Z_1 = \bar{D}Z_2 + \epsilon$ with $\epsilon$ independent of $Z_2$. Here $Z_2$ is a $K$-dimensional random vector with zero mean and covariance matrix $\Omega$, and $\epsilon$ is $\bar{J}$-dimensional random vector with zero mean and covariance matrix $\overline{\mathbbm{I_{\bar{J}}}}$. Note $\overline{\mathbbm{I_{\bar{J}}}}$ is positive semidefinite, hence a valid covariance matrix.
\end{proof}

\subsection{EM Algorithm for Mixed Data Type}

\subsubsection{E-Step}

Theorem 5.1 explicitly characterizes the latent factor posterior distribution under mixed data distributions as a similar form to the unified-skewed normal distribution. The following corollary makes the E-step more explicit.

\begin{corollary} \label{cor:mixed}
Under the mixed item response data types setup as illustrated in Theorem 5.1, suppose latent factor $\theta_i$ has prior distribution $N(0, \mathbbm{I}_k)$, then its posterior has the unified skew-normal distribution, and can be represented as
$$
(\theta_i | y, B, D, \Sigma) \stackrel{\mathrm{d}}{=} \Omega_{\text {post }}(B_{J_c}'\Sigma^{-1}B_{J_c} \hat{\theta_i})  +\omega_{\text {post }} \left\{V_{0}+\bar{\Omega}_{\text {post }} \omega_{\text {post }} \bar{D}' \left(\bar{D} \Omega_{\text {post }} \bar{D}'+ \overline{\mathbbm{I_{J}}}\right)^{-1} S V_{1}\right\},
$$
where 
\begin{itemize}
    \item $\Omega_{{\text {post }}} = \omega_{\text {post }} \bar{\Omega}_{\text {post }} \omega_{\text {post }} = (B_{j_c}' \Sigma^{-1} B_{J_c} + \mathbbm{I}_k)^{-1}, $
    \item $V_{0} \sim N \left\{0, \bar{\Omega}_{\text {post }}-\bar{\Omega}_{\text {post }} \omega_{\text {post }} \bar{D}'\left(\bar{D} \Omega_{\text {post }} \bar{D}'+ \overline{\mathbbm{I_{J}}} \right)^{-1} \bar{D} \omega_{\text {post }} \bar{\Omega}_{\text {post }}\right\}$, and 
    \item $V_{1}$ is a zero mean $\bar{J}$-variate truncated normal with covariance matrix $S^{-1}\left(\bar{D} \Omega_{\text{post}} \bar{D}'+ \overline{\mathbbm{I_{J}}}\right) S^{-1}$ truncated below $S^{-1}(\bar{D} \xi_{\text {post}} + \bar{V})$, and is independent of $V_0$. 
\end{itemize}

\end{corollary}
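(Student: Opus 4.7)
The plan is to follow verbatim the template used for Corollary 4.3 in the binary case: take the additive stochastic representation of a unified skew-normal vector from equation (12) of Arellano-Valle and Azzalini \cite{unified_skew_normal} and substitute the SUN parameters that Theorem 5.1 has already identified for the mixed-data posterior. That representation says that if $Z \sim \operatorname{SUN}_{K,J}(\xi,\Omega,\Delta,\gamma,\Gamma)$ then $Z \stackrel{d}{=} \xi + \omega\{V_0 + \Delta\Gamma^{-1} V_{1,-\gamma}\}$, with $V_0 \sim N(0,\bar\Omega - \Delta\Gamma^{-1}\Delta')$ independent of $V_{1,-\gamma}$, the latter being an $N(0,\Gamma)$ vector truncated component-wise below $-\gamma$.

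First I would specialize to the prior $\theta_i \sim N(0,\mathbbm{I}_K)$, so $\xi = 0$ and $\Omega = \mathbbm{I}_K$. Using $\tilde{B}_{J_c}'\tilde{B}_{J_c} = B_{J_c}'\Sigma^{-1}B_{J_c}$, Theorem 5.1 collapses to $\Omega_{\text{post}} = (B_{J_c}'\Sigma^{-1}B_{J_c} + \mathbbm{I}_K)^{-1}$ and $\xi_{\text{post}} = \Omega_{\text{post}} B_{J_c}'\Sigma^{-1}B_{J_c}\hat\theta_i$, matching the leading term of the claimed representation.

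Next I would carry out the two matrix simplifications that make the $S$ factors telescope. Plugging $\Delta_{\text{post}} = \bar\Omega_{\text{post}}\omega_{\text{post}}\bar D' S^{-1}$ and $\Gamma_{\text{post}} = S^{-1}(\bar D\Omega_{\text{post}}\bar D' + \overline{\mathbbm{I_{\bar J}}})S^{-1}$ into the generic display yields
\begin{align*}
\Delta_{\text{post}}\Gamma_{\text{post}}^{-1} &= \bar\Omega_{\text{post}}\omega_{\text{post}}\bar D'\bigl(\bar D\Omega_{\text{post}}\bar D' + \overline{\mathbbm{I_{\bar J}}}\bigr)^{-1} S,\\
\Delta_{\text{post}}\Gamma_{\text{post}}^{-1}\Delta_{\text{post}}' &= \bar\Omega_{\text{post}}\omega_{\text{post}}\bar D'\bigl(\bar D\Omega_{\text{post}}\bar D' + \overline{\mathbbm{I_{\bar J}}}\bigr)^{-1}\bar D\omega_{\text{post}}\bar\Omega_{\text{post}},
\end{align*}
while the truncation vector becomes $-\gamma_{\text{post}} = -S^{-1}(\bar D\xi_{\text{post}} + \bar V)$. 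Substituting into $\xi_{\text{post}} + \omega_{\text{post}}\{V_0 + \Delta_{\text{post}}\Gamma_{\text{post}}^{-1} V_{1,-\gamma_{\text{post}}}\}$ then reproduces the stated additive form, with $V_0$ having exactly the announced Gaussian covariance and $V_1 := V_{1,-\gamma_{\text{post}}}$ the announced truncated normal.

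The main obstacle is mild and mostly bookkeeping: the $2\times 2$ blocks $-\mathbbm{1}_2\mathbbm{1}_2' + 2\mathbbm{I}_2$ of $\overline{\mathbbm{I_{J_{O_1}}}}$ are rank-deficient, which a priori threatens invertibility of $\bar D\Omega_{\text{post}}\bar D' + \overline{\mathbbm{I_{\bar J}}}$ and hence of $\Gamma_{\text{post}}$. This is already handled at the end of the proof of Theorem 5.1 (Appendix H.2), where the expanded matrix $\Omega_{\text{post}}^*$ is realized as the covariance of a Gaussian pair $(Z_1, Z_2) = (\bar D Z_2 + \epsilon, Z_2)$; positive definiteness of $\bar D\Omega_{\text{post}}\bar D' + \overline{\mathbbm{I_{\bar J}}}$ then follows as the marginal covariance of $Z_1$, so the inverse required by equation (12) of \cite{unified_skew_normal} is well-defined and the remainder of the calculation goes through verbatim.
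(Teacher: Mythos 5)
Your overall route is the same as the paper's: identify the posterior as the $\operatorname{SUN}_{K,\bar J}$ law of Theorem 5.1 and plug its parameters into the additive stochastic representation of the unified skew-normal family (the paper invokes equation 7.4 of Azzalini's monograph, you invoke the equivalent equation (12) of Arellano-Valle and Azzalini), and your explicit simplifications $\Delta_{\text{post}}\Gamma_{\text{post}}^{-1}=\bar\Omega_{\text{post}}\omega_{\text{post}}\bar D'(\bar D\Omega_{\text{post}}\bar D'+\overline{\mathbbm{I_{\bar J}}})^{-1}S$ and $\Delta_{\text{post}}\Gamma_{\text{post}}^{-1}\Delta_{\text{post}}'=\bar\Omega_{\text{post}}\omega_{\text{post}}\bar D'(\bar D\Omega_{\text{post}}\bar D'+\overline{\mathbbm{I_{\bar J}}})^{-1}\bar D\omega_{\text{post}}\bar\Omega_{\text{post}}$, together with $\xi=0$, $\Omega=\mathbbm{I}_K$ and $\xi_{\text{post}}=\Omega_{\text{post}}B_{J_c}'\Sigma^{-1}B_{J_c}\hat\theta_i$, are exactly the bookkeeping the paper leaves implicit. (Incidentally, your derivation gives truncation below $-S^{-1}(\bar D\xi_{\text{post}}+\bar V)$, which is the form consistent with Corollary 4.3; the sign in the stated corollary appears to be a typo, so this is not held against you.)

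The genuine problem is your last paragraph, which is precisely the one point where the mixed case differs from the binary case. The Gaussian-pair construction $(Z_1,Z_2)=(\bar D Z_2+\epsilon,\,Z_2)$ at the end of the proof of Theorem 5.1 shows only that $\bar D\Omega_{\text{post}}\bar D'+\overline{\mathbbm{I_{\bar J}}}$ is a covariance matrix, i.e.\ positive semidefinite; a marginal covariance of a degenerate Gaussian need not be positive definite, and here it genuinely is not. Whenever $J_{O_1}\neq\emptyset$, take $v$ supported on one ordinal block with entries $(1,1)$: the corresponding block of $\overline{\mathbbm{I_{\bar J}}}$ is $-\mathbbm{1}_2\mathbbm{1}_2'+2\mathbbm{I}_2$, so $v'\overline{\mathbbm{I_{\bar J}}}v=0$, and the corresponding rows of $\bar D$ are $B_j'$ and $-B_j'$, so $\bar D'v=0$; since both summands are positive semidefinite, $v$ lies in the null space of the sum and $\Gamma_{\text{post}}$ is singular. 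Hence the inverse appearing in the representation cannot be justified the way you claim. The paper does not claim it either: the remarks following the corollary concede that the matrix may only be positive semidefinite and propose a pseudo-inverse, or adding a small $\epsilon>0$ to the diagonal (as done in the bio-behavioral experiment), with genuine invertibility guaranteed only when the data mix continuous and binary items. To repair your write-up, either restrict the clean statement to the case $J_{O_1}=\emptyset$ or state explicitly that a generalized inverse (or $\epsilon$-regularization) replaces the inverse when some ordinal response is non-extreme.
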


\begin{proof}
     This corollary is a direct consequence of Theorem 5.1 and equation 7.4 from \cite{azzalini_2013}, which states $\operatorname{SUN}_{p, n}(\xi, \Omega, \Delta, \gamma, \Gamma)$ has a stochastic representation of 
    $$\xi + \omega (V_0 + \Delta \Gamma^{-1} V_{1, - \gamma}),$$
    where $V_0 \sim N(0, \bar \Omega - \Delta \Gamma^{-1} \Delta') \in \mathbbm R^{K}$ and $V_{1, -\gamma}$ is obtained by component-wise truncation below $-\gamma$ of a variate $N(0, \Gamma) \in R^{J}$. Plugging our prior choice and the posterior parameters in Theorem 5.1 would yield the desired result. 
\end{proof}

When the number of ordinal items is large, it is possible to render the matrix $(\bar{D} \Omega_{\text{post}} \bar{D}' + \bar{\mathbbm{I}}_{\bar{J}})$ non invertible, as both $\bar{D} \Omega_{\text{post}} \bar{D}'$ and $\bar{\mathbbm{I}}_{\bar{J}}$ can become semi positive definite matrices. If this was the case, we may have to do a pseudo inverse. Alternatively, it is easy to see adding any $\epsilon>0$ on its diagonal would yield an invertible matrix, which can serve for a good approximation. In the bio-behavioral ordinal experiment, we added a small $\epsilon$ to the diagonal to ensure invertibility. When we are just mixing continuous and binary items, the matrix $(\bar{D} \Omega_{\text{post}} \bar{D}' + \bar{\mathbbm{I}}_{\bar{J}})$ is guaranteed to be positive definite, and hence invertible. 

\subsubsection{M-Step}

 Once the latent factors are known, the M-step would become straightforward. By the same argument as illustrated in Section 4.3, the M-step can be decomposed into $J$ independent optimization problems, depending on the item types. We can apply the same M-step strategy for binary items as discussed in Section 4.3. Note the same M-step for the probability inclusion parameter $C$ as illustrated in Section 4.3 remain the same. 
 
 For continuous items, consider maximizing equation below:
 \begin{equation} \label{eqn:q1c}
    Q_j(B_j, \sigma_j^2) = - \sum_{i=1}^{N} \frac{Y_{ij}^2 + B_j' \langle \theta_i \theta_i'  \rangle B_j - 2Y_{ij} B_j' \langle \theta_i \rangle }{2 \sigma_j^2}  - \sum_{k=1}^{K} \lambda_{jk}|B_{jk}| -  \frac{n+3}{2} \log \sigma_j^2 - \frac{1}{2\sigma_j^2},
\end{equation}  
 In the E-step, we have already approximated $\langle \theta_i \rangle$ as $\frac{1}{M} \sum_{m=1}^{M} \theta_i^{(m)}$, and $\langle \theta_i \theta_i'\rangle$ as $\frac{1}{M} \sum_{m=1}^{M} \theta_i^{(m)} \theta_i^{(m)'}$. We will maximize $B_j$ and $\sigma_j^2$ iteratively. When holding $\sigma_j^2$ fixed (using the value from previous iteration), the problem of obtaining update $B_j$ is equivalent to solving a linear Lasso regression, in which the response variables $Y_j$ is replicated $M$ times with corresponding design matrix $\Theta_i$ resulting from Monte-Carlo samples. Then the $l_1$ penalty $2\sigma_j^2 \lambda_{j_k}$ is applied for each component $B_{jk}$. After obtaining $B_j$, we can find optimal $\sigma_j^2$ using first-order condition, which yields: 
    $$\sigma_{j}^2 = \frac{ [\frac{1}{M} \sum_{i=1}^{N} \sum_{m=1}^{M} Y_{ij}^2 - B_j' \theta_i^{(m)} \theta_i^{(m)'} B_j- 2Y_{ij}B_j' \theta_i^{(m)}]+1}{n+3}.$$

For ordinal items, we need to maximize  $|J_O|$ independent objective functions as follows:
\begin{align*}
 Q_j(B_j,d_j) &= \sum_{i=1}^{N} \langle \log[\sum_{l=0}^{L} \mathbbm{I} \{Y_{ij} = l\} (\Phi(B_j'\theta_i + d_{j, l }) - \Phi(B_j'\theta_i + d_{j, l-1 }))] \rangle  \\
 & \quad \quad - \sum_{k=1}^{K} |B_{jk}|(\lambda_1  \langle \gamma_{jk} \rangle  + \lambda_0(1- \langle  \gamma_{jk} \rangle ))  \\
       & \approx \frac{1}{M} \sum_{i=1}^{N} \sum_{m=1}^{M} \log[\sum_{l=0}^{L} \mathbbm{I} \{Y_{ij} = l\} (\Phi(B_j'\theta_i^{(m)} + d_{j, l }) - \Phi(B_j'\theta_i^{(m)} + d_{j, l-1 }))]  \\
       & \qquad - \sum_{k=1}^{K} |B_{jk}|(\lambda_1  \langle \gamma_{jk} \rangle  + \lambda_0(1- \langle  \gamma_{jk} \rangle )). 
\end{align*}

Note this is equivalent to estimating a penalized ordered regression with a probit link, with distinct $l_1$-penalty $(\lambda_1  \langle \gamma_{jk} \rangle  + \lambda_0(1- \langle  \gamma_{jk} \rangle )$ applying to each element of the loading matrix. This penalized ordered regression again can be solved by standard off the shelf package.


\end{document}